\newcommand{\SarielComp}[1]{}
\newcommand{\NotSarielComp}[1]{#1}%
\newcommand{\SarielComp}[1]{#1}%
\newcommand{\NotSarielComp}[1]{}%
\newcommand{\IfPrinterVer}[2]{#2}%
\newcommand{\UsePackage}[1]{%
  \IfFileExists{../styles/#1.sty}{%
      \usepackage{../styles/#1}%
   }{%
      \IfFileExists{./styles/#1.sty}{%
         \usepackage{styles/#1}%
      }{%
         \usepackage{#1}%
      }%
   }%
}
\providecommand{\BibLatexMode}[1]{}
\providecommand{\BibTexMode}[1]{#1}
  \renewcommand{\BibLatexMode}[1]{}
  \renewcommand{\BibTexMode}[1]{#1}
  \renewcommand{\BibLatexMode}[1]{#1}
  \renewcommand{\BibTexMode}[1]{}
\definecolor{blue25}{rgb}{0,0,0.7}
\providecommand{\emphic}[2]{%
   \textcolor{blue25}{%
      \textbf{\emph{#1}}}%
   \index{#2}}
\providecommand{\emphi}[1]{\emphic{#1}{#1}}
\theoremstyle{plain}%
\newtheorem{theorem}{Theorem}[section]
\newtheorem{lemma}[theorem]{Lemma}
\newtheorem{corollary}[theorem]{Corollary}
\newtheorem{observation}[theorem]{Observation}
\theoremstyle{plain}%
\newtheorem*{remark:unnumbered}[theorem]{Remark}%
\newtheorem*{remarks}[theorem]{Remarks}%
\newtheorem{remark}[theorem]{Remark}%
\newtheorem{defn}[theorem]{Definition}
\newtheorem{problem}[theorem]{Problem}
\newcommand{\myqedsymbol}{\rule{2mm}{2mm}}
\theoremstyle{nonumberplain}%
\newtheorem{proof}{Proof:}%
\newcommand{\atgen}{\symbol{'100}}
\newcommand{\SarielThanks}[1]{\thanks{Department of Computer Science;
      University of Illinois; 201 N. Goodwin Avenue; Urbana, IL,
      61801, USA; {\tt sariel\atgen{}illinois.edu}; {\tt
         \url{http://sarielhp.org/}.} #1}}
\newcommand{\MitchellThanks}[1]{%
   \thanks{%
      Department of Computer Science;
      University of Illinois; 201 N. Goodwin Avenue; Urbana, IL,
      61801, USA; {\tt mfjones2\atgen{}illinois.edu}; {\tt
         \url{http://mfjones2.web.engr.illinois.edu/}.} #1}}
\numberwithin{figure}{section}%
\numberwithin{table}{section}%
\numberwithin{equation}{section}%
\newcommand{\HLink}[2]{\hyperref[#2]{#1~\ref*{#2}}}
\newcommand{\HLinkSuffix}[3]{\hyperref[#2]{#1\ref*{#2}{#3}}}
\newcommand{\figlab}[1]{\label{fig:#1}}
\newcommand{\figref}[1]{\HLink{Figure}{fig:#1}}
\newcommand{\thmlab}[1]{{\label{theo:#1}}}
\newcommand{\thmref}[1]{\HLink{Theorem}{theo:#1}}
\newcommand{\corlab}[1]{\label{cor:#1}}
\newcommand{\corref}[1]{\HLink{Corollary}{cor:#1}}%
\newcommand{\seclab}[1]{\label{sec:#1}}
\newcommand{\secref}[1]{\HLink{Section}{sec:#1}}
\newcommand{\remlab}[1]{\label{rem:#1}}
\newcommand{\remref}[1]{\HLink{Remark}{rem:#1}}%
\newcommand{\problab}[1]{\label{prob:#1}}
\newcommand{\probref}[1]{Problem~\ref{prob:#1}}
\providecommand{\deflab}[1]{\label{def:#1}}
\newcommand{\defref}[1]{\HLink{Definition}{def:#1}}
\newcommand{\lemlab}[1]{\label{lemma:#1}}
\newcommand{\lemref}[1]{\HLink{Lemma}{lemma:#1}}%
\providecommand{\eqlab}[1]{}%
\renewcommand{\eqlab}[1]{\label{equation:#1}}
\newcommand{\Eqref}[1]{\HLinkSuffix{Eq.~(}{equation:#1}{)}}
\newcommand{\remove}[1]{}%
\newcommand{\Set}[2]{\left\{ #1 \;\middle\vert\; #2 \right\}}
\newcommand{\pth}[2][\!]{\mleft({#2}\mright)}%
\newcommand{\pbrcx}[1]{\left[ {#1} \right]}%
\newcommand{\Prob}[1]{\mathop{\mathbf{Pr}}\!\pbrcx{#1}}
\newcommand{\Ex}[2][\!]{\mathop{\mathbf{E}}#1\pbrcx{#2}}
\newcommand{\ceil}[1]{\left\lceil {#1} \right\rceil}
\newcommand{\floor}[1]{\left\lfloor {#1} \right\rfloor}
\newcommand{\brc}[1]{\left\{ {#1} \right\}}
\newcommand{\cardin}[1]{\left| {#1} \right|}%
\renewcommand{\th}{th\xspace}
\newcommand{\ds}{\displaystyle}%
\newcommand{\tldO}{\scalerel*{\widetilde{O}}{j^2}}%
\newcommand{\tldTheta}{\scalerel*{\widetilde{\Theta}}{j^2}}%
\renewcommand{\Re}{\mathbb{R}}%
\providecommand{\ComplexityClass}[1]{{{\textcolor[named]{OliveGreen}{%
            \textsc{#1}}}}}
\providecommand{\NPComplete}{\ComplexityClass{NP-Complete}%
   \index{NP!complete}\xspace}
\newcommand{\lineY}[2]{\Mh{\mathrm{line}}\pth{#1, #2}}%
\newcommand{\LSegY}[2]{\Mh{L}_{#1#2}}%
\newcommand{\SSets}{\Mh{\mathcal{L}}}%
\newcommand{\CLSet}{\Mh{\mathcal{C}}}%
\newcommand{\SetSys}{\Mh{\EuScript{F}}}%
\newcommand{\Kyncl}{Kyn{\v{c}}l\xspace}
\renewcommand{\th}{th\xspace}
\newcommand{\Arr}{\Mh{\mathop{\mathrm{\EuScript{A}}}}}%
\newcommand{\ArrX}[1]{\Arr\pth{#1}}%
\newcommand{\WC}{\Mh{W}}%
\newcommand{\WX}[1]{\Mh{\omega}\pth{#1}}%
\newcommand{\hitX}[1]{\Mh{h}\pth{#1}}%
\newcommand{\dualX}[1]{#1^{\star}}%
\newcommand{\PSet}{\Mh{P}}%
\newcommand{\PSetA}{\Mh{X}}%
\newcommand{\PSetB}{\Mh{{Y}}}%
\newcommand{\dualP}{\Mh{\dualX{P}}}%
\newcommand{\seg}{\Mh{s}}%
\newcommand{\DW}{\Mh{D}}%
\newcommand{\DWSet}{\Mh{\EuScript{D}}}%
\newcommand{\cell}{\Mh{\psi}}%
\newcommand{\vA}{\Mh{v}}%
\newcommand{\vB}{\Mh{u}}%
\newcommand{\pA}{\Mh{p}}%
\newcommand{\pB}{\Mh{q}}%
\newcommand{\Line}{\Mh{\ell}}
\newcommand{\LSet}{\Mh{L}}%
\newcommand{\Sample}{\Mh{R}}
\newcommand{\depthX}[1]{\Mh{\mathrm{d}}\pth{#1}}%
\newcommand{\nVX}[1]{\!\Mh{\mathrm{\#}}\pth{#1}}%
\newcommand{\massX}[1]{\Mh{\mathrm{m}}\pth{#1}}%
\newcommand{\mX}[1]{\Mh{\mathrm{m}}\pth{#1}}
\providecommand{\Mh}[1]{#1}%
\newcommand{\LSetA}{\Mh{\mathcal{L}}}%
\newcommand{\Cell}{\Mh{\Box}}%
\newcommand{\sgnX}[1]{\mathrm{sgn}\pth{#1}}%
\newcommand{\pa}{\Mh{x}}%
\newcommand{\pb}{\Mh{y}}%
\newcommand{\SolOpt}{\Mh{L_\mathrm{opt}}}
\newcommand{\opt}{\Mh{\upsigma}}
\newcommand{\eps}{{\varepsilon}}
\newcommand{\Grid}{\Mh{G}}%
\newcommand{\nbins}{\Mh{\mathsf{m}}}%
\newcommand{\nballs}{\Mh{\mathsf{n}}}%
\newcommand{\xx}{\mathsf{x}}%
\newcommand{\yy}{\mathsf{y}}%
\providecommand{\Matousek}{Matou{\v s}ek\xspace}
\newcommand{\bd}{\partial}
\newcommand{\labelX}[1]{\Mh{\mathrm{\Mh{id}}}\pth{#1}}%
\newcommand{\indexX}[1]{\Mh{\mathrm{\Mh{sep}}}\pth{#1}}%
\newcommand{\BST}{\textsf{{BST}}\xspace}
\newcommand{\etal}{\textit{et~al.}\xspace}
\newcommand{\Calinescu}{C{\u{a}}linescu\xspace}
\newcommand{\simplex}{{\Delta}}
\newcommand{\face}{f}
\newcommand{\RSample}{\Mh{\Sample}}%
\newcommand{\TSet}{\Mh{T}}%
\newcommand{\medianC}{\Mh{\nu}}%
\newcommand{\medianX}[1]{\medianC\pth{#1}}%
\newcommand{\separability}{separability\xspace}
\providecommand{\TPDF}[2]{\texorpdfstring{#1}{#2}}
\newcommand{\NHX}[1]{\Mh{B}_{\Mh{\geq} #1}}
\begin{document}

\title{On Separating Points by Lines}

\author{Sariel Har-Peled%
   \SarielThanks{Work on this paper was partially supported by a NSF
      AF awards CCF-1421231, and 
      CCF-1217462.  
   }%
   \and%
   Mitchell Jones%
   \MitchellThanks{}%
}

\date{\today}

\maketitle

\begin{abstract}
    Given a set $\PSet$ of $n$ points in the plane, its
    \emph{\separability} is the minimum number of lines needed to
    separate all its pairs of points from each other.  We show that
    the minimum number of lines needed to separate $n$ points, picked
    randomly (and uniformly) in the unit square, is
    $\Bigl.\tldTheta( n^{2/3})$, where $\tldTheta$ hides
    polylogarithmic factors.
    In addition, we provide a fast approximation algorithm for
    computing the separability of a given point set in the
    plane. Finally, we point out the connection between separability
    and partitions.
\end{abstract}


\section{Introduction}

For a set $\PSet$ of $n$ points in $\Re^2$, a set $\LSet$ of lines
\emphi{separates} $\PSet$, if for any pair of points of
$\pa, \pb \in \PSet$, there is a line in $\LSet$ that intersects the
interior of the segment $\pa \pb$ (which also does not contain $\pa$
or $\pb$).  The \emphi{\separability} of $\PSet$, denoted by
$S_n = \indexX{\PSet}$, is the size of the smallest set of lines that
separates $\PSet$. The \separability of a point set captures how
grid-like the point set is. In particular, the \separability of the
$\sqrt{n}\times\sqrt{n}$ grid is $2\sqrt{n}$, while for $n$ points in
convex position the \separability is $n/2$.

In this paper, we systematically investigate the \separability of a
point set -- both what it implies for a point set to have low
\separability, how to compute/approximate it efficiently, and what is
the value of the \separability in several natural cases.

\paragraph*{Grid vs. random points.}
There is a striking similarity between the behavior of random point
sets and uniform grid point sets. For example, the convex-hull of a
set of $n$ random points inside a triangle have $O( \log n)$ vertices
in expectation, and the same bound holds for the convex-hull of
$\sqrt{n}\times\sqrt{n}$ grid points when clipped to a triangle.
There are many other examples of this surprising similarity in
behavior (see \cite{h-ecrch-11} and references therein).  Another
striking example of this similarity is in the number of layers of the
convex hull -- it is $O(n^{2/3})$ for $n$ random points
\cite{d-co-04}, and the same bound holds for a grid of $n$ points
\cite{hl-pg-13}.

\paragraph*{Previous work.}

Freimer \etal \cite{fmp-csua-91} showed that computing the minimum
\separability of a given point set is \NPComplete, and studied an
extension of the problem to polygons in the plane.  Nandy \etal
\cite{nah-sso2-02} studied the problem of separating segments.
\Calinescu \etal \cite{cdkw-spapl-05} gave a two approximation when
restricting the problem to separation via axis-parallel lines.  Other
work on this and related problems includes \cite{dhms-sspsp-01}.

\paragraph*{Motivation.}

Separating and breaking point sets, usually into clusters, is a
fundamental task in computer science, needed for divide and conquer
algorithms. It is thus natural to ask what can be done if restricted
to lines, and one do the partition in a global fashion (i.e., if the
partition is done locally only to the current subproblem, this results
in a \emph{binary space partition} (B{S}P)). Specifically, we have the
following connections:

\smallskip%
\begin{compactenum}[(A)]
    \item \emph{Geometric hitting set.}  The \separability problem
    reduces to geometric hitting set problem. In recent years there
    was a lot of work on speeding up approximation algorithms for such
    problems, and it is a natural question to ask what can be done in
    this specific case. See \cite{ap-nlagh-14, aes-nlaag-12} and
    references therein.

    \smallskip%
    \item \emph{Polynomial partition.} %
    For divide and conquer algorithms for lines, the classical tool to
    use is cuttings \cite{cf-dvrsi-90}, and for points there are
    partitions \cite{m-ept-92}. More recently, the polynomial
    ham-sandwich theorem was used to partition point sets -- see
    \cite{ams-rsss2-13} and references there in for some recent
    work. This yields partitions that have stronger properties than
    the partitions of \Matousek \cite{m-ept-92} in some cases, but are
    (in many cases) algorithmically less convenient to use. It is thus
    natural to ask what is the limit of what can be done with
    lines/planes/hyperplanes.

    \smallskip%
    \item \emph{Extracting features.} Recently, there was increased
    interest in \emph{autoencoders} in machine learning -- here, one
    is interested in find a representation of the data of a set of
    features, where the number of features is significantly smaller
    than the ambient dimension. Thus, the separately problem can be
    interpreted as finding a minimum number of linear features, such
    that all the data points are distinguishable. The problem is
    usually of interest in higher dimensions, but even in constant
    dimension it is already challenging.
\end{compactenum}

\subsection{Our results}

\subsubsection{Low \separability implies partitions.}
We point out that if a point set has optimal \separability in two and
three dimensions, then one can easily construct partitions with almost
optimal parameters. Specifically, if a point set $\PSet$ in $d=2$ or
$d=3$ has \separability $O(n^{1/d})$, then it can be broken into
$O(r)$ sets, each of size $\leq n/r$, such that (for $d=2$) any line
intersects roughly $O(\sqrt{r})$ triangles containing these point
sets. In three dimensions, the guarantee is that any plane intersects
(roughly) $O(r^{2/3})$ simplices that contains these sets.
Surprisingly, in the three dimensions, any line intersects (roughly)
$O(r^{1/3})$ such simplices, and it is not known how to construction
partitions in three dimensions that have this property in the general
case (when using only planes -- the polynomial method yields
partitions that have this property).

\subsubsection{Separability of a random point set}

Let $\PSet$ be a set of $n$ points picked uniformly at random from the
unit square $[0,1]^2$.  Note, that $S_n$ is a random variable, and we
are interested in understanding its behavior.  A priori, since random
points in a unit square looks like grid points, and behave in many
cases the same way, one would expect that
$\Ex{S_n} = \Theta(\sqrt{n})$. However, this is not the situation
here. In particular, we show that
\begin{math}
    \Ex{S_n} = O(n^{2/3}),
\end{math}
and surprisingly,
\begin{math}
    \Bigl.  S_n = \Omega(n^{2/3} \log \log n/ \log n),
\end{math}
with high probability. For $d \geq 2$, the bounds become
\begin{math}
    \Ex{S_{n,d}} = O(n^{2/(d+1)})
\end{math}
and
\begin{math}
    S_{n,d} = \Omega(n^{2/(d+1)} \log \log n/ \log n),
\end{math}
respectively, where the $\Omega$ and $O$ notations hides constants
that depends on $d$.

\paragraph*{What is going on?}
Consider the closest pair of points in $\PSet$ -- the distance between
this pair of points is in expectation roughly $1/n$. Indeed, there are
$\binom{n}{2}$ pairs of points, and the probability of a specific pair
of them to be in distance $\leq 1/n$ from each other is $\pi /n^2$
(ignoring boring and minor boundary issues). As such, the expected
number of pairs to be in distance $\leq 1/n$ from each other, by
linearity of expectation, is $\binom{n}{2} \pi/n^2 \geq 1$. Of course,
the closest pair distance in the grid
$\Set{(i/\sqrt{n}, j/\sqrt{n})}{1 \leq i, j \leq \sqrt{n}}$ is
$1/\sqrt{n}$ -- thus, there is a dichotomy between the random and grid
cases here.

It turns out that the situation is similar in separating random points
by lines -- there are, in expectation, roughly $n^{2/3}$ pairs of
points in $\PSet$ that are in distance $\leq 1/n^{2/3}$ from each
other. Namely, there are many pairs of close points in $\PSet$, and a
line can separate only few of these pairs (this of course requires a
proof). Thus, implying the lower bound. The upper bound follows
readily by using a grid with cells with diameter $1/n^{2/3}$, and then
separating every bad pair on its own.

\paragraph*{What is not going on.}

It is natural to think that maybe there is a convex subset of $\PSet$
of size $\Theta(n^{2/3})$. Since separating $k$ points in convex
position requires $k/2$ lines, this would readily implies the lower
bound. However, it is known \cite{ab-lcc-09} that, with high
probability, the size of the convex subset of $n$ random points is
$\Theta(n^{1/3})$.

Similarly, one might try to blame the number of convex layers, which
is indeed $\Theta(n^{2/3})$ for random points \cite{d-co-04}. The
similarity in the bounds seems to be a coincidence, since it is easy
to construct examples of $n$ points with $\Omega(n)$ convex layers,
that can be separated with $O(\sqrt{n})$ lines.

\paragraph*{Sketch of the proof of the lower bound.}
While the upper bound is easy, the lower bound is harder and requires
some work:
\begin{compactenum}[\quad(A)]
    \item We setup the problem as a balls into bins problem, by
    dividing the unit square into a $n^{2/3} \times n^{2/3}$ grid.  By
    revisiting balls and bins, and using Talagrand's inequality, we
    prove that the expected number of grid cells containing exactly
    two points is $\Theta(n^{2/3})$ (see \corref{our:case}), and this
    random variable is strongly concentrated around its expectation,
    with high probability (the high probability interval is of width
    $O(n^{1/3} \log^{1/2} n)$). While these results are not difficult
    if one knows the machinery, surprisingly, we were unable to find a
    reference to them in the literature.

    \item We prove a high-probability counterpart to the (famous)
    birthday paradox -- while throwing $O(n^{1/3})$ balls into
    $O(n^{2/3})$ bins, one would expect a constant number of
    collisions.  \lemref{line:is:useless} shows that this number is
    $O( \log n / \log \log n)$ with high probability.  This implies
    that, with high probability, a line can intersects at most
    $O)( \log n/ \log \log n)$ cells that contains two balls or more.

    \item We then argue that there are only $O(n^{3})$ combinatorially
    different lines as far as the grid is concerned. Combining (A) and
    (B) above then readily implies the result -- see
    \thmref{lines:sep:r:points}.
\end{compactenum}

\subsubsection{Approximating the separability}
For a given set $\PSet$ of $n$ points in the plane, we present an
output-sensitive reweighting algorithm for approximating the
separability, with running time the depends on the size of the optimal
solution. The improved running time follows by implicitly storing the
set of $\approx n^2$ candidate separating lines the solution can
use. This requires using duality, and range searching data-structures
to implicitly maintain the set of separating lines, and their
weights. For a given set of $n$ points in the plane, the resulting
algorithm computes a separating set of size $O( \opt \log \opt)$, in
time
\begin{math}
    O\pth{ n^{2/3} \opt^{5/3} \log^{O(1)} n},
\end{math}
where $\opt$ is the separability of the given point set, see
\thmref{faster}. Even for the worst case scenario, where
$\opt = \Theta(n)$, the running time is $\tldO(n^{7/3})$, which is a
significant speedup over the ``naive'' algorithm, which runs in
$\tldO(n^3)$ time.

\paragraph*{Paper organization.}

We define the problem formally in \secref{problem:def}, and show how
low separability implies partitions in two and three dimensions in
\secref{partition:from:sep}.  The result on separating lines for
random points is presented in \secref{sep:l:random:points}.
\secref{approx:min:lines} presents the approximation algorithm.


\section{Problem definition and an application}
\seclab{problem:def}

\begin{defn}
    \deflab{separation}%
    A set of lines $\LSet$ \emphi{separates} a set of points $\PSet$,
    if for every pair $\pA, \pB \in \PSet$, we have that $\pA$ and
    $\pB$ are on different sides of some $\Line \in \LSet$.
\end{defn}


\begin{defn}
    For a set $\PSet$ of $n$ points in the plane, its
    \emphi{separability}, denoted by $\indexX{\PSet}$, is the size of
    the smallest set of lines that separates $\PSet$.
\end{defn}
\begin{remarks}
    \begin{inparaenum}[(A)]
        \item The above definition extends naturally to higher
        dimensions, where the separation is done by planes and
        hyperplanes, in three and higher dimensions, respectively.

        \item Assuming no three points are colinear, one might relax
        the definition, and allow points to be on the separating
        lines.  Given such a separating set of lines $\LSet$ of size
        $m$, one can generate a set of lines of size at most $3m$ that
        properly separates all the pairs of points. Indeed, for each
        line $\Line$, replace it by two lines that are parallel copies
        close to it. In addition, add an arbitrary line that properly
        separates the at most two points that might be on $\Line$ (by
        the general position assumption, no line can contain three
        points of $\PSet$).

        \item For a point $\pa \in \PSet$, and a separating set of
        lines $\LSet$, there is a unique facet of the arrangement
        $\ArrX{\LSet}$ that the only point of $\PSet$ it contains is
        $\pa$. Since an arrangement of $m$ hyperplanes in $\Re^d$ has
        $O(m^d)$ faces of all dimensions\footnote{The constant depends
           on $d$.}, it follows that
        $\indexX{\PSet} = \Omega(n^{1/d})$.

        \item For the grid point set
        $\PSet \equiv n^{1/d} \times \cdots \times n^{1/d}$ we have
        that the index is $\leq d n^{1/d}$ -- indeed, use the natural
        axis-parallel hyperplanes separating layers of the grid.

        \item Consider a set $\PSet$ of $n$ points spread on a
        strictly convex curve $\gamma$ in $\Re^d$ (i.e., $\gamma$ is a
        convex curve that lies in some two dimensional plane). Any
        hyperplane intersects $\gamma$ in two points. It follows, that
        to separate the $n$ points, we need $n-1$ break points along
        the curve. It does follows that $\indexX{\PSet} \geq (n-1)/2$
        in this case.
    \end{inparaenum}
\end{remarks}

\paragraph*{An upper bound.}

The following is an easy consequence of the results of Steiger and
Zhao \cite{sz-ghsc-10} (and is probably implied by earlier work).

\begin{corollary}
    \corlab{splitty}%
    Let $\PSetA, \PSetB$ be two points sets in the plane that are
    separated by a line, and furthermore, there are no three colinear
    points in $\PSetA \cup \PSetB\!$. Then, for any choice of integers
    $x,y$, $1 \leq x < \cardin{\PSetA}$, $1 \leq y < \cardin{\PSetB}$
    there exists a line $\Line$ such that:
    \begin{compactenum}[\quad(a)]
        \item $\Line$ does not contain any point of
        $\PSetA \cup \PSetB$,
        \item $\Line$ splits $\PSetA$ into two sets of size $x$ and
        $\cardin{\PSetA} - x$, respectively, and
        \item $\Line$ splits $\PSetB$ into two sets of size $y$ and
        $\cardin{\PSetB} - y$, respectively.
    \end{compactenum}
\end{corollary}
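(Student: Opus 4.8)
The shortest route is to appeal to the generalized ham-sandwich theorem of Steiger and Zhao \cite{sz-ghsc-10}: in the plane, two point sets that are separated by a line and in general position admit, for any prescribed split sizes, a single line realizing both splits. The hypotheses of \corref{splitty} supply exactly the required separation and general-position assumptions, and the strict inequalities $1 \le x < \cardin{\PSetA}$ and $1 \le y < \cardin{\PSetB}$ give the slack to perturb such a line off all input points. So the proof is essentially one citation; the rest of this sketch records the mechanism behind it, a short rotation argument worth including for self-containedness.

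I would first rotate the plane so that the separating line is horizontal, with $\PSetA$ below it and $\PSetB$ above, orient $\phi_0$ along the separating line so that $\PSetA$ lies to its left, and pick the rotation generically so that neither $\phi_0$ nor $\phi_0 + \pi$ is the direction of a line spanned by two points of $\PSetA \cup \PSetB$. For a direction $\phi$, let $T(\phi) \subseteq \{0, 1, \dots, \cardin{\PSetB}\}$ be the set of integers $k$ for which some directed line of direction $\phi$ has exactly $x$ points of $\PSetA$ and exactly $k$ points of $\PSetB$ strictly to its left. Sliding such a line through the gap between the $x$-th and $(x+1)$-th point of $\PSetA$ (sorted along the direction orthogonal to $\phi$) shows that $T(\phi)$ is a contiguous block of integers, nonempty whenever no two points of $\PSetA$ are tied in that order --- in particular for $\phi = \phi_0$.

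The plan is then a discrete intermediate-value argument over $\phi \in [\phi_0, \phi_0 + \pi]$. At $\phi_0$, a line parallel to the separating line positioned between the $x$-th and $(x+1)$-th point of $\PSetA$ has exactly $x$ points of $\PSetA$ and no point of $\PSetB$ to its left, so $0 \in T(\phi_0)$; at $\phi_0 + \pi$, the parallel line positioned between the $(\cardin{\PSetA}-x)$-th and $(\cardin{\PSetA}-x+1)$-th point of $\PSetA$ has $x$ points of $\PSetA$ and all of $\PSetB$ to its left, so $\cardin{\PSetB} \in T(\phi_0 + \pi)$. As $\phi$ rotates from $\phi_0$ to $\phi_0 + \pi$, the combinatorial type changes finitely often, each event being a single transposition in the $\phi$-perpendicular order (of two points of $\PSetA$, of two of $\PSetB$, or of one of each): no two transpositions coincide, and no point of $\PSetB$ ever lies on a line spanned by two points of $\PSetA$, precisely because no three points of $\PSetA \cup \PSetB$ are collinear. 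One checks that across each event $T(\phi)$ passes to an overlapping or abutting block, so $\bigcup_\phi T(\phi)$ is itself contiguous; since it contains $0$ and $\cardin{\PSetB}$ it equals $\{0, \dots, \cardin{\PSetB}\}$, so $y \in T(\phi^\ast)$ for some $\phi^\ast$, which --- since the family varies without jumps --- may be taken generic. A generic position inside the corresponding slab gives the line $\Line$, disjoint from $\PSetA \cup \PSetB$, splitting $\PSetA$ into $x$ and $\cardin{\PSetA} - x$ and $\PSetB$ into $y$ and $\cardin{\PSetB} - y$ --- that is, (a), (b) and (c).

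The one genuinely delicate point is showing that no \emph{gap} can open in the chain $\{T(\phi)\}_\phi$, i.e., that a single transposition can never carry the block of achievable $\PSetB$-counts strictly past a value without some $\phi$ attaining it. This comes down to a short case analysis by event type, in which the no-three-collinear hypothesis does the real work: it forces each event to be a tame adjacent transposition, with the $\PSetB$-count changing by at most one across the critical $\PSetA$-gap. Should that bookkeeping be deemed not worth the space, the statement is simply quoted from \cite{sz-ghsc-10}.
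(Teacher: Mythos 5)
Your proposal is correct and matches the paper, which likewise derives \corref{splitty} directly from the generalized ham-sandwich result of Steiger and Zhao \cite{sz-ghsc-10} and offers no further argument. The supplementary rotating-direction sketch you include is a reasonable (if not fully detailed) self-contained justification, but it is extra material beyond what the paper does.
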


\begin{lemma}
    Let $\PSet$ be a set of points in $\Re^d$ so that no three of them
    are on a common line. Then, $\indexX{\PSet} \leq \ceil{n/2}$.
\end{lemma}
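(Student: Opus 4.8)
The plan is to reduce to the plane, prove a sharper statement about \emph{pairs} of point sets by induction, and feed that induction \corref{splitty} at every step.

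First I would dispose of dimensions $d>2$ by projection. Pick a generic $2$-dimensional linear subspace $\pi\subseteq\Re^d$; orthogonal projection onto $\pi$ is faithful for the ``which side'' relation, since a line $\ell=\{z\in\pi:\langle z,v\rangle=c\}$ lifts to the hyperplane $\{x\in\Re^d:\langle x,v\rangle=c\}$, and a point of $\Re^d$ lies on a given side of this hyperplane exactly when its projection lies on the corresponding side of $\ell$. Hence a family of lines separating the projected copy of $\PSet$ lifts to an equinumerous family of hyperplanes separating $\PSet$. For a generic $\pi$ the projected points remain in general position: the projections of three non-collinear points become collinear only when $\pi^{\perp}$ meets the two-dimensional direction space of the flat they span, which rules out only a proper subvariety of the Grassmannian for each of the $\binom{n}{3}$ triples. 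So it suffices to treat $d=2$.

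Next I would establish the following auxiliary claim. \emph{If $\PSetA$ and $\PSetB$ are point sets in the plane with $\PSetA\cup\PSetB$ in general position and separated by a single line, and $\max(\cardin{\PSetA},\cardin{\PSetB})\le m$, then $m-1$ lines suffice to separate every pair inside $\PSetA$ and every pair inside $\PSetB$} (cross pairs need not be separated). The proof is strong induction on $m$. If $m\le1$, or one of the two sets has at most one point, then at most one set needs internal separation, and a set of $k$ points needs at most $k-1\le m-1$ lines (none if $k\le1$) -- for instance $k-1$ parallel lines slipped between consecutive points in a generic direction. Otherwise $\cardin{\PSetA},\cardin{\PSetB}\ge2$, and \corref{splitty} applied with $x=\ceil{\cardin{\PSetA}/2}$ and $y=\ceil{\cardin{\PSetB}/2}$ yields a line $\Line$ through no point that splits $\PSetA$ into parts $\PSetA_1,\PSetA_2$ of sizes $\ceil{\cardin{\PSetA}/2},\floor{\cardin{\PSetA}/2}$ and $\PSetB$ into $\PSetB_1,\PSetB_2$ of sizes $\ceil{\cardin{\PSetB}/2},\floor{\cardin{\PSetB}/2}$. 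We keep $\Line$: it already separates $\PSetA_1$ from $\PSetA_2$ and $\PSetB_1$ from $\PSetB_2$. Since the original separator keeps all of $\PSetA$ on one side and all of $\PSetB$ on the other, each of the pairs $(\PSetA_1,\PSetB_1)$ and $(\PSetA_2,\PSetB_2)$ is again separated by a line; the first has both parts of size $\le\ceil{m/2}$ and the second of size $\le\floor{m/2}$, so induction supplies $\ceil{m/2}-1$ and $\floor{m/2}-1$ further lines, respectively. Altogether $1+(\ceil{m/2}-1)+(\floor{m/2}-1)=m-1$ lines, and they separate every pair inside $\PSetA_1,\PSetA_2,\PSetB_1,\PSetB_2$, together with the two ``halves of a set'' cross families -- hence every pair inside $\PSetA$ and inside $\PSetB$.

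Finally, for $\PSet$ in the plane with $n\ge2$ points in general position, sweep a line in a generic direction to obtain a single line, through no point of $\PSet$, splitting $\PSet$ into $\PSetA$ of size $\ceil{n/2}$ and $\PSetB$ of size $\floor{n/2}$; this line separates all $\PSetA$--$\PSetB$ pairs, and the auxiliary claim with $m=\ceil{n/2}$ supplies $\ceil{n/2}-1$ more lines handling all pairs inside $\PSetA$ and inside $\PSetB$. That is $\ceil{n/2}$ lines in all (the cases $n\le1$ being trivial), so $\indexX{\PSet}\le\ceil{n/2}$. The only thing to watch is bookkeeping: that the hypotheses of \corref{splitty} persist through the recursion -- general position is inherited by subsets, and line-separability of $(\PSetA_i,\PSetB_j)$ is inherited from the fixed separator of $\PSetA$ and $\PSetB$ -- and that the identity $\ceil{m/2}+\floor{m/2}=m$ makes the count come out to exactly $m-1$, hence to exactly $\ceil{n/2}$ at the top level rather than one more. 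I do not anticipate a genuine obstacle beyond not letting the induction leak an extra line.
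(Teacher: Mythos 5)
Your proof is correct. It shares the paper's skeleton -- reduce to the plane by a generic projection, then drive everything with \corref{splitty} -- but the combinatorial scheme is genuinely different. The paper proceeds by iterative peeling: after one initial vertical bisection it repeatedly invokes \corref{splitty} to split off exactly two points from each of the two current halves, spends a second line to separate those two pairs, and closes with a small case analysis on $n \bmod 4$ to verify that the total is $\ceil{n/2}$. You instead prove a clean recursive invariant -- two line-separated sets of size at most $m$ can be internally separated with $m-1$ lines -- by simultaneously bisecting both sets with a single application of \corref{splitty} and recursing on the (big, big) and (small, small) parts, so the count $1+(\ceil{m/2}-1)+(\floor{m/2}-1)=m-1$ falls out of the recursion with no case analysis; one top-level halving line then gives $\ceil{n/2}$ total. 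Your bookkeeping is sound: the recursion only needs line-separability of each $\PSetA$-part from its paired $\PSetB$-part, which the fixed original separator supplies regardless of which side of the bisecting line the parts land on, and general position passes to subsets; the base case where one set has at most one point is handled by $k-1$ parallel lines. The one cosmetic blemish is the degenerate $m=0$ reading of the auxiliary claim ($-1$ lines), which never arises in the recursion or at the top level and is harmless. Your route buys a tidier induction and avoids the mod-$4$ cases; the paper's peeling is more explicitly algorithmic but otherwise yields the same bound.
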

\begin{proof}
    If $d > 2$, we project $\PSet$ into a randomly rotated two
    dimensional plane. Almost surely no three points in the projected
    point sets are colinear. In particular, a partition of the
    projected points by $m$ lines, can be lifted back, in the natural
    way, to a set of $m$ hyperplanes separating the point set. As
    such, from this point on, we assume the points of $\PSet$ are in
    the plane.

    The splitting algorithm works as follows.  Split $\PSet$ into two
    sets $\PSet_L$ and $\PSet_R$ of sizes $\ceil{n/2}$ and
    $\floor{n/2}$, respectively, by a vertical line.  In the $i$\th
    iteration of the algorithm, if $\cardin{\PSet_R} \geq 3$, then by
    \corref{splitty}, there exists a line $\Line_i$ that splits
    $\PSet_L$ and $\PSet_R$ each into two sets, such that $\PSet_R$
    (resp. $\PSet_L$) gets split into one set with two points, and
    another set with $\cardin{\PSet_R}-2$ (resp.
    $\cardin{\PSet_L}-2$) points. We remove these four points from
    $\PSet_R$ and $\PSet_L$, and split these two pairs of points by
    another line $\Line_i'$,

    Note, that this algorithm preserves the invariant that
    $\cardin{\PSet_L} \geq \cardin{\PSet_R}$ (and these sizes differ
    by at most one). If after the last iteration we are left with
    $\PSet_L$ ad $\PSet_R$ having sizes $3$ and $2$ respectively, then
    we split the set with three elements into a set with $2$ and a
    single element, and then split the two pairs by a single line.
    The case that $\PSet_L$ ad $\PSet_R$ are both size $2$ can be
    handled by a single splitting line, as is the case that $\PSet_L$
    has two points, and $\PSet_R$ is a singleton.

    The number of cutting lines used is $\ceil{n/2}$ as an easy case
    analysis based on the value of $n \bmod 4$ shows.
\end{proof}


\subsection{Application: Partition via \separability %
   in two and three %
   dimensions}
\seclab{partition:from:sep}%

\newcommand{\Simplex}{\triangle}%

\begin{defn}
    \deflab{partitions}%
    For a set $\PSet$ of $n$ points in $\Re^d$, and a parameter
    $r > 0$, an \emphi{$r$-partition} \cite{m-ept-92}, is a partition
    of $\PSet$ into $t=O(r)$ disjoint sets $\PSet_1, \ldots, \PSet_t$,
    with associated simplices $\Simplex_1, \ldots, \Simplex_t$, such
    that:
    \begin{compactenum}[\qquad(i)]
        \item $\forall i$: $\PSet_i \subseteq \Simplex_i$,
        \item $\forall i$: $\cardin{\PSet_i} \leq n/r$,
        \item any hyperplanes $h$ intersects $f(r) = O(r^{1-1/d})$
        simplices of $\Simplex_1, \ldots, \Simplex_t$,
    \end{compactenum}
\end{defn}

It is not hard to see that such a partition exists for the grid point
set. It is quite surprising that such a partition exists in the
general case. The construction is due to \Matousek \cite{m-ept-92},
and it is somewhat involved. Here, we show that if a point set has low
\separability, then one can easily construct a partition.

\begin{lemma}
    \lemlab{partition:2:d}%
    Let $\PSet$ be a set of $n$ points in the plane, with
    $m = \indexX{\PSet} = O(\sqrt{n})$, then one can compute a
    triangulation of the plane, with $O( r \log^2 r)$ triangles, such
    that each triangle contains $\leq n/r$ points of $\PSet$, and any
    line intersects at most $O( \sqrt{r} \log^2 r)$ triangles.
\end{lemma}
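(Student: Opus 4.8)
The plan is to turn a minimum separating set of $m=\indexX{\PSet}=O(\sqrt n)$ lines into a cutting, and then observe that inside each cell of the cutting only few of those lines remain ``active'', which caps the number of points of $\PSet$ that the cell can contain. Fix a separating set $\LSet$ with $m=O(\sqrt n)$ lines; after an infinitesimal generic perturbation (costing only a constant factor in $m$, and also ensuring no point of $\PSet$ later lands on a cell boundary) no line of $\LSet$ passes through a point of $\PSet$, so every two-dimensional face of $\ArrX{\LSet}$ contains at most one point of $\PSet$. Set $\rho=\Theta(\sqrt r\,\log r)$ and let $\Sample\subseteq\LSet$ be a uniform random sample of $\rho$ lines. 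Take the vertical decomposition of $\ArrX{\Sample}$ (inside a huge bounding triangle, with a few extra unbounded triangles outside it so the cells tile the plane) and triangulate every trapezoid; this yields $O(\rho^{2})=O(r\log^{2} r)$ triangles. By the standard random-sampling ($\eps$-net) analysis of cuttings, with high probability every resulting triangle $\Simplex$ is crossed by at most $O\pth{(m/\rho)\log \rho}=O(m/\sqrt r)$ lines of $\LSet$; fix such a sample.

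The crux is a bound on $\cardin{\PSet\cap\Simplex}$ in terms of $c(\Simplex)$, the number of lines of $\LSet$ crossing the interior of $\Simplex$. If $\pa,\pb\in\PSet$ both lie in the interior of $\Simplex$, then any line of $\LSet$ separating $\pa$ from $\pb$ must cross that interior; hence these $c(\Simplex)$ lines alone already separate every pair of points of $\PSet$ inside $\Simplex$. Clipping the arrangement of those $c(\Simplex)$ lines to the convex region $\Simplex$ produces $O\pth{c(\Simplex)^{2}}$ two-dimensional faces, each of which can hold at most one point of $\PSet$ (two points in one face would be left unseparated). Therefore $\cardin{\PSet\cap\Simplex}=O\pth{c(\Simplex)^{2}+1}$, and $c(\Simplex)$ is no larger than the crossing number bounded in the previous paragraph.

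Putting the pieces together: $c(\Simplex)=O(m/\sqrt r)$ for every triangle, so $\cardin{\PSet\cap\Simplex}=O\pth{(m/\sqrt r)^{2}}=O(n/r)$, and this last step is exactly where $m=O(\sqrt n)$ is used; rescaling $r$ by a suitable constant replaces the $O(n/r)$ by the clean $n/r$. There are $O(\rho^{2})=O(r\log^{2} r)$ triangles, and by the zone theorem a line crosses $O(\rho)=O(\sqrt r\,\log r)$ cells of the vertical decomposition of $\rho$ lines, hence $O(\sqrt r\,\log^{2} r)$ triangles after the final triangulation, with room to spare. For the algorithmic part, $\LSet$ is the given separating set, and a cutting with these guarantees is computed in $O(m\rho)$ time by drawing the sample and verifying the crossing numbers (or deterministically via \Matousek's cutting algorithm). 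One could shave the logarithmic factors using an optimal cutting, but the single-sample version above already matches the statement.

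The main obstacle is the structural step of the second paragraph, which converts the ``grid-like'' property encoded by low separability into a depth bound on the cells of a generic cutting, together with tuning the single parameter $\rho$ so that the triangle count $O(\rho^{2})$, the per-line crossing number $O(\rho)$, and the per-cell point count $O\pth{\pth{(m/\rho)\log\rho}^{2}}$ simultaneously hit the advertised bounds; the cutting and $\eps$-net machinery itself is classical and can be invoked as a black box.
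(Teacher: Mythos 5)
Your proposal is correct and follows essentially the same route as the paper: sample roughly $\sqrt{r}\log r$ lines from an optimal separating set $\LSet$, use the $\eps$-net/cutting property to bound the number of lines of $\LSet$ crossing each cell by $O(m/\sqrt{r})$, and observe that these crossing lines alone separate the points inside the cell, so each cell holds $O\pth{(m/\sqrt{r})^2}=O(n/r)$ points once $m=O(\sqrt{n})$ is plugged in. The only difference is cosmetic: you triangulate the vertical decomposition of the sample and use the zone theorem to get $O(\sqrt{r}\log r)$ crossed cells per line, whereas the paper triangulates each face of the sample's arrangement by repeatedly connecting every other vertex so that a line crosses $O(\log r)$ triangles per face -- both meet the stated $O(\sqrt{r}\log^2 r)$ bound (yours with a logarithmic factor to spare).
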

\begin{proof}
    Let $\LSet$ be a set of lines that separates $\PSet$ and realizes
    $\indexX{\PSet}$. Consider a random sample $\RSample$ of size
    $O(\rho \log \rho)$ from $\LSet$, where $\rho = \alpha \sqrt{r}$,
    where $\alpha$ is a sufficiently large constant.

    Consider a face $\face$ of $\ArrX{\RSample}$ -- it is a convex
    polygon with $\rho' = O(\rho \log \rho)$ sides.  We triangulate it
    by connecting consecutive even vertices (i.e., every other vertex
    as we travel along the boundary of $\face$), and repeat this
    process til the face is fully triangulated. It is easy to verify
    that any line can intersect at most
    $O( \log \rho') = O(\log \rho)$ triangles in this triangulation of
    the face. Repeating this triangulation for all the faces of
    $\ArrX{\LSet}$ results in a triangulation of the plane, and let
    $\TSet$ be the resulting set of triangles. Clearly, any line
    intersects at most $O(\rho \log^2 \rho)$ triangles of $\TSet$.

    By the $\eps$-net theorem \cite{hw-ensrq-87}, any triangle
    $\triangle$ of $\TSet$ intersects at most $m/\rho$ lines of
    $\LSet$ in its interior. As such, the arrangement of $\LSet$
    restricted to $\triangle$ can have at most
    $c' (m/\rho)^2 \leq n/r$ faces (including edges on the boundary of
    $\triangle$), for some constant $c'$, and for a sufficiently large
    constant $\alpha$. This also bounds the number of points of
    $\PSet$ in $\triangle$, thus establishing the claim.
\end{proof}

\begin{lemma}
    \lemlab{partition:3:d}%
    Let $\PSet$ be a set of $n$ points in $\Re^3$, with
    $m = \indexX{\PSet} = O(n^{1/3})$. One can compute a
    triangulation, with $O( r \log^2 r)$ simplices, such that each
    simplex contains $\leq n/r$ points of $\PSet$, and any plane
    intersects at most $O( {r}^{2/3} \log^2 r)$ simplices, and any
    line intersects at most $O(r^{1/3} \log^2 r)$ simplices.
\end{lemma}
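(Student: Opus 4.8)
The plan is to run the argument of \lemref{partition:2:d} one dimension higher: replace the separating lines by separating planes, replace the planar triangulation of the faces of the sample's arrangement by a three‑dimensional canonical triangulation, and keep the $\eps$‑net step unchanged. Concretely, let $\LSet$ be a set of $m = \indexX{\PSet} = O(n^{1/3})$ planes separating $\PSet$ (clipped to a large box containing $\PSet$), set $\rho = \alpha r^{1/3}$ for a sufficiently large constant $\alpha$, and let $\RSample\subseteq\LSet$ be a random sample of size $O(\rho\log\rho)$, so that $k := \cardin{\RSample} = O(\rho\log\rho)$.

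The only genuinely new ingredient is the triangulation. I want to triangulate the arrangement $\ArrX{\RSample}$ into a set $\TSet$ of $O(k^3) = O(r\log^{O(1)}r)$ simplices with two properties: (a) every line stabs $O(k\log^{O(1)}k) = O(r^{1/3}\log^{O(1)}r)$ simplices of $\TSet$, and (b) every plane stabs $O(k^2\log^{O(1)}k) = O(r^{2/3}\log^{O(1)}r)$ simplices of $\TSet$. For (b) the natural tool is the bottom‑vertex (canonical) triangulation: its restriction to a generic plane $h$ is the canonical triangulation of the planar arrangement cut out on $h$ by $\RSample$, which has only $O(k^2)$ triangles, so $h$ can stab only $O(k^2)$ simplices. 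The subtlety is that the canonical triangulation fans each face from a single vertex, and a line can stab all $\Theta(k)$ fan‑simplices inside one cell, which would only give the weak bound $O(k^2)$ in (a); to recover the strong bound $O(k)$ I would instead triangulate the lower‑dimensional faces by the ``every other vertex'' scheme of \lemref{partition:2:d} (so a line meets only $O(\log k)$ simplices inside each of the $O(k)$ cells it visits), and then check that this modification does not destroy (b) — i.e.\ that the cross section of the modified triangulation with a plane stays proportional to the $O(k^2)$ complexity of the line arrangement it induces there. An alternative, quoting more, is to take $\TSet$ to be a $(1/\rho)$‑cutting of $\LSet$ and invoke the standard bounds on how many simplices a line, resp.\ a plane, stabs in a cutting \cite{cf-dvrsi-90}.

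Granting (a) and (b), the rest is exactly as in \lemref{partition:2:d}. By the $\eps$‑net theorem \cite{hw-ensrq-87}, for $\alpha$ large enough every simplex $\sigma\in\TSet$ is crossed in its interior by at most $m/\rho$ planes of $\LSet$, hence the arrangement $\ArrX{\LSet}$ has $O\pth{(m/\rho)^3} = O(n/r)$ faces inside $\sigma$; since $\LSet$ separates $\PSet$, every face of $\ArrX{\LSet}$ contains at most one point of $\PSet$, so $\cardin{\PSet\cap\sigma}\le n/r$. Assigning to each $\sigma\in\TSet$ the set $\PSet\cap\sigma$ yields the partition, a line stabs $O(r^{1/3}\log^{O(1)}r)$ simplices by (a), and a plane stabs $O(r^{2/3}\log^{O(1)}r)$ simplices by (b).

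The main obstacle is exactly the combination of (a) and (b): pinning down a single triangulation of the three‑dimensional arrangement that is ``balanced'' enough that a line meets only $\tldO(\rho)$ simplices and ``canonical'' enough that a plane meets only $\tldO(\rho^2)$ simplices — the naive per‑cell triangulations fail one or the other. Everything else is a routine lift of the planar argument. It is worth noting that the line bound $\tldO(r^{1/3})$, rather than the $\tldO(r^{2/3})$ one gets for free, is precisely the feature the paper highlights as unknown for general plane‑only partitions.
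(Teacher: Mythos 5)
Your overall skeleton matches the paper exactly: sample $O(\rho\log\rho)$ planes with $\rho=\alpha r^{1/3}$, decompose the cells of $\ArrX{\RSample}$ into simplices, and use the $\eps$-net theorem to argue each simplex meets at most $m/\rho$ planes of $\LSet$, hence contains $O\pth{(m/\rho)^3}=O(n/r)$ cells of $\ArrX{\LSet}$ and thus at most $n/r$ points. The identification of the two stabbing requirements (a line should meet $\tldO(\rho)$ simplices, a plane $\tldO(\rho^2)$) is also correct, and your remark that the plane bound should come from the cross-section/complexity of the cells met by $h$ is essentially the zone-theorem argument the paper uses \cite{sa-dsstg-95}.

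However, the one genuinely new ingredient — the per-cell decomposition — is exactly what you leave unresolved, and that is a real gap rather than a routine detail. The paper closes it with the Dobkin--Kirkpatrick hierarchy: a convex $3$-polytope with $t$ vertices is decomposed into $O(t)$ simplices so that \emph{any line crosses only $O(\log t)$ of them}. With this, a line crosses $O(\cardin{\RSample})$ cells of the arrangement (one new cell per plane crossed), giving $O(\cardin{\RSample}\log\rho)=O(r^{1/3}\log^2 r)$ simplices, while the zone theorem bounds the total complexity of the cells met by a plane $h$ by $O(\cardin{\RSample}^2)$, and since each cell is decomposed into a number of simplices proportional to its complexity, $h$ meets $O(r^{2/3}\log^2 r)$ simplices. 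Your two substitutes do not (as stated) supply this tool: the ``every other vertex'' trick is a two-dimensional polygon device with no stated three-dimensional analogue, and modifying the bottom-vertex triangulation this way is precisely the unverified step; and a $(1/\rho)$-cutting per \cite{cf-dvrsi-90} comes with the guarantee that each cell is crossed by few \emph{planes} of $\LSet$, not with a standard bound on how many cutting cells a \emph{line} stabs — the $\tldO(\rho)$ line-stabbing property is exactly what needs the hierarchical (Dobkin--Kirkpatrick-style) decomposition in the first place. So the proof is incomplete until you plug in such a decomposition; once you do, the rest of your argument goes through as in the paper.
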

\begin{proof}
    We follow the proof of \lemref{partition:2:d}. Let $\LSet$ be a
    set of planes that separates $\PSet$ of size $O( n^{1/3})$. Let
    $\RSample$ be a random sample from $\LSet$ of size
    $O( \rho \log \rho)$, where $\rho = \alpha r^{1/3}$, where
    $\alpha$ is a sufficiently large constant. For a face $\face$ of
    $\ArrX{\RSample}$, which is a convex polytope (or convex
    polyhedra, if it is unbounded), we decompose it into simplices
    using the Dobkin-Kirkpatrick hierarchy. If the face has $t$
    vertices, the resulting decomposition has $O(t)$ simplices, and
    furthermore, any line intersects at most $O( \log t)$ such
    simplices. Let $\TSet$ be the resulting set of simplices when
    applying this decomposition for all the faces of
    $\ArrX{\RSample}$.

    As before, by the $\eps$-net theorem, a simplex
    $\triangle \in \TSet$ intersects at most $m / \rho$ planes of
    $\LSet$. As such, the arrangement of $\ArrX{\LSet}$ when
    restricted to $\triangle$, can have at most
    $c( ( m/\rho)^3) \leq n/r$ facets, which in turn bounds the number
    of points of $\PSet$ inside such a simplex by $n/r$.

    Any line intersects $|\RSample| -1$ faces of $\RSample$, and as
    such at most $O( |\RSample| \log \rho) = O( r^{1/3} \log^2 r)$
    simplices of $\TSet$. For any plane $h$, the total number of
    vertices that belong to faces of $\ArrX{\RSample}$ that intersects
    $h$ is $O( |\RSample|^2)$ by the zone theorem
    \cite{sa-dsstg-95}. Since a face is decomposed into a number of
    simplices that is proportional to its complexity, it follows that
    $h$ intersects at most $O( r^{2/3} \log^2 r)$ simplices.
\end{proof}

\section{Separating random points by lines}
\seclab{sep:l:random:points}%

Here we consider the \separability of a set $\PSet$ of $n$ points
picked uniformly and randomly in the unit square, and the random
variable $S_n = \indexX{\PSet}$, which is the \separability of
$\PSet$.

\subsection{The upper bound}
Let $\Grid$ be the uniform grid that partition the unit square into
$N \times N$ cells, where $N = {n^{2/3}}$. This grid is defined by
$2(N-1)$ lines, and the area of each grid cell is
$p = 1/N^2 = (1/{n^{2/3}})^2 = 1/n^{4/3}$.
A \emphi{grid collision} is when two points $\pa, \pb \in \PSet$
belongs to the same cell of $\Grid$, and in such a case $\pa$ and
$\pb$ \emphi{collide}.

\begin{lemma}
    \lemlab{colliding:pairs:ex}%
    Let $Z$ be the number of pairs of points of $\PSet$ that collide
    in the grid $\Grid$ (i.e., $Z$ is a random variable). Then, for
    $n$ sufficiently large, we have
    $ n^{2/3}/3 \leq \Ex{Z} \leq n^{2/3}/2$.
\end{lemma}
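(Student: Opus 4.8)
The plan is to compute $\Ex{Z}$ via linearity of expectation over all $\binom{n}{2}$ pairs of points. For a fixed pair $\pa, \pb \in \PSet$, let $A_{\pa\pb}$ be the event that both points land in the same cell of $\Grid$. Since the $n$ points are drawn independently and uniformly from $[0,1]^2$, and the grid has $N^2$ cells each of area $p = 1/N^2 = 1/n^{4/3}$, we have $\Prob{A_{\pa\pb}} = \sum_{\text{cells } c} (\text{area of } c)^2 = N^2 \cdot p^2 = p = 1/n^{4/3}$. Therefore
\begin{math}
    \Ex{Z} = \binom{n}{2} \cdot \frac{1}{n^{4/3}} = \frac{n(n-1)}{2 n^{4/3}} = \frac{n^{2/3}(1 - 1/n)}{2}.
\end{math}
This already gives the upper bound $\Ex{Z} \leq n^{2/3}/2$ immediately (for all $n \geq 2$), and for the lower bound it suffices to note that $(1-1/n)/2 \geq 1/3$ once $n \geq 3$, which certainly holds for $n$ sufficiently large.

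The one subtlety — a ``boring and minor boundary issue,'' as the introduction puts it — is that I should double-check that the per-pair collision probability is exactly $p$ and not merely approximately $p$. Here the computation is genuinely clean: $\Prob{A_{\pa\pb}}$ is the probability that two i.i.d.\ uniform points in the unit square lie in a common grid cell, which is exactly $\sum_c \mu(c)^2$ where $\mu$ is the uniform (Lebesgue) measure on $[0,1]^2$ and the sum is over the $N^2$ cells; since every cell has measure exactly $1/N^2$, this sum is exactly $N^2 (1/N^2)^2 = 1/N^2 = 1/n^{4/3}$. (Points landing on grid lines form a measure-zero event and can be ignored, or assigned to cells arbitrarily without changing anything.) So in fact there is no approximation at all in this lemma, which makes it cleaner than the closest-pair heuristic discussed earlier.

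I do not anticipate a real obstacle here; the only thing to be careful about is keeping the direction of the inequalities straight and stating the explicit threshold on $n$ (the bound $\Ex{Z} \geq n^{2/3}/3$ needs, say, $n \geq 3$, while $\Ex{Z} \leq n^{2/3}/2$ is unconditional). The harder work — showing $Z$ is concentrated around this expectation — is deferred to the subsequent corollary and lemmas invoking Talagrand's inequality, and is not part of this statement.
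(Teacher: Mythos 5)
Your proposal is correct and follows essentially the same route as the paper's proof: linearity of expectation over the $\binom{n}{2}$ pairs, with per-pair collision probability exactly $p = 1/N^2 = 1/n^{4/3}$ (the paper phrases this as conditioning on the cell of the first point, you as $\sum_c \mu(c)^2$, which is the same computation), yielding $\Ex{Z} = \binom{n}{2}p$ and the stated bounds. Your explicit threshold $n \geq 3$ for the lower bound is a slightly sharper bookkeeping of what the paper leaves as ``$n$ sufficiently large,'' but the argument is the same.
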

\begin{proof}
    Let $\PSet = \brc{\pa_1, \ldots, \pa_n}$, where the exact location
    of each point in this set is yet to be determined.  The
    probability for two points $\pa_i$ and $\pa_j$ to collide, that is
    to fall into the same cell in the grid, is $p = 1/N^2$ -- indeed,
    first throw in the point $\pa_i$, and the desired probability is
    the probability of $\pa_j$ to fall into the cell that contains
    $x_i$.  As such, by linearity of expectations, the expected number
    of colliding pairs is
    \begin{math}
        \Ex{Z} = \binom{n}{2} p%
        \leq%
        n^2/(2n^{4/3})%
        =%
        n^{2/3}/2.
    \end{math}

    For the lower bound, observe that
    \begin{math}
        \Ex{Z} = \binom{n}{2} p%
        =%
        \ds \frac{n(n-1)}{2 N^2} %
        \geq%
        \frac{n^2}{3n^{4/3}} =%
        \frac{n^{2/3}}{3},
    \end{math}
    for $n$ sufficiently large.~
\end{proof}

\begin{lemma}
    \lemlab{r:p:upperbound}%
    $\Ex{S_n} = O(n^{2/3})$.
\end{lemma}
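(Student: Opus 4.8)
The plan is to exhibit an explicit separating set of lines whose expected size is $O(n^{2/3})$, and then average. Start with the $2(N-1) = O(n^{2/3})$ axis-parallel lines that define the grid $\Grid$ (the $N-1$ internal vertical lines and the $N-1$ internal horizontal lines). These lines already separate every pair of points of $\PSet$ lying in distinct cells: if $\pa$ and $\pb$ are in different cells, they differ in their grid column or their grid row, and the corresponding grid line has $\pa$ and $\pb$ strictly on opposite sides. Hence the only pairs not yet separated are exactly the colliding pairs. For each colliding pair $\pa, \pb$ — which are almost surely two distinct points — add one further line that strictly separates $\pa$ from $\pb$. Since adding lines never destroys a separation already achieved, after this step the whole set $\LSet$ of lines separates every pair of points of $\PSet$.

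The number of lines used is at most $2(N-1) + Z \le 2n^{2/3} + Z$, where $Z$ is the random variable counting colliding pairs. Taking expectations and invoking \lemref{colliding:pairs:ex},
\begin{math}
   \Ex{S_n} \le 2n^{2/3} + \Ex{Z} \le 2n^{2/3} + n^{2/3}/2 = O(n^{2/3}),
\end{math}
which is the claim.

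Honestly, there is no real obstacle here: this is a short construction-plus-counting argument, and the only points needing a word of care are measure-zero events — a point landing exactly on a grid line, or two points coinciding — which happen with probability zero and can be ignored (or removed by an infinitesimal perturbation of $\Grid$). One might be tempted to be cleverer and separate all points inside a crowded cell with $\ceil{k/2}$ lines rather than one line per pair, but this is unnecessary precisely because \lemref{colliding:pairs:ex} already bounds $\Ex{Z}$ by $O(n^{2/3})$; spending one wasteful line per colliding pair costs nothing in the final bound. The genuinely hard direction is the matching lower bound $S_n = \Omega(n^{2/3}\log\log n/\log n)$, which is where the balls-into-bins, Talagrand-concentration, and high-probability-birthday-paradox machinery from the introduction is needed.
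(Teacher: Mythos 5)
Your proof is correct and is essentially identical to the paper's: both take the $2(N-1)$ grid lines of $\Grid$, add one extra line per colliding pair, and bound the expectation via \lemref{colliding:pairs:ex}. Your extra remarks on measure-zero degeneracies and on the (unneeded) per-cell optimization are fine but not required.
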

\begin{proof}
    \begin{figure}[t]
        \hfill%
        \includegraphics[page=1]{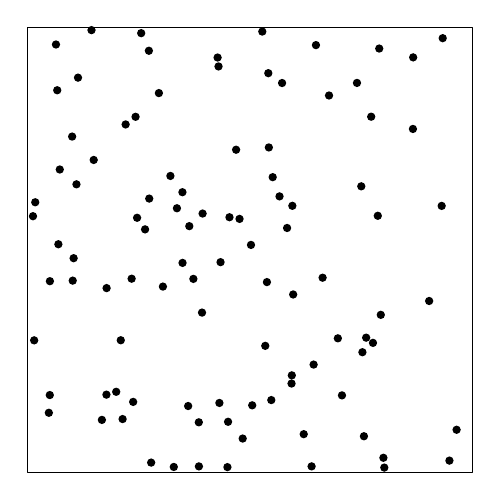} \hfill%
        \includegraphics[page=2]{figs/random_points}
        \hfill%
        \includegraphics[page=4]{figs/random_points} \hfill$~$%
        \caption{An illustration of the proof of
           \lemref{r:p:upperbound}.}
        \figlab{r:p:upperbound}%
    \end{figure}
    Let $\LSet$ be the set of $2(n^{2/3}-1)$ separating lines used in
    creating $\Grid$. By \lemref{colliding:pairs:ex}, the expected
    number of pairs of points of $\PSet$ colliding is $O(n^{2/3})$.
    For each such colliding pair, we add to $\LSet$ a line that
    separates this pair. In the end of this process all the points of
    $\PSet$ are separated, see \figref{r:p:upperbound}.  Furthermore,
    we have
    $\Ex{S_n} \leq \Ex{\cardin{\LSet}} = O(n^{2/3} + \Ex{Z}) =
    O(n^{2/3})$, as claimed.
\end{proof}


\subsection{A detour to balls into bins}

The problem at hand is related to the problem of balls and bins.
Here, given $\nballs$ balls, one throw them into $\nbins$ bins, where
$\nbins \geq \nballs$.

A ball that falls into a bin with $i$ or more balls is
\emphi{$i$-heavy}.  Let $\NHX{i}$ be the number of $i$-heavy balls. It
turns out that a strong concentration on $\NHX{i}$ follows readily
from Talagrand's inequality. While this is probably already known, we
were unable to find it in the literature, and we provide a self
contained proof here for the sake of completeness.

\subsubsection{The expectation of \TPDF{$\NHX{i}$}{Li}}

\begin{lemma}
    \lemlab{L:i:expecation}%
    Consider throwing $\nballs$ balls into $\nbins$ bins, where
    $\nbins \geq 3 \nballs$. Then,
    \begin{math}
        e^{-2} F_i \leq%
        \Ex{\NHX{i}} %
        \leq 6e^{i-1} F_i,
    \end{math}
    where $\NHX{i}$ is the number of $i$-heavy balls, and
    \begin{math}
        F_i = \nballs \pth{\frac{\nballs}{i \nbins}}^{i-1}.
    \end{math}

    The expected number of pairs of $i$-heavy balls that are colliding
    is
    \begin{math}
        O \pth{ \nballs i \pth{\frac{e \nballs}{i \nbins}}^{i-1}}.
    \end{math}
\end{lemma}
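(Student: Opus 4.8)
The plan is to compute $\Ex{\NHX{i}}$ by summing over bins the expected number of $i$-heavy balls contributed by a single bin. Fix a bin $b$. The number of balls in $b$ is a binomial random variable $B \sim \mathrm{Bin}(\nballs, 1/\nbins)$, and the contribution of $b$ to $\NHX{i}$ is $B \cdot \mathbf{1}[B \geq i]$. So $\Ex{\NHX{i}} = \nbins \cdot \Ex{B \cdot \mathbf{1}[B\ge i]} = \nbins \sum_{k \geq i} k \binom{\nballs}{k} (1/\nbins)^k (1 - 1/\nbins)^{\nballs - k}$. For the upper bound I would bound $\binom{\nballs}{k} \leq (e\nballs/k)^k$, drop the $(1-1/\nbins)^{\nballs-k} \le 1$ factor, and observe that since $\nbins \ge 3\nballs$ the ratio of consecutive terms $\tfrac{(k+1)\binom{\nballs}{k+1}}{k\binom{\nballs}{k}\nbins}$ is at most $\nballs/\nbins \le 1/3$ (up to the mild $k/(k+1)$ correction), so the sum is dominated by its first term $i\binom{\nballs}{i}\nbins^{-i}$ times a geometric factor $\le 3/2$; multiplying back by $\nbins$ and using $\binom{\nballs}{i} \le (e\nballs/i)^i$ gives $\Ex{\NHX{i}} \le \tfrac{3}{2}\cdot i \cdot (e\nballs/i)^i \nbins^{1-i} = \tfrac{3}{2} e^{i} \nballs (e\nballs/(i\nbins))^{i-1}/e \cdot (\text{const})$, which one massages into the stated $6 e^{i-1} F_i$. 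For the lower bound I would simply keep the $k=i$ term: $\Ex{\NHX{i}} \geq \nbins \cdot i \binom{\nballs}{i} \nbins^{-i}(1-1/\nbins)^{\nballs - i}$, use $\binom{\nballs}{i} \geq (\nballs/i)^i$ and $(1 - 1/\nbins)^{\nballs} \ge (1-1/\nbins)^{\nbins/3} \ge e^{-2/3}\cdot(\text{correction}) \ge e^{-2}$ (for $\nbins$ not too small; the constant $e^{-2}$ is generous), yielding $\Ex{\NHX{i}} \ge e^{-2} \cdot i (\nballs/i)^i \nbins^{1-i} = e^{-2} \nballs (\nballs/(i\nbins))^{i-1} = e^{-2} F_i$.

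For the second statement, about colliding pairs of $i$-heavy balls, I would again work bin by bin. If bin $b$ contains $B$ balls then the number of colliding pairs of $i$-heavy balls inside $b$ is $\binom{B}{2}\mathbf{1}[B \ge i]$, so the expected total is $\nbins \sum_{k \ge i}\binom{k}{2}\binom{\nballs}{k}\nbins^{-k}(1-1/\nbins)^{\nballs-k}$. This is the same kind of sum as before but with an extra factor $\binom{k}{2} \le k^2/2$; the geometric-decay argument still applies since the polynomial-in-$k$ prefactor cannot overcome the factor $\le 1/3$ per step once $k$ is past a small threshold (and the first few terms are absorbed into the constant). Hence the sum is $O\bigl(i^2 \binom{\nballs}{i}\nbins^{-i}\bigr)$, and multiplying by $\nbins$ and using $\binom{\nballs}{i} \le (e\nballs/i)^i$ gives $O\bigl(i^2 (e\nballs/i)^i \nbins^{1-i}\bigr) = O\bigl(\nballs i (e\nballs/(i\nbins))^{i-1}\bigr)$, exactly the claimed bound.

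The main technical obstacle is making the geometric-tail estimate clean and uniform in $i$: the ratio of the $(k{+}1)$st term to the $k$th term in these sums is roughly $\tfrac{\nballs}{\nbins}\cdot\tfrac{k+1}{k+1-\cdots}$ (with a further $\binom{k+1}{2}/\binom{k}{2}$ factor in the second sum), and one has to check that for all $k \ge i \ge 1$ this ratio stays bounded below, say, $1/2$, using only $\nbins \ge 3\nballs$. For $i=1$ and small $k$ the binomial-coefficient growth $\binom{\nballs}{k+1}/\binom{\nballs}{k} = (\nballs-k)/(k+1)$ can be as large as $\nballs$, which is cancelled by the $1/\nbins$, leaving $\le 1/3$; for larger $k$ it only gets smaller. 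The polynomial prefactors $k$ and $k^2$ change this ratio by a bounded multiplicative amount ($\le 2$ and $\le 4$ respectively), so choosing the constants generously (this is why the statement has $6$ and $e^{i-1}$ rather than tight constants) the geometric series converges with a bounded sum. Once that is pinned down, everything else is the routine algebra of rewriting $i\binom{\nballs}{i}\nbins^{1-i}$ in terms of $F_i = \nballs(\nballs/(i\nbins))^{i-1}$, which I would not belabor.
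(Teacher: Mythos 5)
Your proof is correct and essentially the same as the paper's: the paper counts ball-by-ball (the probability that a fixed ball lands in a bin with exactly $j\ge i$ balls), which via $k\binom{\nballs}{k}=\nballs\binom{\nballs-1}{k-1}$ produces exactly the same binomial tail sums as your bin-by-bin decomposition, and both arguments then bound these sums using geometric decay of consecutive terms (from $\nbins\ge 3\nballs$) together with the standard estimates $(\nballs/i)^i\le\binom{\nballs}{i}\le(e\nballs/i)^i$ and $(1-1/\nbins)^{\nbins-1}\ge 1/e$, for both the heavy-ball count and the colliding-pair count. The only detail to pin down is the $i=1$ case of the collision bound, where your leading term $\binom{k}{2}$ vanishes at $k=i=1$ and the sum is instead dominated by $k=2$, which still gives $O(\nballs^2/\nbins)=O(\nballs)$ as required.
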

\begin{proof}
    Let $p = 1/\nbins$. A specific ball falls into a bin with exactly
    $i$ balls, if there are $i-1$ balls, of the remaining $n-1$ balls
    that falls into the same bin. As such, the probability for that is
    \begin{math}
        \gamma_i = p^{i-1}(1-p)^{\nballs-i} \binom{\nballs-1}{i-1}.
    \end{math}
    As such, a specific ball is $i$-heavy with probability
    \begin{align*}
      \alpha%
      =%
      \sum_{j=i}^\nballs \gamma_j %
      =%
      \sum_{j=i-1}^{\nballs-1} \binom{\nballs-1}{j}
      p^j(1-p)^{\nballs-j-1}%
      \leq%
      \sum_{j=i-1}^{\nballs-1}  \pth{\frac{e(\nballs-1)}{j \nbins}}^j%
      \leq%
      2 \pth{\frac{e \nballs}{\nbins (i-1)}}^{i-1}
      \leq%
      6 \pth{\frac{e \nballs}{i \nbins}}^{i-1},
    \end{align*}
    as $\pth{ n/i}^i \leq \binom{n}{i} \leq \pth{ e
       n/i}^i$. Similarly, since
    \begin{math}
        (1-p)^{\nballs-j-1}%
        \geq%
        (1-1/\nbins)^{\nbins-1}%
        \geq%
        1/e,
    \end{math}
    we have
    \begin{align*}
      \alpha%
      \geq%
      \frac{1}{e}
      \sum_{j=i-1}^{\nballs-1}  \pth{\frac{\nballs-1}{j \nbins}}^j%
      \geq%
      \frac{1}{e}
      \pth{\frac{\nballs-1}{\nballs}\cdot\frac{\nballs}{(i-1) \nbins}}^{i-1}%
      \geq%
      \frac{1}{e^2}
      \pth{\frac{\nballs}{i \nbins}}^{i-1}.%
    \end{align*}
    As such, we have
    \begin{math}
        \Ex{\NHX{i}} = \nballs \alpha = \Theta(
        \nballs(\nballs/\nbins)^{i-1} ).
    \end{math}

    If a ball is in a bin with exactly $j$ balls, for $j \geq i$, then
    it collides directly with $j-1$ other $i$-heavy balls. Thus, the
    expected number of collisions that a specific ball has with
    $i$-heavy balls is in expectation
    \begin{math}
        \sum_{j=i}^{\nballs} (j-1) \gamma_j%
        =%
        \sum_{j=i-1}^{\nballs-1} j \gamma_{j+1}.
    \end{math}
    Summing over all balls, and dividing by two, as every $i$-heavy
    collision is counted twice, we have that the expected overall
    number of such collisions is
    \begin{align*}
      \beta_i%
      =%
      \frac{n}{2}
      \sum_{j=i-1}^{\nballs-1} j \gamma_{j+1}
      =%
      \frac{\nballs}{2} \sum_{j=i-1}^{\nballs}
      j \binom{\nballs-1}{j} p^j(1-p)^{\nballs-j-1}%
      =%
      O \pth{ \nballs i \pth{\frac{e \nballs}{i \nbins}}^{i-1}},
    \end{align*}
\end{proof}

\subsubsection{Concentration of \TPDF{$\NHX{i}$}{L{i}}}

\paragraph{Talagrand's inequality and certifiable functions.} %
Let $f(\xx)$ be a real-valued function over some product probability
space $\Omega = \Omega_1 \times \cdots \times \Omega_n$. The function
$f$ is \emphi{$r$-certifiable}, if for every $\xx \in \Omega$, there
exists a set of indices $J(\xx) \subseteq \brc{1,\ldots,n}$, such that
\smallskip%
\begin{compactenum}[\qquad(A)]
    \item $\vert J(\xx) \vert \leq r f(\xx)$, and
    \item if $\yy \in \Omega$ agrees with $\xx$ on the coordinates in
    $J(\xx)$, then $f(\yy) \geq f(\xx)$.
\end{compactenum}
\smallskip%
The function $f$ is \emphi{$c$-Lipschitz} if for two values
$\xx, \yy \in \Omega$ that agree on all coordinates except one, we
have that $\cardin{ f(\xx) - f(\yy) } \leq c$.  For a real valued
random variable $f$, its \emphi{median}, denoted by $\medianX{f}$, is
the infimum value $\medianC$, such that
$\Prob{ f < \medianC} \leq 1/2$ and $\Prob{ f > \medianC} \leq 1/2$.

The version of Talagrand's inequality we need is the following.

\begin{theorem}[\protect{\cite[Theorem 11.3]{dp-cmara-09}}]
    \thmlab{talagrand}%
    Let $f : \Omega \rightarrow \mathbb{R}$ be an $r$-certifiable
    function that is $c$-Lipschitz, for some constants $r$ and $c$,
    with $\medianC = \medianX{f}$. Then, for all $t > 0$, we have
    \begin{math}
        \Prob{\Bigl.\cardin{\bigl. f - \medianC } > t}%
        \leq%
        4\exp\bigl(-\frac{t^2}{4c^2 r(\medianC + t)}\bigr).
    \end{math}
\end{theorem}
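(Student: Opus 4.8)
The final statement is the classical Talagrand concentration inequality in its ``certifiable Lipschitz function'' form (as in \cite{dp-cmara-09}), so the plan is not to prove it from scratch but to derive it from the standard convex-distance version of Talagrand's inequality. Concretely, I would take as the starting point the following: for any product probability space $\Omega = \Omega_1 \times \cdots \times \Omega_n$, any event $A \subseteq \Omega$, and any $s > 0$, one has $\Prob{A}\cdot\Prob{d_T(\cdot, A) \geq s} \leq \exp\pth{-s^2/4}$, where $d_T(\xx, A) = \sup \Set{ \inf_{\yy \in A} \sum_{i : \xx_i \neq \yy_i} \alpha_i }{ \alpha \in \Re^n,\ \|\alpha\|_2 = 1}$ is Talagrand's convex distance (with the convention $d_T(\xx, \emptyset) = +\infty$). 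Everything then reduces to translating a lower bound on $d_T$ into a statement about $f$.

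The heart of the argument is a geometric lemma. Fix reals $0 \leq b < v$, put $A = \brc{\yy \in \Omega : f(\yy) \leq b}$, and consider any $\xx$ with $v' := f(\xx) \geq v$; I claim $d_T(\xx, A) \geq (v-b)/\pth{c\sqrt{r v}}$. To see this, let $J = J(\xx)$ be the certificate from $r$-certifiability, so $\cardin{J} \leq r v'$, and take $\alpha$ to be the unit vector equal to $1/\sqrt{\cardin{J}}$ on $J$ and $0$ elsewhere. For an arbitrary $\yy \in A$, put $D = \brc{i \in J : \xx_i \neq \yy_i}$ and let $\mathsf{z}$ be the point equal to $\xx$ on $D$ and to $\yy$ off $D$; then $\mathsf{z}$ agrees with $\xx$ on all of $J$, so $f(\mathsf{z}) \geq v'$ by certifiability, while $\mathsf{z}$ and $\yy$ differ in exactly the $\cardin{D}$ coordinates of $D$, so $v' - b \leq f(\mathsf{z}) - f(\yy) \leq c\cardin{D}$ by the $c$-Lipschitz property. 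Hence $\cardin{D} \geq (v'-b)/c$ and $\sum_{i : \xx_i \neq \yy_i} \alpha_i \geq \sum_{i \in D}\alpha_i = \cardin{D}/\sqrt{\cardin{J}} \geq (v'-b)/\pth{c\sqrt{r v'}} \geq (v-b)/\pth{c\sqrt{r v}}$, the last step using $\cardin{J}\le rv'$ and the monotonicity of $u \mapsto (u-b)/\sqrt u$ on $u>0$ (for $b \geq 0$). Taking the infimum over $\yy \in A$ proves the claim.

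Combining this lemma (which says $\brc{f \geq v} \subseteq \brc{d_T(\cdot, A) \geq s}$ for $s = (v-b)/\pth{c\sqrt{r v}}$) with the convex-distance inequality gives $\Prob{f \leq b}\cdot\Prob{f \geq v} \leq \exp\pth{-\tfrac{(v-b)^2}{4 c^2 r v}}$ for all $0 \leq b < v$. I would then specialize around the median. For the upper tail, take $b = \medianC$ and $v = \medianC + t$; since $\Prob{f \leq \medianC} \geq 1/2$, this yields $\Prob{f \geq \medianC + t} \leq 2\exp\pth{-\tfrac{t^2}{4c^2 r(\medianC + t)}}$. For the lower tail, take $v = \medianC$ and $b = \medianC - t$ (the bound being trivial when $\medianC - t < 0$); since $\Prob{f \geq \medianC} \geq 1/2$, this yields $\Prob{f \leq \medianC - t} \leq 2\exp\pth{-\tfrac{t^2}{4c^2 r \medianC}} \leq 2\exp\pth{-\tfrac{t^2}{4c^2 r(\medianC + t)}}$. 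A union bound over the two tails gives $\Prob{\cardin{f - \medianC} > t} \leq 4\exp\pth{-\tfrac{t^2}{4c^2 r(\medianC + t)}}$, as stated.

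The hard part is the geometric lemma, and within it the sub-claim that every $\yy$ in the sub-level set $A$ disagrees with $\xx$ on at least $(v'-b)/c$ of the \emph{certificate} coordinates $J(\xx)$: this is exactly where the two hypotheses are used in tandem --- certifiability forces $f(\mathsf{z})$ back up to $v'$ once all of $J(\xx)$ is restored, and the Lipschitz bound limits how much $f$ can have risen while editing only those coordinates. The rest is bookkeeping: picking the right witness direction $\alpha$ (uniform on the certificate), checking the monotonicity that lets one pass from $v' = f(\xx)$ back to the threshold $v$, and tracking how the convex-distance radius $s$ turns into the stated exponent when $b$ and $v$ are placed about $\medianC$. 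One should also record the elementary facts that $\Prob{f \leq \medianC} \geq 1/2$ and $\Prob{f \geq \medianC} \geq 1/2$ (which is what licenses dropping the inconvenient probability factor in each tail), and that the implicit normalization $\medianC \geq 0$ --- automatic when $f$ is a counting function, as in the intended application to $\NHX{i}$ --- is what makes the monotonicity step legitimate.
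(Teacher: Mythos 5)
Your derivation is correct, but note that the paper does not prove this statement at all --- it is quoted directly from \cite[Theorem 11.3]{dp-cmara-09}, so there is no in-paper proof to compare against. What you give (reducing to the convex-distance form of Talagrand's inequality, the certificate/Lipschitz lemma bounding $d_T$ from below, and then placing $b$ and $v$ about the median for the two tails) is the standard argument from that reference, and your handling of the side conditions (nonnegativity forced by $\cardin{J(\xx)} \leq r f(\xx)$, and the monotonicity of $u \mapsto (u-b)/\sqrt{u}$) is sound.
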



\paragraph{Concentration of $\NHX{i}$}
\begin{lemma}
    \lemlab{L:i:concentration}%
    Consider throwing $\nballs$ balls into $\nbins$ bins, where
    $\nbins \geq 3 \nballs$.  Furthermore, let $i$ be a small constant
    integer, $\NHX{i}$ be the number of balls that are contained in
    bins that contains $i$ or more balls, and let
    $\medianC_i = \medianX{\NHX{i}}$. In addition, assume that
    \begin{math}
        \medianC_i \geq 16i^2 c \log \nballs,
    \end{math}
    where $c$ is some arbitrary constant.  Then, we have that
    \begin{math}
        \Prob{ \cardin{\NHX{i} - \medianC_i } \geq 4i \sqrt{c
              \medianC_i \log \nballs}} \leq %
        {1}/{\nballs^{c}}.
    \end{math}
    Furthermore, for some constant $c'$, we have
    $\cardin{\medianC_i - \Ex{\NHX{i}}} \leq c' i \sqrt{\medianC_i}$,
    and as such
    \begin{align*}
      \Prob{ \cardin{\NHX{i} - \Ex{\NHX{i}} }
      \geq  c'i \sqrt{\medianC_i} + 4i \sqrt{c \medianC_i \log
      \nballs}} \leq %
      {1}/{\nballs^{c}}.
    \end{align*}
\end{lemma}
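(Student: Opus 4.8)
The plan is to apply Talagrand's inequality (\thmref{talagrand}) to the function $f = \NHX{i}$, viewed as a function on the product space $\Omega = [\nbins]^\nballs$, where the $k$\th coordinate records the bin into which the $k$\th ball falls. Two preliminary facts must be checked. First, $f$ is \emph{$i$-Lipschitz}: moving a single ball from one bin to another changes the heavy-ball count by at most... well, the ball itself may switch status, and it can change the status of at most $i-1$ balls in its old bin and at most $i-1$ in its new bin, but careful accounting (a bin drops below threshold $i$ or rises to it only when it has exactly $i$ or $i-1$ balls) gives a change of at most $2i$ or so; in any case $f$ is $c_0$-Lipschitz for a constant $c_0 = O(i)$. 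Second, $f$ is \emph{$i$-certifiable}: given an outcome $\xx$, to certify that $f(\xx) = m$ heavy balls, it suffices to exhibit, for each heavy ball, the set of (at most $i$, in fact) balls sharing its bin up to the threshold; this is a set $J(\xx)$ of at most $i \cdot f(\xx)$ indices, and any $\yy$ agreeing with $\xx$ on $J(\xx)$ keeps all those bins at least $i$-full, hence $f(\yy) \geq f(\xx)$.

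With $r = i$ and $c = c_0 = O(i)$, \thmref{talagrand} gives $\Prob{ \cardin{\NHX{i} - \medianC_i} > t} \leq 4\exp\!\pth{-t^2/(4 c_0^2 i (\medianC_i + t))}$. Setting $t = 4i\sqrt{c \medianC_i \log \nballs}$ (matching the lemma's notation, where this $c$ is the exponent constant in the failure probability, not the Lipschitz constant), one checks that under the hypothesis $\medianC_i \geq 16 i^2 c \log \nballs$ we have $t \leq \medianC_i$, so $\medianC_i + t \leq 2\medianC_i$; substituting, the exponent is at most $-t^2/(8 c_0^2 i \medianC_i) = -2 c \log \nballs \cdot (i / c_0^2)$, and absorbing the $O(i)$-versus-$c_0$ discrepancy into the constants (this is where one must be a little careful about which constant is which — the $4i$ in the lemma statement is generous enough to cover the Lipschitz constant $c_0 = O(i)$) yields a bound of $1/\nballs^{c}$, as claimed.

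For the second part, I would bound $\cardin{\medianC_i - \Ex{\NHX{i}}}$ using the standard fact that for any random variable the gap between mean and median is at most the standard deviation, and in fact at most $\int_0^\infty \Prob{\cardin{f - \medianC_i} > t}\,dt$; plugging in the Talagrand tail bound and integrating gives $\cardin{\medianC_i - \Ex{\NHX{i}}} = O(c_0 \sqrt{i \medianC_i}) = O(i\sqrt{\medianC_i})$, i.e. $\leq c' i \sqrt{\medianC_i}$ for a suitable constant $c'$. The final displayed inequality then follows from the first part by the triangle inequality, replacing $\medianC_i$ by $\Ex{\NHX{i}}$ in the deviation event at the cost of the additive $c' i \sqrt{\medianC_i}$ term.

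The main obstacle I expect is the bookkeeping for the Lipschitz constant and the certifiability parameter: one has to be precise about how many balls change heavy-status when a single ball is relocated, and about exactly which indices to put in $J(\xx)$ so that $\cardin{J(\xx)} \leq i \cdot f(\xx)$ while still guaranteeing monotonicity of $f$ under coordinate agreement. Neither is deep, but getting the constants to line up with the specific numbers ($16 i^2$, $4i$) in the statement requires care; everything else is a routine application of \thmref{talagrand} and the mean–median comparison.
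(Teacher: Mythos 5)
Your overall route is the same as the paper's: apply Talagrand's inequality (\thmref{talagrand}) to $f=\NHX{i}$ after checking the Lipschitz and certifiability conditions, and then convert the median-centered bound into a mean-centered one by integrating the tail. The one place where your proposal does not actually deliver the stated bound is your choice of the certifiability parameter. You certify $f(\xx)$ by listing, for each heavy ball, the balls sharing its bin up to the threshold, which gives $\cardin{J(\xx)}\leq i\, f(\xx)$, i.e.\ $r=i$. The cheaper certificate works: take $J(\xx)$ to be exactly the set of heavy balls themselves, so $\cardin{J(\xx)}=f(\xx)$. If $\yy$ agrees with $\xx$ on these coordinates, every bin containing $i$ or more balls of $\xx$ still contains those same balls in $\yy$ (each of them is itself heavy, hence indexed by $J(\xx)$), so $f(\yy)\geq f(\xx)$ and $f$ is $1$-certifiable. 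Likewise the Lipschitz constant is exactly $i$, not merely $O(i)$: moving one ball can, at worst, turn a bin with $i$ balls into one with $i-1$, demoting $i$ balls. The constants $16i^2$ and $4i$ in the statement are calibrated to $r=1$, $c=i$: then the exponent is $-t^2/\bigl(4i^2(\medianC_i+t)\bigr)\leq -t^2/(8i^2\medianC_i)=-2c\log\nballs$, which gives the claimed $1/\nballs^{c}$. With your parameters the exponent is, by your own computation, $-2c\log\nballs\cdot(i/c_0^2)$, which is at best $-2c\log\nballs/i$ (when $c_0=i$) and worse if $c_0=2i$; for $i\geq 3$ this cannot be ``absorbed into the constants,'' since both the deviation $4i\sqrt{c\medianC_i\log\nballs}$ and the failure probability $1/\nballs^{c}$ are pinned down in the statement. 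So the fix is exactly the bookkeeping you flagged as the expected obstacle: tighten the certificate to $r=1$. Your treatment of the mean--median gap (integrating the Talagrand tail to get a bound of $O(i\sqrt{\medianC_i})$) matches the calculation the paper omits and is fine.
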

\begin{proof}
    Observe that $\NHX{i}$ is $1$-certifiable -- indeed, the
    certificate is the list of indices of all the balls that are
    contained in bins with $i$ or more balls. The variable $\NHX{i}$
    is also $i$-Lipschitz. Changing the location of a single ball, can
    make one bin that contains $i$ balls, into a bin that contains
    only $i-1$ balls, thus decreasing $\NHX{i}$ by $i$. Applying
    \thmref{talagrand} (Talagrand's inequality), with
    $t = 4i \sqrt{c \medianC_i \log \nballs}$, we have
    \begin{align*}
      \Prob{\Bigl.\cardin{\bigl. \NHX{i} - \medianC_i } > t}%
      \leq%
      4\exp\pth{-\frac{t^2}{4i^2 (\medianC_i + t)}}
      \leq %
      \frac{1}{\nballs^c},
    \end{align*}
    assuming $t \leq \medianC_i$.

    The estimate on the distance of $\medianC_i$ and $\Ex{\NHX{i}}$
    follows by estimating the expectation, by breaking the real line
    into intervals of length $O(i \sqrt{\medianC_i})$, and using the
    exponential decay of the probability in each such interval as we
    get away from $\medianC_i$, as implied by the above. We omit the
    tedious and straightforward calculations.

    The final inequality is readily implied by combining the two
    earlier statements.
\end{proof}

\subsubsection{Not too many shared birthdays}

The \emph{birthday paradox} states that if one throws $\nballs$ balls
(i.e., birthday dates of $\nballs$ people) into
$\nbins = \Theta(\nballs^2)$ bins (i.e., days of the year), then the
number of bins containing two or more balls is non-zero with constant
probability.  The following proves that the number of such bins can
not be too large.

\newcommand{\BSet}{\Mh{B}}%
\newcommand{\BSetA}{\Mh{C}}%
\newcommand{\BSetB}{\Mh{D}}%
\begin{lemma}
    \lemlab{few:birthdays}%
    Consider throwing $\nballs$ balls into $\nbins = c \nballs^2$
    bins, where $c$ is some constant. Then, with high probability, the
    total number of \underline{bins} that contain two or more balls is
    $O( \log \nballs/ \log \log \nballs)$.
\end{lemma}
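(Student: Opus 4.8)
The plan is to bound the number of bins containing at least two balls by a first-moment (union bound over bin-occupancy patterns) argument, pushing the threshold up to where the expected count drops below $1$. Let $B_k$ denote the number of bins containing exactly $k$ balls, and let $X = \sum_{k\geq 2} B_k$ be the quantity we want to control. The key observation is that if $X \geq s$, then there is a set of $s$ distinct bins, each receiving at least two specified balls; the total number of balls ``pinned down'' this way is at least $2s$, and the probability that $s$ prescribed (disjoint) pairs of balls each land in a common, distinct bin is at most $(1/\nbins)^{s}$ after choosing which bins. So I would estimate $\Prob{X \geq s}$ by summing over the choice of which $2s$ balls are involved, how they pair up, and which bins they occupy. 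A clean way to organize this: the number of ways to pick $s$ disjoint unordered pairs from $\nballs$ balls is at most $\binom{\nballs}{2}^{s}/s! \leq \nballs^{2s}/(2^s s!)$, and then each pair independently lands in a common bin with probability $1/\nbins$ (and we do not even need the bins to be distinct for an upper bound). Hence
\begin{align*}
    \Prob{X \geq s}
    \leq
    \frac{\nballs^{2s}}{2^s s!}\cdot \frac{1}{\nbins^{s}}
    =
    \frac{1}{s!}\pth{\frac{\nballs^2}{2\nbins}}^{s}
    =
    \frac{1}{s!}\pth{\frac{1}{2c}}^{s}
    \leq
    \frac{1}{s!},
\end{align*}
using $\nbins = c\nballs^2$ (the constant $1/(2c)$ only helps, so I will just bound it by $1$).

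Now I would invoke a standard tail estimate for the factorial: $s! \geq (s/e)^s$, so $\Prob{X \geq s} \leq (e/s)^s$. Choosing $s = C \log \nballs / \log\log \nballs$ for a suitable constant $C$, one checks that $(e/s)^s = \exp\pth{ s(1 + \ln e - \ln s)} = \exp\pth{-s\ln s + O(s)}$, and $s \ln s = \Theta\pth{\frac{\log\nballs}{\log\log\nballs}\cdot \log\log\nballs} = \Theta(\log\nballs)$; by taking $C$ large enough this is at least $2\log\nballs$, so $\Prob{X \geq s} \leq 1/\nballs^{2}$. That gives the high-probability bound $X = O(\log\nballs/\log\log\nballs)$.

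The main obstacle — really the only subtle point — is getting the counting argument airtight: when I say ``$X \geq s$ implies $s$ disjoint pairs in distinct bins,'' I am using that a bin with $k \geq 2$ balls contributes at least one such pair and the pairs taken from distinct bins are automatically disjoint as ball-sets, so extracting $s$ disjoint pairs is legitimate whenever $X \geq s$. The union bound then runs over all ways those $2s$ balls could be chosen and paired, and it is important that I do not need distinctness of the target bins in the probability estimate (dropping that requirement only inflates the bound) — this keeps the pairs' landing events independent and the computation trivial. The rest is the factorial tail bound and the choice of $s$, which is routine. I would also remark that the same argument, with $s$ and the exponent adjusted, shows the bound is tight up to the constant, matching the classical fact that the maximum bin load in this regime is $\Theta(\log\nballs/\log\log\nballs)$; but only the upper bound is needed here.
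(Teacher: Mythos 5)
Your proof is correct, but it follows a genuinely different route from the paper's. You bound $\Prob{X \geq s}$ directly by a first-moment union bound: $X \geq s$ forces $s$ disjoint pairs of balls, each pair sharing a bin; for disjoint pairs these collision events are independent, each of probability $1/\nbins$, and the number of ways to choose the pairs is at most $\nballs^{2s}/(2^s s!)$, so the factorial kills the count once $s = \Theta(\log \nballs/\log\log\nballs)$. The paper instead splits the balls into two halves, bounds the expected number of \emph{cross} collision bins by a constant, applies Chernoff to get $O(\log\nballs/\log\log\nballs)$ cross-collision bins with high probability, and then—because same-side collisions are invisible to a single split—repeats with $M = O(\log\nballs)$ random partitions built from a matching so that every pair is separated in $\Omega(\log \nballs)$ of them, finishing with a double-counting step. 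Your argument is shorter and more elementary: it avoids Chernoff and the partition-repetition machinery entirely, and it directly controls bins with two or more balls without the detour through cross-collisions. One small caveat: discarding the factor $\pth{1/(2c)}^{s}$ by saying it ``only helps'' is valid only when $c \geq 1/2$, and in the paper's application (Lemma 3.10, roughly $3n^{1/3}$ balls into $2n^{2/3}$ bins) the effective $c$ is below $1/2$; the fix is immediate—keep the factor, note it is $e^{O(s)}$, and observe it is absorbed by the dominant $e^{-s\ln s}$ term since $s\ln s = \Theta(\log\nballs)$ while $s = o(\log\nballs)$—so this is a cosmetic slip, not a gap.
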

\begin{proof}
    Partition the set $\BSet$ of $\nballs$ balls into two sets
    $\BSetA$ and $\BSetB$, each of size $\nballs/2$. Let $Y$ be the
    number of bins that contains balls of $\BSetA$ -- clearly,
    $Y \leq \nballs / 2$. As such, the probability of a ball of
    $\BSetB$ to fall into a bin with a ball of $\BSetA$, is
    \begin{math}
        \alpha%
        =%
        Y / \nbins%
        \leq%
        (\nballs/2)/\nbins = %
        1/(2 c \nballs).
    \end{math}
    As such, the expected number of bins that contains balls from both
    $\BSetA$ and $\BSetB$ is
    \begin{math}
        \cardin{\BSetB} \alpha%
        =%
        (\nballs/2) \alpha%
        \leq%
        1/4c.
    \end{math}
    By Chernoff's inequality, this quantity is smaller than
    $T = O( \log \nballs/ \log \log \nballs)$, with high
    probability\footnote{Yep.  By \thmref{Chernoff:simplified}
       \Eqref{v:large} with $\mu=3$, and
       $\delta = c_1 \log \nballs / \log \log \nballs$, where $c_1$ is a
       sufficiently large constant.}.
       
    This approach allows us to count the number of bins that contain 
    balls from both $C$ and $D$. However, to count the number of bins 
    that contain two or more balls, we need to count those bins which
    may only contain balls from $C$ (or from $D$). To overcome this, we
    repeat the above experiment, generating new partitions 
    $(\BSetA_1,\BSetB_1), \ldots, (\BSetA_M, \BSetB_M)$, as
    above, such that any pair $\pa,\pb \in \BSet$ appears in a
    constant fraction of these partitions on different sides. This is
    easy to do -- match the balls of $\BSet$ in pairs. To generate the
    $i$\th partition, the algorithms goes over the pairs in the
    matching $(\pa, \pb)$, and puts $\pa$ in $\BSetA_i$ and $\pb$ in
    $\BSetB_i$ with probability half, and otherwise it assigns $\pa$
    to $\BSetB_i$ and $\pb$ to $\BSetA_i$. Observe that
    $\cardin{\BSetA_i} = \cardin{\BSetB_i} = \nballs / 2$.

    Repeating this $M = c_2 \log \nballs$ times, guarantees with high
    probability, that any two balls $\pa, \pb \in \BSet$ appears in
    opposing sides of at least one of these partitions (two points
    that are an edge in the matching are in different sides in all
    partitions). Furthermore, by Chernoff's inequality, each pair
    appears in at least $m = \Omega( \log \nballs)$ pairs, with high
    probability\footnote{%
       Here we go again. Observe that a pair $\pa,\pb$ is separated by
       a partition with probability (at least) half. Let $X$ be the
       number of partitions separating this pair.  The expected number
       of partitions separating this pairs is
       $\mu = \Ex{X} = M/2 = O( \log \nballs)$. Setting
       $\delta = 1/2$, we have by \thmref{Chernoff:2} that
       $\Prob{ X \leq (1-\delta) \mu} \leq \exp \pth{ - \mu \delta^2 /
          2} \leq 1/n^{O(1)}$, by making $c_2$ sufficiently large. As
       such, with high probability, the pair is separated in at least
       $(1-\delta)\mu = (c_2/2)\log \nballs$ partitions, }.

    As such, overall, there are at most
    $\beta = O(M \log \nballs /\log \log \nballs)$ heavy bins with
    balls that belong to different sides of some partition. Each such
    heavy bin get counted at least $m$ times, thus implying that the
    number of heavy bins is at most
    $\beta/ m = O( \log \nballs /\log \log \nballs)$.
\end{proof}

The following is not required for the proof the main result, and we
include it since it might be of independent interest. Note, that the
next lemma bounds the number of balls colliding, while
\lemref{few:birthdays} bounded the number of bins.

\begin{lemma}
    \lemlab{num:colliding:pairs:for:fun}%
    Consider throwing $\nballs$ balls into $\nbins = c \nballs^2$
    bins, where $c$ is some constant. Then, with high probability, the
    total number of colliding pairs of \underline{balls} is
    $O( \log \nballs/ \log \log \nballs)$.
\end{lemma}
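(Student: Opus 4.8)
The statement is almost identical to Lemma (few:birthdays), except that we must bound the number of *colliding pairs of balls* rather than the number of *heavy bins*. The plan is to reduce the pairs-counting version to the bins-counting version, losing only a constant factor. First I would invoke Lemma (few:birthdays): with high probability there are at most $B = O(\log \nballs / \log\log \nballs)$ bins that contain two or more balls. If I also knew that no bin contains more than (say) a constant number of balls, I'd be done immediately, since then the number of colliding pairs is at most $\binom{k}{2} B = O(B)$ where $k$ is that constant maximum occupancy. So the heart of the argument is a bound on the maximum load of any bin.

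The key step is therefore: with high probability, every bin receives at most $O(\log \nballs / \log\log \nballs)$ balls, and in fact the total count of balls in heavy bins is $O(\log \nballs / \log\log \nballs)$. This can be extracted from the balls-and-bins machinery already developed. Using \lemref{L:i:expecation} with $\nbins = c\nballs^2$, the expected number of $i$-heavy balls is $F_i = \nballs (\nballs / (i\nbins))^{i-1} = \nballs (1/(ic\nballs))^{i-1}$, which is $o(1)$ once $i \geq 3$; more precisely, for $i = c_1 \log\nballs / \log\log\nballs$ with $c_1$ large enough, $\Ex{\NHX{i}}$ is polynomially small in $\nballs$, so by Markov's inequality no bin has occupancy $\geq i$ with high probability. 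Combined with Lemma (few:birthdays), which says there are $O(\log\nballs/\log\log\nballs)$ occupied-by-$\geq 2$ bins, and with each such bin holding at most $i = O(\log\nballs/\log\log\nballs)$ balls, a trivial bound gives $O((\log\nballs/\log\log\nballs)^3)$ colliding pairs — which is not yet tight enough.

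To get the claimed $O(\log\nballs/\log\log\nballs)$ I would instead count more carefully via the second part of \lemref{L:i:expecation}: the expected number of colliding pairs of $2$-heavy balls (i.e. all colliding pairs) is $\beta_2 = O(\nballs \cdot 2 \cdot (e\nballs/(2\nbins))^{1}) = O(\nballs \cdot \nballs / \nbins) = O(1/c)$. Thus the expected total number of colliding pairs is an absolute constant, and the result is then a direct application of a Chernoff-type bound exactly as in the footnote of Lemma (few:birthdays): with $\mu = O(1)$ fixed and $\delta = c_1 \log\nballs / \log\log\nballs$, \thmref{Chernoff:simplified} (the regime for large deviations, $\Eqref{v:large}$) gives that the count exceeds $T = O(\log\nballs/\log\log\nballs)$ with probability at most $\nballs^{-O(1)}$. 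The main subtlety — and the only place any real care is needed — is that colliding pairs are *not* independent events, so one must argue the Chernoff bound applies: either appeal to a negative-association / martingale version of the inequality, or (cleanly) reuse the partition-and-matching trick from the proof of Lemma (few:birthdays), partitioning the balls into two halves $\BSetA, \BSetB$ so that cross-pairs *are* independent given one side, bounding cross-colliding pairs by Chernoff, and then averaging over $O(\log\nballs)$ random matchings so that every pair is a cross-pair in a constant fraction of trials. I expect this last step — transferring the argument from bins to pairs while keeping the events amenable to a concentration bound — to be the only real obstacle; everything else is a direct citation of the lemmas and Chernoff bounds already in hand.
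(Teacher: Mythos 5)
There is a genuine gap, and ironically you walked right past the correct argument. Your first route (bound the maximum bin occupancy, then multiply the bin count from \lemref{few:birthdays} by a constant number of pairs per bin) is exactly the right idea, but you chose the threshold $i$ to grow like $\log \nballs/\log\log\nballs$, which is why you only got $O((\log\nballs/\log\log\nballs)^3)$. With $\nbins = c\nballs^2$ bins, a \emph{constant} threshold already suffices: by \lemref{L:i:expecation}, the expected number of collisions involving $i$-heavy balls is $O\bigl(\nballs\, i\, (e\nballs/(i\nbins))^{i-1}\bigr) = O(\nballs^{2-i})$ (up to constants depending on $i$ and $c$), which is polynomially small for any fixed $i \geq 4$; Markov's inequality then says that, with high probability, there is \emph{no} collision involving an $i$-heavy ball, i.e.\ every bin with a collision holds between $2$ and $i-1$ balls. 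Combining this with \lemref{few:birthdays}, each of the $O(\log\nballs/\log\log\nballs)$ heavy bins contributes at most $\binom{i-1}{2} = O(1)$ colliding pairs, giving the claimed bound. This is precisely the paper's proof; your only error in this branch was not noticing that the quadratic number of bins makes the failure probability polynomially small already at constant $i$ (with the polynomial degree tunable by increasing the constant).

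The second route you pivoted to does not work as stated. The colliding-pair indicators are \emph{not} negatively associated: for three balls $a,b,c$, the event that $\{a,b\}$ collide and $\{a,c\}$ collide forces $\{b,c\}$ to collide, and indeed $\Ex{X_{ab}X_{ac}X_{bc}} = 1/\nbins^2 \gg 1/\nbins^3 = \Ex{X_{ab}}\Ex{X_{ac}X_{bc}}$, so no off-the-shelf Chernoff bound for negatively associated or independent variables applies to the pair count with $\mu = O(1)$. A martingale/bounded-differences bound fares no better, since moving one ball changes the pair count by up to the maximum occupancy and the number of variables is $\nballs$, swamping a deviation of size $\log\nballs/\log\log\nballs$. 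The partition-and-matching trick from \lemref{few:birthdays} also does not transfer directly: in each trial it controls the number of balls of one side landing in bins occupied by the other side, whereas the number of cross-colliding \emph{pairs} is that count weighted by per-bin occupancies, so you again need a constant cap on occupancy to avoid losing a factor -- and once you have that cap (via the Markov argument above), the detour through Chernoff and random matchings is unnecessary.
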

\begin{proof}
    Let $i$ be a sufficiently large constant (in
    particular $i \geq e/c$).  By \lemref{L:i:expecation}, the
    expected number of collisions of pairs of $i$-heavy balls is
    \begin{math}
        \beta_i%
        = %
        O \bigl( \nballs i \pth{\frac{e \nballs}{i
              \nbins}}^{i-1}\bigr)%
        =%
        O \bigl( \nballs i / \nballs^{i-1}\bigr)%
        =%
        O \bigl( 1 / \nballs^{i-3}\bigr).%
    \end{math}
    As such, by Markov's inequality, the probability that there is any
    collisions involving $i$-heavy balls is at most $\beta_i$.  As for
    collisions of pairs that are not $i$-heavy, by
    \lemref{few:birthdays}, with high probability, there are at most
    $O( \log \nballs / \log \log \nballs)$ bins that contains between
    $2$ and $i-1$ balls, and each such bin contributes at most
    $\binom{i-1}{2}$ colliding pairs. We conclude, that with high
    probability, the total number of collisions is
    \begin{math}
        O( i^2 \log \nballs / \log \log \nballs )%
        =%
        O( \log \nballs/ \log \log \nballs),
    \end{math}
    as claimed.
\end{proof}

\begin{remark}
    Somewhat disappointingly, the upper bound
    $O( \log \nballs / \log \log \nballs)$ on the number of colliding
    balls, in \lemref{num:colliding:pairs:for:fun}, is tight if the probability of
    success is required to be $\geq 1-1/\nballs^\tau$, where $\tau$ is
    some constant. To see that, consider the partition
    $(\BSetA, \BSetB)$ from the proof of \lemref{few:birthdays}. With
    high probability, the number of bins containing balls of $\BSetA$
    is $\geq \nballs/4$ -- this follows by similar to, but easier,
    argument to the one used in the proof of
    \lemref{L:i:concentration}. As such, the probability of a ball of
    $\BSetB$ to collide with a ball of $\BSetA$ is at least
    $p = 1/(4 c \nballs)$. Thus, the probability that exactly $i$ such
    collisions to happen is at least
    \begin{math}
        \binom{\nballs / 2}{i} p^i ( 1-p)^{\nballs/2 - i}%
        \geq%
        \pth{\frac{\nballs/2}{i}}^i \pth{ \frac{1}{4 c \nballs}}^i =%
        \frac{1}{(8ci)^i}.
    \end{math}
    If we require the last probability to be larger than $1/\nballs^\tau$,
    then we have
    \begin{math}
        \nballs^\tau \geq (8ci)^i
    \end{math}
    $\iff$
    \begin{math}
        \tau \ln \nballs \geq i \ln (8ci),
    \end{math}
    which holds for $i = \Theta( \log \nballs / \log \log \nballs )$,
    as $\tau$ and $c$ are constants.
\end{remark}

\subsection{How many collisions are there, anyway?}

It is useful to think about the point set $\PSet$ as being generated
by throwing $\nballs =n$ balls into $\nbins = N^2 = n^{4/3}$ bins --
here every grid cell is a bin.  \lemref{L:i:expecation} and
\lemref{L:i:concentration} together implies the following.

\begin{corollary}
    \corlab{our:case}%
    When throwing $n$ balls into $n^{4/3}$ bins, we have, with high
    probability, that $\NHX{2} = \Theta(n^{2/3})$ and
    $\NHX{3} = \Theta(n^{1/3})$.
\end{corollary}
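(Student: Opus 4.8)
The plan is to instantiate \lemref{L:i:expecation} and \lemref{L:i:concentration} with $\nballs = n$ and $\nbins = n^{4/3}$, for the two fixed constants $i = 2$ and $i = 3$. Both lemmas require $\nbins \geq 3\nballs$, which holds here for every $n \geq 27$ since $n^{4/3} \geq 3n \iff n^{1/3}\geq 3$. First I would evaluate $F_i = \nballs\pth{\nballs/(i\nbins)}^{i-1}$ in the two cases: $F_2 = n\cdot \frac{n}{2n^{4/3}} = \frac{1}{2}n^{2/3}$ and $F_3 = n\cdot\pth{\frac{n}{3n^{4/3}}}^2 = \frac{1}{9}n^{1/3}$. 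Since the constants $e^{-2}$ and $6e^{i-1}$ in \lemref{L:i:expecation} are absolute for fixed $i$, this already yields $\Ex{\NHX{2}} = \Theta(n^{2/3})$ and $\Ex{\NHX{3}} = \Theta(n^{1/3})$.

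Next I would transfer these estimates to the medians $\medianC_2 = \medianX{\NHX{2}}$ and $\medianC_3 = \medianX{\NHX{3}}$. Markov's inequality gives $\medianC_i \leq 2\Ex{\NHX{i}}$ immediately; for the matching lower bound I would use a second moment (Chebyshev / Paley--Zygmund) argument: the relevant bound on $\Ex{\NHX{i}^2}$ is $(1+o(1))\Ex{\NHX{i}}^2$ plus a term of order the expected number of colliding $i$-heavy pairs, and by the second part of \lemref{L:i:expecation} that count is $O(n^{2/3})$ for $i=2$ and $O(n^{1/3})$ for $i=3$, hence lower order in each case. This shows $\medianC_i = \Theta(\Ex{\NHX{i}})$, so $\medianC_2 = \Theta(n^{2/3})$ and $\medianC_3 = \Theta(n^{1/3})$; in particular both exceed $16 i^2 c\log n$ once $n$ is larger than a constant, for any fixed $c$, which is exactly the hypothesis needed to invoke \lemref{L:i:concentration}.

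Finally, I would apply \lemref{L:i:concentration} with a fixed $c > 1$: with probability at least $1 - 1/n^c$ we obtain
\begin{math}
    \cardin{\NHX{i} - \Ex{\NHX{i}}} \leq c'i\sqrt{\medianC_i} + 4i\sqrt{c\,\medianC_i\log n} = O\pth{\sqrt{\medianC_i\log n}}.
\end{math}
For $i=2$ this deviation is $O\pth{\sqrt{n^{2/3}\log n}} = O\pth{n^{1/3}\sqrt{\log n}} = o(n^{2/3})$, and for $i=3$ it is $O\pth{\sqrt{n^{1/3}\log n}} = o(n^{1/3})$, so in each case $\NHX{i}$ stays within a $(1\pm o(1))$ factor of its expectation. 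A union bound over $i \in \brc{2,3}$ then gives $\NHX{2} = \Theta(n^{2/3})$ and $\NHX{3} = \Theta(n^{1/3})$ simultaneously, with high probability.

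I do not expect a genuine obstacle: the corollary is essentially a direct substitution into the two preceding lemmas. The only point requiring a little care is the mild interdependence between $\medianC_i$ and $\Ex{\NHX{i}}$ appearing in the hypothesis of \lemref{L:i:concentration}, which the short second moment estimate above resolves; everything else is arithmetic in $n$.
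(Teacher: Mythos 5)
Your proposal is correct and follows essentially the same route as the paper, which gives no explicit proof and simply notes that \lemref{L:i:expecation} and \lemref{L:i:concentration} together imply the corollary with $\nballs=n$, $\nbins=n^{4/3}$, and $i\in\{2,3\}$; your arithmetic ($F_2=\Theta(n^{2/3})$, $F_3=\Theta(n^{1/3})$, deviations $O(\sqrt{\medianC_i\log n})$ being lower order) matches what the paper intends. Your extra second-moment step to certify the hypothesis $\medianC_i\geq 16i^2c\log\nballs$ of \lemref{L:i:concentration} (rather than circularly using that lemma's conclusion relating $\medianC_i$ to $\Ex{\NHX{i}}$) is a legitimate and welcome filling-in of a detail the paper leaves implicit.
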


\begin{lemma}
    \lemlab{many:pairs}%
    Let $\PSet$ be a set of $n$ random point picked uniformly in the
    unit square. Let $Z$ be the number of active grid cells -- namely,
    the number of grid cells that contains two or more points of
    $\PSet$. We have, with high probability, that $Z \geq n^{2/3}/c'$,
    where $c'$ is a small constant.
\end{lemma}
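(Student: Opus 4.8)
The plan is to derive the claim directly from \corref{our:case} together with \lemref{L:i:expecation}. Recall that $Z$ counts \emph{active} grid cells (those with $\geq 2$ points), while $\NHX{2}$ counts \emph{balls} that lie in such cells; clearly $2Z \leq \NHX{2}$ since every active cell contributes at least two balls to $\NHX{2}$, so an upper bound on $\NHX{2}$ alone is not enough — I need to rule out the scenario where $\NHX{2} = \Theta(n^{2/3})$ is explained by a few very heavy cells rather than by $\Theta(n^{2/3})$ lightly-loaded cells.

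First I would write $\NHX{2} = \NHX{2} - \NHX{3} + \NHX{3}$, and observe that $\NHX{2} - \NHX{3}$ is exactly the number of balls lying in cells containing \emph{exactly} two balls, hence equals $2 Z_2$, where $Z_2$ is the number of cells with exactly two points. Since $Z \geq Z_2$, it suffices to lower bound $Z_2$. By \corref{our:case}, with high probability $\NHX{2} = \Theta(n^{2/3})$ and $\NHX{3} = \Theta(n^{1/3})$; more precisely, from \lemref{L:i:expecation} with $\nballs = n$, $\nbins = n^{4/3}$, we get $\Ex{\NHX{2}} \geq e^{-2} F_2 = e^{-2}\cdot 2 \cdot n \cdot (n/(2 n^{4/3})) = (e^{-2}) n^{2/3}$ up to the constant in $F_i$, while $\Ex{\NHX{3}} \leq 6 e^2 F_3 = O(n^{1/3})$. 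Feeding these into the concentration bound of \lemref{L:i:concentration} (with a suitable constant $c$, noting the median is within $O(\sqrt{n^{2/3}}) = O(n^{1/3})$ of the expectation, which is $o(n^{2/3})$) shows that, with high probability, $\NHX{2} \geq n^{2/3}/c_1$ and $\NHX{3} \leq c_2 n^{1/3}$ for appropriate constants $c_1, c_2$.

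Then, with high probability,
\begin{math}
    2 Z \geq 2 Z_2 = \NHX{2} - \NHX{3} \geq n^{2/3}/c_1 - c_2 n^{1/3} \geq n^{2/3}/(2 c_1)
\end{math}
for $n$ sufficiently large, since $n^{1/3} = o(n^{2/3})$. Hence $Z \geq n^{2/3}/c'$ with $c' = 4 c_1$, which is a small constant, and we are done after taking a union bound over the two high-probability events. The only mildly delicate point is bookkeeping the constants so that the subtracted $\NHX{3}$ term (and the median-vs-expectation slack) is genuinely absorbed into the leading $n^{2/3}$ term; this is routine given the clean separation of exponents $2/3$ versus $1/3$. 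I expect no real obstacle — the main conceptual step is simply recognizing that $Z_2 = (\NHX{2}-\NHX{3})/2$ converts the ball-count statement of \corref{our:case} into the cell-count statement asserted here.
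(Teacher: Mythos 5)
Your proposal is correct and is essentially identical to the paper's proof, which reads in its entirety: by \corref{our:case}, $Z \geq (\NHX{2} - \NHX{3})/2 = \Theta(n^{2/3})$. You have simply unpacked the same identity (cells with exactly two balls contribute $\NHX{2}-\NHX{3}$ balls) and the same appeal to the concentration of $\NHX{2}$ and $\NHX{3}$ in more detail.
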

\begin{proof}
    By \corref{our:case},
    $Z \geq (\NHX{2} - \NHX{3}) /2 = \Theta(n^{2/3})$.
\end{proof}

\subsubsection{A single line can not be involved in too %
   many active cells}

\begin{lemma}
    \lemlab{line:is:useless}%
    %
    Let $S$ be a given set of $2N$ grid cells. A cell of $S$ is
    \emphi{active} if it contains two or more points of $\PSet$. Let
    $Y$ be the number of cells of $S$ that are active. We have that
    $Y = O( \log n / \log \log n)$, with high probability (i.e.,
    $\geq 1-1/n^{O(1)}$).
\end{lemma}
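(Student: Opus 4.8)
The plan is to recognize that, once we restrict attention to the $2N$ cells of $S$, the situation is an instance of the birthday paradox already analyzed in \lemref{few:birthdays}. Each point of $\PSet$ falls into a cell of $S$ with probability $\cardin{S}/N^2 = 2N/N^2 = 2/n^{2/3}$, so the number $k$ of points of $\PSet$ lying in cells of $S$ has expectation $\Ex{k} = 2n/N = 2 n^{1/3}$. Thus we are effectively throwing $\Theta(n^{1/3})$ balls into $2N = 2n^{2/3}$ bins, i.e., precisely the regime $\nbins = \Theta(\nballs^2)$.

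First I would control the number of relevant points: $k$ is a sum of $n$ independent indicator variables with mean $2n^{1/3}$, so Chernoff's inequality gives $\Prob{k > 4 n^{1/3}} \le \exp\pth{-\Omega(n^{1/3})} \le 1/n^{O(1)}$. Condition on the value of $k$. Since all grid cells have equal area, conditioned on exactly $k$ of the points landing inside $S$, those $k$ points are independent and uniform over the $2N$ cells of $S$. The number of active cells is monotone non-decreasing in the number of balls thrown, so by a coupling that adds $4n^{1/3}-k$ dummy balls, on the event $k \le 4 n^{1/3}$ the variable $Y$ is stochastically dominated by the number $Y'$ of active cells obtained when $m := 4n^{1/3}$ balls are thrown uniformly into the $2N$ cells of $S$.

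Now $2N = 2 n^{2/3} = m^2/8$, so throwing $m$ balls into the cells of $S$ is exactly the setting of \lemref{few:birthdays}, with $\nballs = m$ and $\nbins = m^2/8$. Consequently $Y' = O(\log m / \log\log m) = O(\log n/\log\log n)$ with probability at least $1 - 1/m^{O(1)} = 1 - 1/n^{O(1)}$, where the exponent may be taken arbitrarily large. A union bound over this event and $\{k \le 4n^{1/3}\}$ then gives $Y = O(\log n/\log\log n)$ with high probability.

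The mathematical content here is essentially all contained in \lemref{few:birthdays}; the only thing that requires care is separating the two sources of randomness (how many points reach $S$, and where inside $S$ they go) and ensuring the final failure probability is polynomially small in $n$ rather than in the much smaller quantity $k \approx n^{1/3}$ -- which is precisely why I apply \lemref{few:birthdays} with the clean value $\nballs = 4n^{1/3}$ via the dummy-ball monotonicity argument, instead of with the random value $k$.
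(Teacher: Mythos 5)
Your proof is correct and follows essentially the same route as the paper: bound the number of points landing in $S$ by $O(n^{1/3})$ via Chernoff, then treat those points as balls thrown into the $2N = \Theta(n^{2/3})$ cells of $S$ and invoke \lemref{few:birthdays}. The only difference is that you spell out the conditional uniformity and the dummy-ball monotone coupling, details the paper's proof glosses over by simply saying ``we are throwing at most $3n^{1/3}$ balls into $2N$ bins.''
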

\begin{proof}
    For any $i$, let $X_i$ be the indicator variable that is one if
    the $i$\th point of $\PSet$ falls into a cell of $S$, and let
    $Y= \sum_{i=1}^n X_i$.  The probability of a point $\pa \in \PSet$
    to fall into a cell of $S$ is at most $p' = p 2N = 2/N$. As such,
    $\mu = \Ex{Y} = n p' \leq 2 n /N = 2n^{1/3}$.  By Chernoff's
    inequality (\thmref{Chernoff:simplified}), we have that
    \begin{align*}
      \Prob{Y \geq 3 n^{1/3}}
      \leq %
      \Prob{ Y \geq  (1+1/2)\mu} %
      \leq%
      \exp \pth{ - \frac{\mu (1/2)^2}{4}}%
      \leq%
      \exp \pth{ - \frac{n^{1/3}}{8} }.%
    \end{align*}
    As such, from this point on, we assume that $Y \leq 3 n^{1/3}$.
    Thus, we are throwing at most $3 n^{1/3}$ balls into
    $2N = 2n^{2/3}$ bins.  By \lemref{few:birthdays}, with high
    probability, there are at most $O( \log n / \log \log n)$ bins
    with two or more balls.
\end{proof}

\subsubsection{The result}

\begin{theorem}
    \thmlab{lines:sep:r:points}%
    Let $\PSet$ be a set of $n$ points picked uniformly and randomly
    from the unit square. Then, with high probability, the minimum
    number of lines separating $\PSet$ is
    $\Omega(n^{2/3} \log \log n /\log n)$.
\end{theorem}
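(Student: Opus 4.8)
The plan is to combine three ingredients, following the sketch in the introduction: (a) there are $\Omega(n^{2/3})$ \emph{active} grid cells (cells containing two or more points of $\PSet$), which follows immediately from \lemref{many:pairs}; (b) any single line passes through only few active cells; and (c) there are only polynomially many combinatorially distinct lines, so a union bound over lines is affordable. The key point is that a line $\Line$ that separates $\PSet$ must, for every active cell $\Cell$, ``do something'' inside $\Cell$ only if it helps separate the points there --- but a pair of points in a common cell can be separated only by a line crossing that cell. Hence the set $\LSet$ of separating lines must, collectively, stab every active cell, and each line stabs few of them.

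First I would make precise the notion of ``combinatorially distinct line.'' The grid $\Grid$ consists of $2(N-1)$ segments (the grid lines clipped to the unit square, with $N = n^{2/3}$). A line $\Line$ is determined, as far as which grid cells it crosses, by the sequence of grid edges it intersects; equivalently, by its position relative to the $O(N^2) = O(n^{4/3})$ vertices of $\Grid$. Two lines that lie in the same cell of the arrangement of the $O(n^{4/3})$ lines \emph{dual} to these grid vertices intersect exactly the same set of grid cells. The arrangement of $m$ lines in the plane has $O(m^2)$ faces, so there are $O((n^{4/3})^2) = O(n^{8/3})$ combinatorially distinct lines; a cruder and sufficient bound is $O(n^{3})$, obtained by noting a line is determined up to combinatorial type by choosing two of the $O(n^{4/3})$ grid vertices (actually grid edges) it separates --- any bound polynomial in $n$ suffices. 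Fix $\mathcal{C}$ to be one representative line from each combinatorial class; $\cardin{\mathcal{C}} = n^{O(1)}$.

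Next I would apply \lemref{line:is:useless}. For a single fixed line $\Line \in \mathcal{C}$, the set of grid cells it crosses has size at most $2N$ (it crosses at most $N$ vertical and $N$ horizontal grid strips). By \lemref{line:is:useless}, with probability $\geq 1 - 1/n^{O(1)}$, at most $O(\log n/\log\log n)$ of these crossed cells are active. Union-bounding over all $n^{O(1)}$ lines in $\mathcal{C}$ --- and choosing the constant in the failure exponent of \lemref{line:is:useless} large enough to beat the $n^{O(1)}$ loss, which is legitimate since the constant $c$ there is arbitrary --- we get that, with high probability, \emph{every} line in the plane crosses at most $O(\log n /\log\log n)$ active cells. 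Simultaneously, by \lemref{many:pairs}, with high probability the number of active cells is $Z \geq n^{2/3}/c'$. Intersect these two high-probability events.

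Finally, the counting argument. Condition on both events holding. Let $\LSet$ be any set of lines separating $\PSet$. For each active cell $\Cell$, pick a pair of points $\pa,\pb \in \PSet$ lying in $\Cell$; since $\LSet$ separates them, some line of $\LSet$ crosses the segment $\pa\pb$, hence crosses $\Cell$. Thus every one of the $\geq n^{2/3}/c'$ active cells is crossed by some line of $\LSet$. But each line of $\LSet$ crosses at most $O(\log n/\log\log n)$ active cells. Therefore
\[
\cardin{\LSet} \;\geq\; \frac{n^{2/3}/c'}{O(\log n/\log\log n)} \;=\; \Omega\!\pth{\frac{n^{2/3}\log\log n}{\log n}},
\]
which, since $\LSet$ was an arbitrary separating set, gives $S_n = \indexX{\PSet} = \Omega(n^{2/3}\log\log n/\log n)$ with high probability, as claimed.

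The main obstacle is step three, the union bound: one must verify that the number of combinatorially distinct lines is genuinely polynomial in $n$ (not, say, exponential), and that \lemref{line:is:useless} is being applied to a \emph{fixed} set $S$ of cells before the random points are thrown --- the high-probability statement there is per fixed $S$, so we are free to union-bound over the $n^{O(1)}$ relevant $S$'s (one per combinatorial line class), and we must pick the hidden constant in that lemma large enough. Everything else is bookkeeping: \lemref{many:pairs} hands us the lower bound on active cells directly, and the final counting inequality is a one-line double-counting argument. One should also remark that the general-position / ``lines through points'' subtleties (a separating line is required to have both points strictly on opposite sides) do not affect the argument, since a line separating a pair in a common cell must cross the open cell regardless.
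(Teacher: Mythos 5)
Your proposal is correct and follows essentially the same route as the paper: bound the number of combinatorially distinct lines by $n^{O(1)}$ via signatures, apply \lemref{line:is:useless} to each fixed signature and union-bound, invoke \lemref{many:pairs} for the $\Omega(n^{2/3})$ active cells, and finish with the double-counting inequality. Your explicit remark that \lemref{line:is:useless} must be applied to a fixed set of cells \emph{before} the points are thrown (hence the need for the representative set and the union bound) is exactly the role the set $\LSetA$ plays in the paper's argument.
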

\begin{proof}
    We remind the reader that $\Grid$ is the grid partitioning the
    unit square into $N \times N$ cells, where $N = {n^{2/3}}$.  For a
    line $\Line$ that avoids the vertices of $\Grid$, consider the set
    of grid cells that it intersects, formally
    \begin{math}
        B(\Line) = \Set{ \Cell}{\Cell \in G \text{ and } \Cell \cap
           \Line \neq \emptyset}.
    \end{math}
    Since $\Line$ intersects $\leq N-1$ horizontal and $\leq N-1$
    vertical lines of the grid inside the unit square, it follows that
    $\cardin{B(\Line)} \leq 2N-1$. Fix an arbitrary ordering of the
    cells of $\Grid$, and add cells according to this ordering to
    $B(\Line)$ till this set is of size $2N$. The resulting set,
    $\sgnX{\Line}$ is the \emphi{signature} of $\Line$.

    Let $\LSetA$ be a set of representative lines. Specifically, among
    all lines with the same signature, pick one of them to be in
    $\LSetA$. It is easy to verify that
    $\cardin{\LSetA} = O(N^4) = O(n^{3})$.

    We are now ready for the proof itself. Consider the randomly
    generated point set $\PSet$. We consider two points to be
    separated if they belong to different grid cells. As such, we only
    remain with the task of separating points that collide in the grid
    (i.e., belong to the same grid cell). So consider a minimal
    separating set of lines $\LSet$. A line in $\LSet$ intersects
    $\leq 2N$ cells of the grid, and by \lemref{line:is:useless}, with
    high probability, its signature contains at most
    $T = O( \log n/ \log \log n)$ active grid cells. Namely, each such
    line can at best only separates pairs that belong to these active
    cells.

    However, \lemref{many:pairs} implies that, with high probability,
    the number of active grid cells is at least $n^{2/3}/c'$, where
    $c'$ is some constant. It follows that any set of lines that
    separates all the pairs of points that collide, must be of size
    $\geq (n^{2/3}/c')/T$, with high probability.
\end{proof}


\subsection{Extensions}

\subsubsection{Higher dimensions}

One can easily extend the two dimensional analysis to higher
dimensions. We quickly sketch the calculations without going into the
low level details, which follows readily by retracing the same
argumentation.

In the following $f \approx g$, means that $f = \tldTheta(g)$. We now
consider the unit cube $[0,1]^d$. As before, we partition it into
$N^d$ grid cells, in the natural way, where the value of $N$ is to be
determined shortly. Let $G$ denote the resulting grid.  An hyperplane
intersects at most $H \approx N^{d-1}$ grid cells. We would like to
guarantee that that there are $\approx O(1)$ cells that contain two
and more points, for a fixed hyperplane $h$. By the birthday paradox,
this means that we should have at most $\approx \sqrt{H}$ random
points falling into the $H$ cells associated with $h$, if we want a
constant number of collisions. Sine the probability of a point to fall
into a grid cell that $h$ intersects is $H/N^{d}$, we get that
\begin{align*}
  \sqrt{H} \approx n H/N^d \approx n/ N
  \qquad\implies\qquad%
  N^{(d+1)/2} = n H/N^d = n
  \qquad\implies\qquad%
  N = n^{2/(d+1)}.
\end{align*}
The overall number of grid cells that contain two or more points is
\begin{align*}
  \approx n^2 /N^d%
  =%
  n^{2- 2d/(d+1)}%
  =%
  n^{2/(d+1)}.
\end{align*}
Finally, with high probability, a hyperplane can intersects only
$\approx O(1)$ active grid cells, which means that the number of
hyperplanes needed to separate $n$ random points is
$\approx n^{2/(d+1)} / O(1)$.

\begin{corollary}
    Let $\PSet$ be a set of $n$ points picked uniformly and randomly
    from the unit cube $[0,1]^d$. Then, with high probability, the
    minimum number of hyperplanes separating $\PSet$ is
    \begin{math}
        \Omega( n^{2/(d+1)} \allowbreak \log \log n \allowbreak /\log
        n).
    \end{math}
    Similarly, in expectation, one can separate $\PSet$ using
    $O(d n^{2/(d+1)})$ hyperplanes.
\end{corollary}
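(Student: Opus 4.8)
The plan is to retrace the two–dimensional argument of \lemref{r:p:upperbound} and \thmref{lines:sep:r:points} essentially verbatim, replacing lines by hyperplanes and re-balancing the exponents. Fix the axis-parallel grid $G$ that subdivides $[0,1]^d$ into $N^d$ congruent cells with $N = n^{2/(d+1)}$; this grid is cut out by $d(N-1) = O(d\,n^{2/(d+1)})$ hyperplanes, and a single random point lands in any prescribed cell with probability $1/N^d$.

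For the upper bound I would declare two points separated as soon as they occupy distinct cells of $G$, and then add one arbitrary separating hyperplane per colliding pair, bounding the expectation exactly as in \lemref{colliding:pairs:ex}: the expected number of colliding pairs is $\binom{n}{2}/N^d \le n^2/(2N^d) = n^{2/(d+1)}/2$, so in expectation the whole construction uses $O(d\,n^{2/(d+1)})$ hyperplanes.

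For the lower bound I would assemble the same three ingredients used in \thmref{lines:sep:r:points}. (i) A hyperplane $h$ avoiding the grid vertices meets only $O(N^{d-1})$ cells of $G$: the $d(N-1)$ grid hyperplanes restrict to an arrangement of $O_d(N)$ hyperplanes inside the $(d-1)$-flat $h$, which has $O_d(N^{d-1})$ full-dimensional faces, each lying in a single grid cell; padding this list to a canonical superset of exactly $H = \Theta(N^{d-1})$ cells defines the signature $\sgnX{h}$. (ii) The birthday step, generalizing \lemref{line:is:useless}: a random point lands in the $H$ cells of a fixed signature with probability $H/N^d = \Theta(1/N)$, so by Chernoff's inequality at most $O(n/N) = O(n^{(d-1)/(d+1)})$ of the $n$ points do so, with high probability; conditioning on this event we are throwing $\nballs = O(n^{(d-1)/(d+1)})$ balls into $\nbins = H = \Theta(N^{d-1}) = \Theta(\nballs^2)$ bins — and the whole point of choosing $N = n^{2/(d+1)}$ is that it makes $\nbins \asymp \nballs^2$ — so \lemref{few:birthdays} applies and at most $O(\log n/\log\log n)$ of these cells are active. (iii) The signature of a hyperplane changes only when it sweeps past one of the $(N+1)^d$ grid vertices, so in the $d$-dimensional space of hyperplanes the set of ``event'' hyperplanes has size $O(N^d)$ and partitions that space into $N^{O(d)} = n^{O(1)}$ regions of constant signature; picking one representative line per signature yields a family $\LSetA$ with $\cardin{\LSetA} = n^{O(1)}$, and a union bound over $\LSetA$ upgrades (ii) to: with high probability \emph{every} hyperplane has $O(\log n/\log\log n)$ active cells in its signature. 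Finally, the higher-dimensional counterpart of \corref{our:case} and \lemref{many:pairs}, obtained from \lemref{L:i:expecation} and \lemref{L:i:concentration} with $\nballs = n$ and $\nbins = N^d$ (which give $\Ex{\NHX{2}} = \Theta(n^{2/(d+1)})$, strongly concentrated, while $\NHX{3} = o(\NHX{2})$ for every fixed $d \ge 2$), shows that with high probability at least $\Omega(n^{2/(d+1)})$ grid cells are active, hence $\Omega(n^{2/(d+1)})$ colliding pairs must be broken. Since each hyperplane can only break pairs lying inside its $O(\log n/\log\log n)$ active cells, at least $\Omega(n^{2/(d+1)}\log\log n/\log n)$ hyperplanes are required.

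The only steps that demand genuine (if routine) attention beyond copying the planar proof are the two combinatorial-geometry estimates — ``a hyperplane meets $O(N^{d-1})$ grid cells'' and ``there are $n^{O(1)}$ distinct signatures'' — and the bookkeeping verification that the single value $N = n^{2/(d+1)}$ simultaneously balances the upper-bound pair count, the birthday regime $\nbins \asymp \nballs^2$, and the estimate $\Ex{\NHX{2}} = \Theta(n^{2/(d+1)})$. I expect this last balancing act to be the main place where care is needed, though no new idea is involved.
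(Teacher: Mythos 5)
Your proposal is correct and follows essentially the same route as the paper, which proves the upper bound by the same grid/collision count with $N=n^{2/(d+1)}$ and disposes of the lower bound by ``plugging the sketch into the two-dimensional analysis'' -- exactly the retracing you carry out (hyperplane meets $O_d(N^{d-1})$ cells, birthday regime $\nbins\asymp\nballs^2$, polynomially many signatures, concentration of the number of active cells via \lemref{L:i:expecation} and \lemref{L:i:concentration}). If anything, you supply more detail than the paper does, and the balancing of exponents you flag indeed works out as claimed.
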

\begin{proof}
    The lower bound follows by plugging in the above sketch, into the
    detailed analysis of the two dimensional case.

    As for the upper bound. In the grid $G$, the volume of each grid
    cell is $p = 1/N^d = 1/n^{2d/(d+1)}$. As such, the expected number
    of collisions happening inside the grid cells is
    \begin{math}
        \Ex{Z} = \binom{n}{2} p \leq n^2/2n^{2d/(d+1)} =
        O(n^{2/(d+1)}).
    \end{math}
    We separate each such colliding pair by its own hyperplane.  Note,
    that creating the grid $G$, requires $d(N-1)$ separating
    hyperplanes. As such, the expected number of separating
    hyperplanes one needs is at most
    $O(dN + \Ex{Z}) = O(dn^{2/(d+1)})$.
\end{proof}

\begin{remark}
    A set of $n$ points of the grid
    $n^{1/d} \times \cdots \times n^{1/d}$ in $\Re^d$ requires
    $dn^{1/d}$ hyperplanes to separate them. As such, the gap
    demonstrated in two dimensions also holds in higher dimensions.
\end{remark}

\subsubsection{Allowing more points to collide}

Here, we change the problem -- we allow groups of up to $t$ points to
not be separated by the points.

\begin{lemma}
    Given a set $\PSet$ of $\nballs$ random points thrown uniformly,
    independently and randomly into $[0,1]^2$, and let $t > 1$ be a
    fixed constant integer. Then, in expectation, there is a set
    $\LSet$ of $O(\nballs^{(t+1)/(2t+1)})$ lines, such that every face
    of $\ArrX{\LSet}$ contains at most $t$ points of $\LSet$.
\end{lemma}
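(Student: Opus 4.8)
The plan is to generalize the grid-based upper-bound argument of \lemref{r:p:upperbound} (which is the case $t=1$) to an arbitrary constant $t$. Fix $N = \ceil{\nballs^{(t+1)/(2t+1)}}$ and let $\Grid$ be the uniform $N \times N$ grid on the unit square. Creating $\Grid$ costs $2(N-1) = O(N)$ lines, and it already puts any two points lying in distinct cells into distinct faces. What remains is, for every grid cell $\Cell$ containing $k = k_\Cell > t$ points of $\PSet$, to add a few more lines so that no group of more than $t$ of those points survives in a common face.

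For such a cell I would repeatedly bisect: pick a line that splits the current group of points into two groups of (almost) equal size -- elementary for points in general position, for instance by sweeping a line and stopping it between two consecutive points, so that it misses $\PSet$ -- and recurse on each half, stopping once every group has at most $t$ points. The recursion tree has $O(\log(k/t))$ levels and uses $1 + 2 + \cdots < 2k/t$ lines, so $g(k) = O(k/t) = O(k)$ lines suffice for a cell with $k$ points (the hidden constant depending only on $t$). A short robustness observation is needed here: the lines added for $\Cell$, and the lines added for other cells (which may also cross $\Cell$), only refine the face structure inside $\Cell$, and any two points that land in different groups of the recursion are separated by one of the added lines. Hence, since $\PSet$ is a.s.\ disjoint from all the finitely many chosen lines, every face of $\ArrX{\LSet}$ lies inside a single grid cell and contains at most $t$ points of $\PSet$.

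The heart of the bound is the expected number of extra lines. Writing $n_\Cell = \cardin{\PSet \cap \Cell} \sim \mathrm{Binomial}(\nballs, 1/N^2)$ and $\lambda = \nballs/N^2$, symmetry over the $N^2$ cells gives $\Ex{\cardin{\LSet}} \le 2N + O(1)\cdot N^2\, \Ex{ n_\Cell \cdot \mathbf{1}\pbrcx{n_\Cell > t}}$. Using $\binom{\nballs}{j} \le (e\nballs/j)^j$,
\begin{align*}
  \Ex{ n_\Cell \cdot \mathbf{1}\pbrcx{n_\Cell > t}}
  \;\le\;
  \sum_{j=t+1}^{\nballs} j \pth{\frac{e\lambda}{j}}^{\!j}
  \;=\;
  O\pth{\lambda^{t+1}},
\end{align*}
because $\lambda = \nballs^{-1/(2t+1)} \to 0$, so the sum is geometric in $j$ and dominated by its $j = t+1$ term, with all constants depending only on $t$. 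Therefore $\Ex{\cardin{\LSet}} = O\pth{N + N^2 \lambda^{t+1}} = O\pth{N + \nballs^{t+1}/N^{2t}}$, and the choice $N = \nballs^{(t+1)/(2t+1)}$ balances the two terms at $\Theta(N)$, yielding $\Ex{\cardin{\LSet}} = O\pth{\nballs^{(t+1)/(2t+1)}}$, as claimed.

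I expect the only mildly delicate points to be (i) the robustness claim -- that the extra cuts coming from other cells can never merge two surviving groups, which holds because distinct groups are pairwise separated by lines already in $\LSet$ -- and (ii) pinning down $\sum_{j>t} j(e\lambda/j)^j = O(\lambda^{t+1})$ together with the fact that the suppressed constants grow only with $t$, which is fixed. Both are routine once the setup is in place; the combinatorial core is exactly that of \lemref{r:p:upperbound} -- grid plus local splitting, balanced -- with a Poisson-type tail estimate at the level of a single grid cell replacing the crude ``number of colliding pairs'' count used there.
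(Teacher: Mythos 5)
Your proposal is correct and follows essentially the same route as the paper: an $N\times N$ grid with $N = \nballs^{(t+1)/(2t+1)}$, plus per-cell fix-up lines whose expected number is controlled by the expected number of points landing in cells with more than $t$ points (your quantity $N^2\,\Ex{n_\Cell\,\mathbf{1}\pbrcx{n_\Cell>t}}$ is exactly the expected number of $(t+1)$-heavy balls that the paper pulls from its balls-into-bins lemma, \lemref{L:i:expecation}). Your write-up merely inlines that tail computation and spells out the splitting and refinement details that the paper leaves implicit.
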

\begin{proof}
    Let $N = \nballs^{(t+1)/(2t+1)}$. And consider the set of lines
    forming the grid $ N \times N$. Let $\nbins = N^2$. Consider the
    distribution of the points of $\PSet$ in the grid cells. Any grid
    cell that contains more than $t$ points, is further split by
    introducing additional lines until every cell in the resulting
    arrangement contains at most $t$ points.

    To bound the number of these additional fix-up lines, recall the
    balls and bins interpretation. By \lemref{L:i:expecation}, the
    number of points that falls into grid cells with $t+1$ or more
    balls is
    \begin{align*}
      \Theta\pth{ \nballs^{t+1}/ \nbins^{t}} %
      =%
      \Theta \pth{ \nballs^{t+1}  / \nballs^{2t(t+1)/(2t+1)} } %
      =%
      \Theta \pth{ \pth{\nballs^{1 - 2t/(2t+1)} }^{t+1}}%
      =%
      O(\nballs^{(t+1)/(2t+1)}).
    \end{align*}
    Clearly, this also provides an upper bound on the number of fix-up
    lines needed.
\end{proof}


\section{Approximating a minimum separating %
   set of lines}
\seclab{approx:min:lines}%

\subsection{Problem statement and a slow algorithm}
\seclab{basic}

Given a set $\PSet$ of $n$ points in general position (i.e., no three
points are colinear) in the plane, our goal is to approximate the
minimal set of lines $\LSet$ separating all the pairs of points of
$\PSet$.

\subsubsection{Reduction to Hitting Set}

Given a set $\PSet$ as above, one can restate the problem as a hitting
set problem. Indeed, let
\begin{math}
    \CLSet = \Set{\lineY{\pA}{\pB} }{ \pA,\pB \in \PSet}
\end{math}
be the set of candidate lines which contain all lines that pass
through every pair of points of $\PSet$, where $\lineY{\pA}{\pB}$
denotes the line passing through $\pA$ and $\pB$.  For each pair of
points $\pA, \pB \in \PSet$, consider the set of all lines of $\CLSet$
that intersect this segment $\pA\pB$:
\begin{align*}
  \LSegY{\pA}{\pB}%
  =%
  \CLSet \cap \pA\pB%
  =%
  \Set{\Line \in \CLSet}{ \pA\pB \cap \Line \neq \varnothing }\!.
\end{align*}
Clearly, any of the lines of $\LSegY{\pA}{\pB}$ separates $\pA$ and
$\pB$. Consider the set system
\begin{equation}
    \eqlab{set-sys}%
    \SetSys%
    =%
    \pth{ \CLSet, \SSets},%
    \qquad%
    \text{where}
    \qquad
    \SSets = \Set{\LSegY{\pA}{\pB} }{ \pA,\pB \in \PSet, \pA \neq \pB}.
\end{equation}

\begin{observation}
    Given a set $L'$ of $m$ lines that separates $\PSet$, there exists
    a subset $L \subseteq \CLSet$ of $m$ lines, such that $\LSet$
    separates $\PSet$.  Indeed, translate and rotate every line of
    $L'$ till it passes through two points of $\PSet$. Clearly, the
    resulting set of lines separates the points of $\PSet$.
\end{observation}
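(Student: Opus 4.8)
The plan is to show that every line of $L'$ can be moved continuously --- first translated, then rotated --- until it passes through two points of $\PSet$, in such a way that no point of $\PSet$ ever crosses the moving line; the set of pairs the line separates then only grows, and the resulting lines, which lie in $\CLSet$, still separate $\PSet$. Throughout, a point lying on a line is regarded as separated by that line from every point not on it (this is the convention under which the reduction to the set system $\SetSys$ is exact).

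Fix a line $\Line' \in L'$. It may be assumed to contain no point of $\PSet$, so it splits $\PSet$ into two nonempty parts $\PSetA$ and $\PSetB$, and the pairs it separates are exactly those of $\PSetA \times \PSetB$. First translate $\Line'$ perpendicular to itself toward $\PSetA$, stopping the instant it first meets a point $\pA \in \PSet$; since we stop at first contact, no point crosses from one side to the other, so at that moment $\pA$ lies on the line and every other point is strictly on the side it started on. Next rotate the line about $\pA$, stopping the instant it first meets a second point $\pB \in \PSet$; again no point other than $\pA$ and $\pB$ changes side. Now $\Line = \lineY{\pA}{\pB} \in \CLSet$, it carries only $\pA$ and $\pB$ (no three points of $\PSet$ are colinear), and every pair separated by $\Line'$ other than $\brc{\pA,\pB}$ still has its two points on opposite closed sides of $\Line$ with at least one of them off $\Line$, hence is separated by $\Line$.

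Doing this for each line of $L'$ produces a set $\LSet \subseteq \CLSet$ with $\cardin{\LSet} \leq m$ (pad to size $m$ if necessary). Every pair of $\PSet$ is then separated by the replacement of whichever line of $L'$ separated it, except possibly the pair $\brc{\pA,\pB}$ onto which a line was snapped. Disposing of this exception is the crux: a priori $\pA$ and $\pB$ can lie on opposite sides of the original $\Line'$ --- which therefore separated them while $\Line$ does not --- and $\Line'$ may be the only line of $L'$ separating that pair. I would remove it by a sharper choice of motion: translate toward whichever of $\PSetA, \PSetB$ has two or more points and, whenever the convex hull of that side has an edge with the other side lying strictly beyond it, snap directly onto that edge, so that $\pA$ and $\pB$ end up on the same side of $\Line'$ --- making the exceptional pair one that $\Line'$ did not separate in the first place. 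A short case analysis shows that the only configurations in which no such hull edge exists force one of the two sides to consist of a single point, a case that is already settled (a line through two points of the larger side, with that lone point strictly on one side of it, works). Even without this refinement, replacing each snapped line by two nearby lines of $\CLSet$ costs only a constant factor, which is immaterial for the algorithmic application.
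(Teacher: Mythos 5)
Your first-contact motion (translate, then rotate, so that no point of $\PSet$ ever strictly crosses the moving line) is exactly the paper's one-line argument, and under the semantics in which this Observation is actually used it already finishes the proof: in the reduction, a line of $\CLSet$ hits the range $\LSegY{\pA}{\pB}$ as soon as it meets the \emph{closed} segment $\pA\pB$, and in particular $\lineY{\pA}{\pB}$ itself belongs to $\LSegY{\pA}{\pB}$. Since no point strictly crosses during your motion, any segment crossed by $\Line'$ ends with its endpoints on opposite closed sides of the snapped line, so the snapped line meets that closed segment and hits the corresponding range --- including the pair it lands on. So the ``exceptional pair'' you elevate to the crux is an artifact of the convention you chose, which is strictly stronger than what $\SetSys$ requires; it is not the convention under which the reduction is exact. (Under the fully strict definition the Observation is not literally true at all --- for three points in convex position no line of $\CLSet$ strictly separates any pair --- so the relaxed, hitting-set reading is clearly the intended one, with strictness recovered later by the constant-factor conversion discussed in the paper's remarks.)

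The repair you propose under your intermediate convention has a genuine gap. The claim ``translate toward whichever of $\PSetA,\PSetB$ has two or more points \ldots{} the only configurations in which no such hull edge exists force one of the two sides to consist of a single point'' is false as stated: take $\PSetA$ to be the three vertices of a triangle with one vertex pointing toward $\Line'$, and $\PSetB$ two points just across $\Line'$, one far to the left and one far to the right. Both sides have at least two points, yet no hull edge of $\PSetA$ has all of $\PSetB$ strictly beyond its supporting line (the base edge has the apex and $\PSetB$ on the same side, and each slanted edge's supporting line has points of $\PSetB$ on both sides). What is true is only the weaker statement that \emph{at least one} of the two sides admits such an edge (in the example, $\PSetB$ does), provided both sides contain at least two points and no three points of $\PSet$ are collinear; this needs an actual argument (e.g., a rotating-supporting-line/normal-cone analysis), not a ``short case analysis,'' and it genuinely fails for $n=2$, where the unique line of $\CLSet$ passes through both points. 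Your fallback is also not available as written: ``two nearby lines of $\CLSet$'' need not exist, since members of $\CLSet$ must pass through two points of $\PSet$, and in any case it would produce up to $2m$ lines rather than the $m$ the Observation asserts.
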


\begin{lemma}
    \lemlab{seg:VC}%
    The set system $\SetSys$ defined by \Eqref{set-sys} has VC
    dimension at most $11$.
\end{lemma}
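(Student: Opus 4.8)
The plan is to pass to the dual plane, where each range $\LSegY{\pA}{\pB}$ becomes a \emphi{double wedge}, and then to bound the VC dimension of the far more permissive range space consisting of \emph{all} double wedges. After a generic rotation of the plane---which changes neither $\SetSys$ nor its VC dimension---no line of $\CLSet$ is vertical, so I may use the standard duality mapping a line $\Line\colon y = ax - b$ to the point $\dualX{\Line} = (a,b)$ and a point $\pA = (c,d)$ to the line $\dualX{\pA}\colon y = cx - d$, which preserves the strict above/below relation ($\pA$ lies strictly above $\Line$ iff $\dualX{\Line}$ lies strictly above $\dualX{\pA}$). A line $\Line$ meets the open segment $\pA\pB$ exactly when $\pA$ and $\pB$ lie on strictly opposite sides of $\Line$; this dualizes to the statement that $\dualX{\Line}$ lies in one of the two opposite open angular sectors (a \emph{double wedge}) cut out by the lines $\dualX{\pA}$ and $\dualX{\pB}$. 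Using the closed segment instead merely replaces open double wedges by closed ones, and lines through a point of $\PSet$ only add the two boundary lines; neither affects the VC dimension. Hence $\SetSys$ is isomorphic to a subsystem of $\pth{\Re^{2},\DWSet}$, where $\DWSet$ is the family of all double wedges, and it suffices to bound the VC dimension of the latter.

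The key structural fact is that a double wedge is a symmetric difference of two halfplanes: if $\ell_1,\ell_2$ are its bounding lines and $h_1^{+},h_2^{+}$ the two ``upper'' halfplanes they bound, then one of the two double wedges they determine is $h_1^{+}\triangle h_2^{+}$ and the other is its complement $h_1^{+}\triangle h_2^{-}$; equivalently, a double wedge is a region $\brc{z\in\Re^{2}:\ell_1(z)\,\ell_2(z)<0}$, the negative side of a conic that has degenerated into a pair of lines. I would then bound the shatter function $\pi_{\DWSet}$. Fix a set $Q$ of $m$ points in the plane. A classical count (via a rotating-directed-line argument, or by a count in the dual arrangement of $m$ lines) shows that the number of distinct traces $\Set{h\cap Q}{h\text{ a halfplane}}$ is exactly $m^{2}-m+2$. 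Since a double-wedge trace on $Q$ is the symmetric difference of two such halfplane traces, and the family of halfplane traces is closed under complementation, one gets $\pi_{\DWSet}(m)\le\binom{m^{2}-m+2}{2}+1$; a sharper estimate follows from the fact that the two bounding lines may be moved independently, so that $\pi_{\DWSet}(m)$ is at most roughly the square of the number $\tfrac{1}{2}(m^{2}+m+2)$ of combinatorial line-types. If $\DWSet$ shatters $Q$ then $2^{m}\le\pi_{\DWSet}(m)$, so substituting these explicit polynomial bounds and solving $2^{m}>\pi_{\DWSet}(m)$ forces $m$ to be a small constant; carrying out the bookkeeping with the exact counts (rather than the Sauer--Shelah bound) brings it down to the claimed value $11$.

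I expect the delicate points to be exactly two. First, making the passage to the dual airtight: one must check that none of the degenerate families of candidate lines---vertical lines, lines through a point of $\PSet$, or the line $\lineY{\pA}{\pB}$ itself---enlarge the set of realizable ranges, so that the bound proved for $\DWSet$ transfers verbatim to $\SetSys$. Second, the final arithmetic: the crude bound $\pi_{\DWSet}(m)=O(m^{4})$ already yields \emph{some} constant, but landing exactly on $11$ requires the exact halfplane shatter function $m^{2}-m+2$ and some care about over-counting symmetric differences.

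As a cross-check (and an alternative, slicker route): every double wedge, open or closed, is a region defined by a single quadratic inequality in $x$ and $y$, so under the lifting $(x,y)\mapsto(x,y,x^{2},xy,y^{2})\in\Re^{5}$ it pulls back a halfspace of $\Re^{5}$; hence the VC dimension of $\DWSet$---and therefore that of $\SetSys$---is at most the VC dimension of halfspaces in $\Re^{5}$, which is $6$, and in particular at most $11$. The elementary shatter-function argument above is the one I would write out in detail, with this lifting serving as a consistency check.
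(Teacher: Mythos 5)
Your primary route---counting traces of the permissive family of \emph{all} double wedges and then ``solving $2^m > \pi(m)$''---does not actually reach the constant $11$, and this is a genuine gap rather than bookkeeping. With your stated bound $\pi(m)\le\binom{m^2-m+2}{2}+1$, the inequality $2^m>\pi(m)$ first holds at $m=15$ (at $m=12$: $2^{12}=4096$ versus $\binom{134}{2}+1=8912$; at $m=14$: $16384$ versus $16837$), so you only get VC dimension at most $14$. The ``sharper'' count fares little better: each unordered pair of combinatorial line-types (there are $(m^2+m+2)/2$ of them) yields \emph{two} complementary double wedges with different traces, so the permissive family can have up to about $2\binom{(m^2+m+2)/2}{2}$ traces, which at $m=12$ is $6162>4096$ and at $m=13$ is $8372>8192$; at best this gives $13$. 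The constant $11$ comes precisely from the saving you discarded by enlarging to all double wedges: for the actual ranges of $\SetSys$, the trace on the $m$ lines one tries to shatter is determined by the \emph{unordered pair of faces} of the arrangement of those $m$ lines containing the two segment endpoints---one trace per pair, no factor of two. That is exactly the paper's argument (due to \Kyncl), carried out in the primal: the arrangement of $m$ lines has $f=m(m+1)/2+1$ faces, hence at most $\binom{f}{2}$ combinatorially distinct segments, and at $m=12$ one has $\binom{79}{2}=3081<4096=2^{12}$, whence VC dimension at most $11$. So if you want the counting route, you must either work in the primal as the paper does, or restrict to the segment-dual double wedges (one of the two wedges per pair of bounding lines); for the family of all double wedges no amount of careful bookkeeping lands on $11$.

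The good news is that the ``cross-check'' you relegate to the end is by itself a complete proof, and a stronger one than the lemma asks for. Your dualization is sound (a generic rotation removes vertical lines, and lines through $\pA$ or $\pB$ only toggle between open and closed double wedges), and a double wedge, open or closed, is a set of the form $\brc{z : \ell_1(z)\,\ell_2(z)\le 0}$, i.e.\ a sign condition on a single bivariate quadratic; under the lifting $(x,y)\mapsto(x,y,x^2,xy,y^2)$ every such range is the preimage of a halfspace in $\Re^5$, and halfspaces in $\Re^5$ have VC dimension $6$. Hence the dual system, and therefore $\SetSys$, has VC dimension at most $6\le 11$. Either promote this lifting argument to the main proof (it is shorter than the counting argument and gives a better constant), or redo the counting in the primal as the paper does; as planned, the argument you intended to write out in detail does not establish the stated bound.
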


\begin{proof}
    The following argument is due to Jan \Kyncl \cite{k-vcdlp-12}.
    The arrangement of $m$ lines in the plane has at most
    $f = m(m + 1)/2 + 1$ faces. As such, there are at most
    $\binom{f}{2}$ distinct segments (as far as what lines they
    intersect). If a set $\LSet$ of $m$ lines is shattered by the
    range space, then we must have
    \begin{math}
        2^m \leq \binom{f}{2} \leq \pth{m(m + 1)/2 + 1}^2,
    \end{math}
    and this inequality breaks for $m=12$, which implies that the VC
    dimension is at most $11$. A further improvement might be possible
    by more involved argument \cite{k-vcdlp-12}, but one has to be
    careful since the lines of $\LSet$ are not in general position.
\end{proof}

As such, one can compute a separating set, by computing
(approximately) a hitting set for the set system $\SetSys$, using
known approximation algorithms for hitting sets for spaces with
bounded VC dimension \cite{h-gaa-11}.

\subsubsection{The basic approximation algorithm for %
   hitting set for $\SetSys$}

We next describe the standard reweighting algorithm for hitting set in
our context.

\paragraph{The algorithm.}
Given $\SetSys$ as above, let $\SolOpt$ be the optimal solution, and
let $\opt$ denote the \emphi{size} of the optimal solution.  The algorithm maintains a guess $k$ for the value of $\opt$.
(The algorithm would perform an exponential search for the right value
of $k$.)

Initially, each line in $\Line \in \CLSet$ is assigned weight
$\WX{\Line} = 1$. For a subset $\LSet \subseteq \CLSet$, its
\emph{weight} is
\begin{math}
    \WX{\LSet} = \sum_{\Line \in \LSet} \WX{\Line}.
\end{math}
At each step, the algorithm samples a set of lines
$\Sample \subseteq \CLSet$ of size $O( \eps^{-1} \log \eps^{-1})$
(where $\eps = 1/4k$) picked according to their weights. By the
$\eps$-net theorem \cite{hw-ensrq-87}, $\Sample$ is an $\eps$-net with
probability at least $1 - \eps^c = 1 - 1/(4k)^c$ (for some
sufficiently large constant $c$). The algorithm next checks if the
sample $\Sample$ separates $\PSet$, and if so, it returns the sample
as the desired separating set.

To this end, the algorithm builds the arrangement $\ArrX{\Sample}$,
and preprocesses it for point-location queries. Next, it locates all the
faces in this arrangement that contains the points of $\PSet$. If
there is a pair of points $\pA, \pB \in \PSet$ that are in the same
face, then this pair is not separated by $\Sample$. If the weight of
the lines $\LSegY{\pA}{\pB}$ is at most an $\eps$ fraction of the
total weight of $\CLSet$ (formally,
$\WX{\LSegY{\pA}{\pB}} \leq \eps \WX{\CLSet}$), the algorithm doubles
the weight of all the lines in $\LSegY{\pA}{\pB}$. Otherwise, this
iteration failed, and the algorithm continues to the next iteration.

If after $16 k \log n$ iterations the algorithm did not output a
solution, then the guess of $k$ is too small. In which case, the
algorithm doubles the value of $k$ and starts from scratch.

\paragraph{Correctness.}
\seclab{correctness-slow}

For the sake of completeness, we sketch the proof of correctness of
the algorithm. Assume that the guess $k$ is such that
$\opt \leq k \leq 2\opt$.

Initially, the total weight of the $\CLSet$ is $\binom{n}{2}$. In each
successful iteration, the total weight increases by a factor of at
most $\eps$. (Assume for the time being that all iterations are
successful.) As such, if $\WC_i$ is the total weight of the lines of
$\CLSet$ in the end of the $i$\th successful iteration, then
$\WC_i \leq (1+\eps)^i n^2$. On the other hand, any successful
iteration doubles the weight of at least one the lines in the optimal
hitting set $\SolOpt$.  For a line $\Line \in \SolOpt$, let
$\hitX{\Line}$ be the number of times its weight had been doubled. We
have that
\begin{math}
    \sum_{\Line \in \SolOpt} \hitX{\Line} \geq i
\end{math}
and
\begin{math}
    \Bigl.\WC_i \geq \sum_{\Line \in \SolOpt} 2^{\hitX{\Line}}.
\end{math}
Clearly, the right side is minimized when all the ``hits'' are
distributed uniformly. That is, we have that
\begin{math}
    \WC_i \geq \sum_{\Line \in \SolOpt} 2^{\floor{i/\opt}}%
    \geq%
    \opt 2^{i/\opt - 1}.
\end{math}
As such, we have that
\begin{align*}
  \exp\left(\frac{i}{2\opt} - 1 + \ln \opt\right) \leq%
  \opt 2^{i/\opt - 1} %
  \leq%
  \WC_i%
  &\leq %
    (1+\eps)^i n^2 %
    \leq %
    \pth{1+\frac{1}{4k}}^i n^2 \leq %
    \pth{1+\frac{1}{4\opt}}^i n^2
  \\&
  \leq
  \exp\left(\frac{i}{4\opt} + 2 \ln n\right),
\end{align*}
since $k \geq \opt$. This is equivalent to
\begin{math}
    \Bigl.%
    \frac{i}{2\opt} - 1 + \ln \opt \leq \frac{i}{4\opt} + 2 \ln n \iff
    \frac{i}{4\opt} \leq 2 \ln n - \ln \opt + 1,
\end{math}
which holds only for $i \leq 8\opt\ln n$. Namely, the algorithm must
stop after this number of successful iterations. Note that the
separating lines returned will be a sample of size
$O( k \log k) = O( \opt \log \opt)$ that separates all the points of
$\PSet$.

By the $\eps$-net theorem, every iteration is successful with
probability $1 - \eps^c \geq 1-1/{\opt}^{c}$, where the constant $c$
is sufficiently large. As such, the number of failed iterations is
tiny compared to the number of successful iterations, and we can
ignore this issue.


\paragraph{Running Time Analysis and the result.}

In each iteration, the algorithm samples a set $\Sample$ of size
$r = O(\eps^{-1}\log \eps^{-1}) = O(k\log k)$. The arrangement
$\ArrX{\Sample}$ is constructed in $O(r^2)$ time. We then perform $n$
point location queries in $\ArrX{\Sample}$, in
$O( \log r ) = O(\log k)$ time per query. Thus, the running time for a
fixed value of $k$ is
\begin{math}
    O\pth{ \pth{ r^2 + n \log k + n^2 } k \log n }%
    =%
    O\pth{ \pth{ k^2 \log^2 k + n \log k + n^2 } k \log n }.%
\end{math}
Here, the $O(n^2)$ term is the time it takes to scan the lines of
$\CLSet$ and update their weights.  Summing this over exponentially
growing values of $k$, where the final $k$ is at most $2 \opt$, we
have that the total running time is
\begin{math}
    O\pth{ \pth{ \opt^2 \log^2 \opt + n \log \opt + n^2 } \opt \log n
    }%
    =%
    O\pth{ n^2 \opt \log n }.%
\end{math}

We thus conclude the following.

\begin{lemma}
    \lemlab{slower:alg}%
    Given a set $\PSet$ of $n$ lines in general position, one can
    return a set of separating lines $\Sample$, of size
    $O(\opt \log \opt)$, in time $O\pth{ n^2 \opt \log n }$, where
    $\opt$ is the size of the minimal set of lines that separates the
    points of $\PSet$.
\end{lemma}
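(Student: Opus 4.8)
The plan is to recast \separability as a hitting-set instance and run the textbook multiplicative-weights (reweighting) algorithm for hitting sets in set systems of bounded VC dimension. First I would invoke the reduction set up above: let $\CLSet$ be the $O(n^2)$ candidate lines through pairs of points of $\PSet$, and let $\SetSys = (\CLSet, \SSets)$ with $\SSets = \Set{\LSegY{\pA}{\pB}}{\pA,\pB \in \PSet, \pA \neq \pB}$. By the Observation, an optimal separating set can be taken to lie inside $\CLSet$, so the minimum hitting set of $\SetSys$ has size exactly $\opt$, and \lemref{seg:VC} guarantees that $\SetSys$ has VC dimension at most $11$. Thus the standard hitting-set approximation machinery applies.

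Next I would spell out the algorithm. Maintain a guess $k$ for $\opt$, discovered by exponential search, set $\eps = 1/(4k)$, assign every line weight $1$, and repeat for $16k\log n$ rounds: sample $\Sample \subseteq \CLSet$ of size $O(\eps^{-1}\log\eps^{-1}) = O(k\log k)$ according to the current weights; by the $\eps$-net theorem \cite{hw-ensrq-87}, $\Sample$ is an $\eps$-net with probability $\ge 1-(4k)^{-c}$. Build $\ArrX{\Sample}$, preprocess for point location, and locate the face of each point of $\PSet$. If all points lie in distinct faces, return $\Sample$. Otherwise pick an unseparated pair $\pA,\pB$; since $\Sample$ misses $\LSegY{\pA}{\pB}$, the $\eps$-net property forces $\WX{\LSegY{\pA}{\pB}} \le \eps\WX{\CLSet}$, so we may double the weight of every line in $\LSegY{\pA}{\pB}$ and continue. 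If no solution is found within $16k\log n$ rounds, double $k$ and restart.

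For correctness I would run the standard potential argument. When $\opt \le k \le 2\opt$, after $i$ successful doublings $\WX{\CLSet} \le (1+\eps)^i n^2$, while some line of the optimal solution has been doubled at least $i/\opt$ times, so $\WX{\CLSet} \ge \opt\,2^{i/\opt-1}$; comparing the two bounds gives $i \le 8\opt\ln n$, comfortably within the $16k\log n$ round budget, so the algorithm terminates with a separating set, which is a sample of size $O(k\log k) = O(\opt\log\opt)$. The $\eps$-net theorem makes each round succeed with probability $\ge 1-1/\opt^c$, so failed rounds form a lower-order term that can be absorbed. For the running time, each round costs $O(r^2)$ to build $\ArrX{\Sample}$ with $r = O(k\log k)$, plus $O(n\log r)$ for the point locations, plus $O(n^2)$ to rescan $\CLSet$ and update weights; over $O(k\log n)$ rounds this is $O\pth{(k^2\log^2 k + n\log k + n^2)k\log n}$, and summing over the geometrically increasing guesses up to $k \le 2\opt$ collapses to $O(n^2\opt\log n)$.

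The one step that genuinely needs care — and the reason we route through the bounded-VC-dimension machinery rather than an ad hoc argument — is the ``small weight implies doublable'' invariant: every time the sampled net fails to separate $\PSet$ we must be able to exhibit a set $\LSegY{\pA}{\pB}$ of total weight at most $\eps\WX{\CLSet}$, which is precisely the contrapositive of $\Sample$ being an $\eps$-net, and this is exactly where \lemref{seg:VC} and the $\eps$-net theorem are used. Everything else is bookkeeping. I would also remark that the $O(n^2)$ weight-rescan term dominates the per-round cost here; eliminating it by storing $\CLSet$ implicitly via duality and range searching is exactly the improvement carried out in \thmref{faster}, so this lemma is best viewed as the warm-up whose analysis the later result refines.
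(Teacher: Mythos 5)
Your proposal is correct and follows essentially the same route as the paper: the same reduction to the hitting-set system $\SetSys$, the same reweighting algorithm with guess $k$, $\eps = 1/(4k)$, and $\eps$-net sampling, the same potential argument bounding the number of successful iterations by $O(\opt \log n)$, and the same per-iteration cost accounting in which the $O(n^2)$ weight-update term dominates. Nothing further is needed.
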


\subsection{Faster algorithm}

\subsubsection{Challenge and the main ideas}

\paragraph{Challenge.}
We want to get a faster algorithm than the ``naive'' algorithm
described above.  In the above algorithm, the bottleneck is the
$O(n^2\opt)$ term in the running time, which is the result of
explicitly maintaining the set $\CLSet$ and the weights for each line
in $\CLSet$. Note, that the number of iterations the algorithm
performs is pretty small, only $O( \opt \log n)$.

As such, our idea is to maintain the set $\CLSet$ implicitly, and also
maintain the weights implicitly.  To this end, consider the given set
$\PSet$ of $n$ points. In the dual, the set $\dualP$ corresponds to a
set of $n$ lines. A line $\Line \in \CLSet$ corresponds to an
intersection point between two lines
$\dualX{\pA}, \dualX{\pB} \in \dualP$ -- that is, a vertex of
$\ArrX{\dualP}$ (and this vertex represents $\Line$ uniquely).

Now, in the $i$\th iteration of the (inner) algorithm, it doubles the
weight of the lines that are in the set $\LSegY{\pA_i}{\pB_i}$. In
other words, the lines that intersect the segment
$\seg_i = \pA_i \pB_i$. In the dual, the segment $\seg_i$ is a
double-wedge $\DW_i = \dualX{\seg_i}$. As such, in the end of the
$i$\th iteration, the dual plane is partitioned into the arrangement
$\ArrX{\DWSet_i}$, where $\DWSet_i = \brc{\DW_1, \ldots, \DW_i}$. A
vertex $\vA \in \ArrX{\dualP}$, at the end of the $i$\th iteration,
has weight $2^{\hitX{\vA}}$, where $\hitX{\vA}$ is the number of
double wedges of $\DWSet_i$ that contains $\vA$.

Observe that the arrangement $\ArrX{\DWSet_i}$ has complexity
$O(i^2)$, which is relatively small, and it can be maintained
efficiently. The problem is that to implement the algorithm, one needs
to be able to sample efficiently a line from $\CLSet$ according to
their weights. To this end, we need to maintain for each face of
$\ArrX{\DWSet_i}$ the number of vertices of $\ArrX{\dualP}$ that it
contains.

\subsubsection{Building blocks}

We next describe data-structures for counting intersections inside a
simple region, sampling a vertex from such a region, and how to
maintain such a partition of the plane under insertion of
double-wedges.

\paragraph{Counting and sampling intersections }

\begin{lemma}
    \lemlab{sample:d:s}%
    Let $\cell$ be a convex polygon in the plane with constant number
    of edges, and let $\LSet$ be a set of $m$ lines. The number of
    vertices of $\ArrX{\LSet}$ that lie in $\cell$ can be computed in
    $O( m \log m)$ time.

    Furthermore, this algorithm constructs a data-structure, using
    $O(m \log m)$ space, such that one can uniformly at random pick,
    in $O( \log m)$ time, a vertex of $\ArrX{\LSet}$ that lies in
    $\cell$.
\end{lemma}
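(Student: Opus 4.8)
The plan is to reduce the counting problem to a collection of simpler "below a line, inside a slab" counting queries using a sweep, and then to build a hierarchical (weighted) structure on top of these counts to support sampling. First I would observe that the number of vertices of $\ArrX{\LSet}$ inside the convex polygon $\cell$ equals the number of crossing pairs of lines whose intersection point falls in $\cell$. Since $\cell$ has a constant number of edges, I can write $\cell$ as the intersection of a constant number of halfplanes; clipping each line of $\LSet$ to $\cell$ gives a set of $m$ segments (some empty), and I now want to count crossing pairs among these $m$ segments. This is the classical segment-intersection \emph{counting} problem, which (for a set of $m$ segments) can be solved in $O(m \log m)$ time without reporting the intersections, e.g. via a Bentley--Ottmann-style sweep that maintains the vertical order of segments in a balanced search tree and accumulates inversion counts, or via the standard divide-and-conquer on slabs. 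That gives the first part of the statement.

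For the sampling part I would make the sweep "constructive." Sort the $2m$ endpoints of the clipped segments by $x$-coordinate; between consecutive events the vertical order of the active segments is fixed, and I maintain this order in a balanced binary search tree augmented so that each node stores the number of active segments in its subtree. When processing a slab $[x_j, x_{j+1}]$, for each pair of active segments $s, s'$ that swap order between $x_j$ and $x_{j+1}$ there is exactly one vertex of $\ArrX{\LSet}$ in that slab lying inside $\cell$ (because $\cell$ is convex, a pair of lines meets in at most one point, and it lies in $\cell$ iff it lies in the strip \emph{and} the relevant $x$-interval — so one must be careful to clip the strips to $\cell$). The count of such swapping pairs within a slab is an inversion count between the order at $x_j$ and the order at $x_{j+1}$, computable in $O((k_j+1)\log m)$ time where $k_j$ is the number of segments whose relative order changes. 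Summing, this is $O(m \log m)$, and along the way I store, for each slab, the number $N_j$ of vertices it contributes; the total over slabs is the answer $N = \sum_j N_j$.

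To sample uniformly: build a prefix-sum / balanced tree over the $N_j$'s ($O(m)$ of them, since there are $O(m)$ events), so that in $O(\log m)$ time I can pick a slab with probability $N_j/N$. Conditioned on a slab, I must pick uniformly one of its $N_j$ interior vertices; since within a slab the vertices correspond to inverted pairs in a fixed linear order, I can record, per slab, a pointer into a small auxiliary structure (an order-statistics tree on the at-most-$k_j+1$ segments whose order changes, or simply, when $k_j$ is small, an explicit list) that lets me sample an inverted pair — hence a vertex — in $O(\log m)$ time. The total extra space is $\sum_j O(k_j + 1) = O(m + \sum_j k_j)$; the subtle point is that $\sum_j k_j$ can be $\Theta(m^2)$ in the worst case, so I would \emph{not} store all inverted pairs explicitly but rather store per slab only the two sorted orders (or a persistent version of the sweep tree) of size $O(m)$ total via persistence, i.e. $O(m\log m)$ space overall, and recover a sampled inverted pair from the two orders by a binary-search-based rank computation. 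The main obstacle I expect is exactly this accounting: making the per-slab sampling work in $O(\log m)$ time while keeping the space at $O(m\log m)$ rather than proportional to the (possibly quadratic) number of intersections — this is handled by never materializing intersections, using a persistent balanced-search-tree snapshot of the sweep line at each event, and doing the "which inverted pair" selection by two synchronized searches in the before/after snapshots.
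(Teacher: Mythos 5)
Your reduction to segment--intersection counting is where the argument breaks. Counting the intersecting pairs among $m$ \emph{arbitrary} segments is not a classical $O(m\log m)$ problem --- no near-linear algorithm is known for it (the best known bounds are roughly $O(m^{4/3})$), and a Bentley--Ottmann-type sweep is inherently output-sensitive: to know the vertical order of the active segments at the right wall of a slab you must either process every crossing inside the slab or re-sort the active segments from scratch, so the sweep costs $\Omega(m+I)$ time, where $I$ is the number of crossings, which here can be $\Theta(m^2)$. Your own accounting exposes this: you bound the work in slab $j$ by $O((k_j+1)\log m)$, note that $\sum_j k_j$ can be $\Theta(m^2)$, but treat that only as a space problem; it is equally a time problem, and persistence does not repair it, because the content of the snapshot at event $x_{j+1}$ (the new vertical order) cannot be produced without resolving the crossings inside slab $j$. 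You also do not say how to identify the $k_j$ displaced segments without inspecting all active ones. The per-slab sampling step inherits the same difficulty: to pick an inverted pair of the before/after orders uniformly in $O(\log m)$ time you would need, for each active segment, its number of inversion partners in that slab, and computing or storing these per slab is again $\Theta(\sum_j k_j)$ in the worst case.

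The fix is to exploit the special structure you set up and then abandon: the clipped segments are \emph{chords} of the convex polygon $\cell$, so two lines meet inside $\cell$ if and only if their two boundary-intersection points interleave along $\bd \cell$. This turns the task into counting interleaving pairs of intervals on a closed curve, which a single walk around $\bd\cell$ solves in $O(m\log m)$ time: keep a balanced search tree keyed by boundary position, augmented with subtree counts; when a line's second boundary point is reached, the lines still open whose first endpoint lies inside its interval are exactly those crossing it in $\cell$ (each crossing pair is charged once, to the chord that closes first). This walk also assigns every line a single weight and a single key range, so sampling reduces to one weighted choice of a line followed by one uniform choice inside a key range of a path-copied (hence $O(m\log m)$-space persistent) version of the tree --- no per-slab machinery is needed. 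This is the route the paper's proof takes.
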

\begin{proof}
    Conceptually, select a point on the boundary of $\cell$ and cut
    $\cell$ at that point. Take this (now open) polygon and straighten
    it into a straight line. Finally, translate and rotate the plane,
    so that this straightened line becomes, say, the $x$-axis, see
    \figref{reduction}.%

    \parpic[r]{%
       \begin{minipage}{6.5cm} \hfill
           \includegraphics{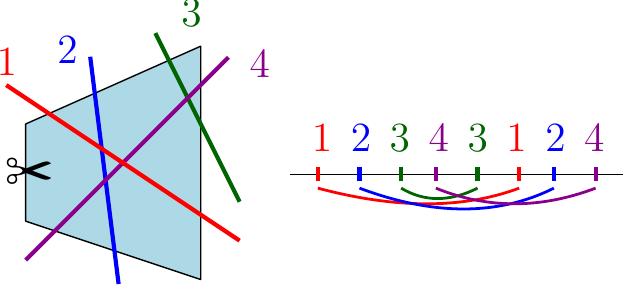} \captionof{figure}{}
           \figlab{reduction}%
       \end{minipage}%
    }

    Furthermore, for a line $\Line \in \LSet$ that intersects
    $\bd\cell$, treat the segment $\seg = \Line \cap \cell$ as a
    rubber band. In the end of this straightening process, $\seg$
    became an interval on the $x$-axis. For two lines
    $\Line, \Line' \in \LSet$ that have an intersection inside
    $\cell$, this results in two intervals $I, I'$, such that each
    interval contains exactly one endpoint of the other interval in
    its interior. This also holds in the other direction -- two
    intervals that have this property corresponds to a common
    intersection of the original lines inside $\cell$. Counting such
    pairs is quite easy by sweeping the $x$-axis from left to
    right. We next describe this algorithm more formally in the
    original setup.%

    \medskip%

    Assume that $\LSet = \brc{ \Line_1, \ldots, \Line_m}$. The
    algorithm computes the intersection points of the lines of $\LSet$
    with the boundary of $\cell$, and sorts them in their
    counterclockwise order on the boundary of $\cell$ (starting, say,
    in the top left vertex of $\cell$).

    The resulting order is a sequence $\pA_1, \ldots, \pA_{m'}$, where
    $m' \leq 2m$, and every point $\pA_i$ has a label
    $\alpha = \labelX{\pA_i}$ which is the index of the line
    $\Line_\alpha \in \LSet$ that defines it (i.e.,
    $\pA_i \in \bd{\cell} \cap \Line_\alpha$).  Next, the algorithm
    scans this sequence: %
    \smallskip%
    \begin{compactitem}
        \item When it encounters an intersection $\pA_j$ such that
        $\labelX{p_j}$ was not seen before, it inserts the line of
        $\pA_j$ into a balanced binary search tree (\BST), using the
        value of $j$ for the ordering. This \BST has the added feature
        that each internal node stores the number of elements stored
        in its subtree.

        \smallskip%
        \item When the algorithm encounters a point $\pA_k$ such that
        the line defining it was already inserted into the \BST (i.e.,
        $\labelX{p_k} = \labelX{p_j}$ for some $j < k$), the algorithm
        reports the number of lines stored in the tree between $j$ and
        $k$, which corresponds to the number of lines of $\LSet$ that
        intersects the line of $\pA_k$ in $\cell$. Next, we remove the
        line of $\pA_k$ (stored with the key value $j$) from the tree.
    \end{compactitem}
    \smallskip%
    All of these operations can be implemented in $O( \log m)$ time,
    so that the overall running time is $O( m \log m)$. Observe, that
    every relevant intersection is counted exactly once by this
    process.

    \medskip

    To get the sampling data-structure, rerun the above algorithm
    using a \BST with persistence. This persistence costs $O(\log m)$
    additional space per operation, since we use the path copying
    approach. This modification does not effect the overall running
    time. Thus, the resulting data-structure uses $O(m \log m)$
    space. Now, every line $\Line \in \LSet$, corresponds to an
    interval $I_\Line = [i(\Line),i'(\Line)]$ in the
    \BST. Furthermore, the lines intersecting $\Line$ in $\cell$, are
    stored in the \BST (in the version just after $\Line$ was deleted)
    in the interval $I_\Line$.

    As such, every line intersecting $\cell$ has an associated
    interval, with an associated weight (i.e., the number of
    intersections assigned to it by the construction).  To pick a
    random vertex, the algorithm first picks an interval according to
    their weights -- this corresponds to a random line $\Line$.  Next,
    given this random line, the algorithm picks a random element
    stored in the $O( \log m)$ subtrees representing the lines in
    $I_\Line$. Since the algorithm used path copying, it has the exact
    number of lines stored in each subtree, and it is straightforward
    to sample a line in uniform. This second random line $\Line'$,
    such that $\Line \cap \Line' \in \cell$ is the desired random
    vertex.
\end{proof}

\paragraph{Sampling a trapezoid.}

The algorithm maintains a collection of $m$ trapezoids, that are
interior disjoint, such that their (disjoint) union covers the
plane. Furthermore, assume that each such trapezoid $\cell$ already
has the data-structure of \lemref{sample:d:s} built for it.

\begin{defn}
    \deflab{mass}%
    Consider a set $\DWSet$ of double-wedges, and a vertical trapezoid
    $\cell$ such that its interior is contained in a single face of
    $\ArrX{\DWSet}$. For a set of $\LSet$ lines, the number of
    vertices of $\ArrX{\LSet}$ in $\cell$ is the \emphi{support} of
    $\cell$, and it is denoted by $\nVX{\cell}$. The \emphi{depth} of
    $\cell$ is the number of double-wedges of $\DWSet$ that fully
    contain $\cell$ in their interior. The depth of $\cell$ is denoted
    by $\depthX{\cell}$. The \emphi{mass} of $\cell$ is defined as
    $\massX{\cell} = \nVX{\cell} 2^{\depthX{\cell}}$.
\end{defn}


The task at hand is to pick a vertex of $\ArrX{\LSet}$ uniformly at
random according to these weights.  To this end, we construct a
balanced binary search tree having the trapezoids as leafs -- a
trapezoid is stored together with its mass. Every internal node of
this tree has the total mass of the leafs in its subtree.

Now, one can traverse down the tree randomly, starting at the root, as
follows. If the current node is $\vB$, consider its two children $\vA$
and $\vA'$. The algorithm picks an integer number randomly and
uniformly in the range $[1, 1+\mX{\vA} + \mX{\vA'}]$. If this number
is in the range $[1, \mX{\vA}]$, the algorithm continues the traversal
into $\vA$, otherwise, it continues into $\vA'$. Clearly, this
traversal randomly and uniformly chooses a leaf of the tree (according
to their mass). Once the algorithm arrived to such a leaf, it uses the
data-structure of \lemref{sample:d:s} to pick a random vertex inside
the associated trapezoids.

We thus conclude the following.

\begin{lemma}
    \lemlab{pick:vertex}%
    Given a (dynamic) set at most $m$ interior disjoint trapezoids,
    covering the plane, each with the associated data-structure of
    \lemref{sample:d:s} and their known mass, one can sample a random
    vertex from $\ArrX{\LSet}$ in $O( \log m + \log m')$ time, where
    $m'$ is the maximum size of a conflict list of such a
    trapezoid. Furthermore, one can update this data-structure under
    insertion and deletion in $O( \log m)$ time.
\end{lemma}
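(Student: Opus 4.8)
The plan is to realize the weighted sampler as a single balanced binary search tree whose leaves are the trapezoids, essentially formalizing the informal description preceding the statement. First I would store the current collection of trapezoids in a balanced \BST (any variant supporting $O(\log m)$ insert/delete with subtree augmentation works -- a red--black tree, or a treap with arbitrary but distinct keys; no geometric ordering of the trapezoids is needed). Each leaf $\cell$ stores its mass $\massX{\cell} = \nVX{\cell}\, 2^{\depthX{\cell}}$ as in \defref{mass}, where the support $\nVX{\cell}$ is read from the data-structure of \lemref{sample:d:s} attached to $\cell$, and $\depthX{\cell}$ is maintained explicitly. Each internal node stores the sum of the masses of the leaves of its subtree; this is the only augmented field, so it can be restored along the $O(\log m)$ search path after any single change to a leaf.

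For sampling I would perform one top-down random walk from the root: at an internal node with children $\vA, \vA'$, descend into $\vA$ with probability proportional to $\mX{\vA}$ and into $\vA'$ with probability proportional to $\mX{\vA'}$ (using, say, a single uniform integer draw in $[1,\mX{\vA}+\mX{\vA'}]$; if both masses vanish the choice is irrelevant, so any convention is fine). This reaches a leaf $\cell$ with probability exactly $\massX{\cell}/\sum_{\cell'}\massX{\cell'}$. Having reached $\cell$, I would invoke the sampling data-structure of \lemref{sample:d:s} on $\cell$ to return a uniformly random vertex of $\ArrX{\LSet}$ in $\cell$, in $O(\log m')$ time, where $m'$ bounds the conflict-list size of $\cell$. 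To make ``lying in $\cell$'' unambiguous I would treat the trapezoids as half-open so that every vertex of $\ArrX{\LSet}$ belongs to exactly one trapezoid; this does not affect \lemref{sample:d:s}.

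For correctness, fix a vertex $\vA$ of $\ArrX{\LSet}$ and let $\cell$ be the unique trapezoid containing it. The probability that $\vA$ is output is $\bigl(\massX{\cell}/\sum_{\cell'}\massX{\cell'}\bigr)\cdot\bigl(1/\nVX{\cell}\bigr) = 2^{\depthX{\cell}}/\sum_{\cell'}\massX{\cell'}$. Since the interior of $\cell$ lies in a single face of $\ArrX{\DWSet}$, the depth $\depthX{\cell}$ equals the number of double-wedges of $\DWSet$ containing $\vA$, i.e. $\depthX{\cell} = \hitX{\vA}$, so $\vA$ is picked with probability proportional to its weight $2^{\hitX{\vA}}$, as required. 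The walk touches one node per level, hence $O(\log m)$ time, plus the $O(\log m')$ charged by \lemref{sample:d:s}. For updates: inserting or deleting a trapezoid is a standard balanced-\BST operation, after which only the subtree-sum fields on the affected root path need recomputation, so $O(\log m)$; likewise, when a trapezoid's depth (hence mass) changes, I relabel its leaf and walk up to the root fixing the sums, again $O(\log m)$.

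There is no deep obstacle here -- the content is bookkeeping -- but the point that needs care is that the structure must be agnostic to how the partition evolves: a single double-wedge insertion can change the depths (and masses) of many trapezoids and can trigger the creation or deletion of trapezoids, yet the lemma only promises $O(\log m)$ per \emph{elementary} operation. I would therefore state the update claim at the granularity of one trapezoid insertion, one trapezoid deletion, or one mass change, and leave the accounting of how many such elementary operations a double-wedge insertion induces to the calling algorithm.
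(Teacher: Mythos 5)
Your proposal is correct and matches the paper's construction essentially verbatim: a balanced \BST with the trapezoids as leaves, internal nodes augmented with subtree mass, a top-down random descent proportional to mass, and a final call to the sampler of \lemref{sample:d:s} at the chosen leaf. The extra care you take about correctness (depth equals $\hitX{\vA}$ for every vertex in the trapezoid) and about the granularity of updates is sound but not needed beyond what the paper already asserts.
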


\subsubsection{Maintaining vertex weights efficiently under
   insertions}

Our purpose here is to present an efficient data-structure that solves
the following problem.
\begin{problem}
    \problab{lines}%
    Given a set of $\LSet$ of $n$ lines, and a parameter $k$, we would
    like to maintain a vertical decomposition of the plane, such that
    each trapezoid $\cell$ in this decomposition maintains the
    sampling data-structure of \lemref{sample:d:s} for the vertices of
    $\ArrX{\LSet}$. This data-structure should support insertions of
    up to $O(k \log n)$ double-wedges.  Here, each trapezoid maintains
    its support, depth, and mass, see \defref{mass}.
\end{problem}

\paragraph{The basic scheme}

\begin{lemma}
    \lemlab{d:s:arrangement}%
    One can maintain a data-structure for \probref{lines} with overall
    running time
    \begin{math}
        O( (k^3 + nk) \log^3 n).
    \end{math}
\end{lemma}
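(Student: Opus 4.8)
The plan is to maintain, incrementally as double‑wedges are inserted, the vertical decomposition of the arrangement $\ArrX{\DWSet}$ of the double‑wedges seen so far, and to hang off every trapezoid the sampling structure of \lemref{sample:d:s}. Concretely, for each trapezoid $\cell$ I would store: its conflict list $L_{\cell}\subseteq\LSet$ (the lines of $\LSet$ crossing $\cell$), its depth $\depthX{\cell}$, its support $\nVX{\cell}$ and mass $\massX{\cell}=\nVX{\cell}\,2^{\depthX{\cell}}$ as in \defref{mass}, and the $O(|L_{\cell}|\log|L_{\cell}|)$‑time structure of \lemref{sample:d:s} built on $L_{\cell}$; on top of all current trapezoids I keep the mass‑keyed balanced tree used by \lemref{pick:vertex}, together with a point‑location structure on the decomposition. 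Initially there is a single bounding‑box trapezoid with $L_{\cell}=\LSet$ and $\depthX{\cell}=0$, built in $O(n\log n)$ time.

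When the $t$‑th double‑wedge $\DW_t$ is inserted (it is bounded by two lines), I would insert those two lines into the current vertical decomposition in the standard incremental fashion. By the zone theorem each inserted line crosses only $O(t)$ trapezoids, so $O(t)$ trapezoids are destroyed and $O(t)$ created; for a new trapezoid $\cell'$ that is a piece of a destroyed trapezoid $\cell$, one extracts $L_{\cell'}$ from $L_{\cell}$ by a single scan, sets $\depthX{\cell'}=\depthX{\cell}\pm 1$ according to whether $\DW_t$ contains $\cell'$, and rebuilds the \lemref{sample:d:s} structure of $\cell'$ (hence $\nVX{\cell'}$ and $\massX{\cell'}$) in $O(|L_{\cell'}|\log n)$ time; finally the mass‑tree and point‑location structure absorb the $O(t)$ changes at $O(\log n)$ each. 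One subtlety is that $\DW_t$ may already contain many old trapezoids, whose depths all increase at once; rather than touch each of them, I would store depths relative to a value propagated along the face‑adjacency graph, so that this too costs $O(t)$ per step.

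Summing gives two regimes. The purely combinatorial work — zone walks, point location, mass‑tree maintenance — is $O(\log n)$ per touched trapezoid, and over the $O(k\log n)$ insertions (with $O(t)$ trapezoids touched at step $t$, or $O(t^2)$ in the crude variant that simply re‑scans the arrangement) this is $O(k^3\log^3 n)$. The remaining cost is that of (re)building \lemref{sample:d:s} structures, which is $\Theta(\log n)$ times $\sum_{\cell'}|L_{\cell'}|$ over all created trapezoids $\cell'$. Here the governing fact is that at every time $t$ the current conflict lists sum to only $O(nt)$: by the zone theorem each line of $\LSet$ meets $O(t)$ trapezoids of a decomposition induced by $O(t)$ lines, so the trapezoids created at step $t$ have total conflict list $O(nt)$. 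I expect the technical heart of the lemma to be converting this into an $O(nk\cdot\mathrm{polylog})$ bound on the total rebuild cost, since summing the per‑step bound $O(nt)$ over $t$ naively loses a factor of $k$ in the middle range of $k$. The fix I would try is to not rebuild a trapezoid's sampling structure at every split, but only when its support has dropped by a constant factor, letting a child temporarily reuse — by reference, with rejection sampling — the structure of the nearest ancestor whose support is within a constant factor of the child's; then rejection stays $O(1)$‑efficient, each line of $\LSet$ is charged only $O(\log n)$ rebuilds along the chain of trapezoids containing it, and the $O(nt)$‑per‑time invariant keeps each snapshot bounded. Combining the two regimes yields the claimed $O((k^3+nk)\log^3 n)$.
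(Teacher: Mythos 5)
There is a genuine gap, and it sits exactly where you flagged the ``technical heart'': controlling the total cost of rebuilding the structures of \lemref{sample:d:s}. The paper's proof avoids the problem with one idea your proposal is missing: it does \emph{not} start from a single bounding-box trapezoid with conflict list $\LSet$, but from the vertical decomposition of a random sample $\Sample \subseteq \LSet$ of size $K = O(k\log n)$, whose conflict lists are computed once in $O(K^2 + Kn)$ time. By the $\eps$-net theorem with $\eps = 1/(4k)$, with high probability every trapezoid of this decomposition (and hence every trapezoid created later, each being a piece of one of them) has conflict list of size $O(n/k)$. Each of the $O(K)$ double-wedge insertions then creates $O(K)$ new trapezoids (zone complexity), each rebuilt in $O((n/k)\log (n/k))$ time, i.e.\ $O(n\log^2 n)$ per insertion, plus a naive $O(K^2)$ scan to update depths; summing over $O(K)$ insertions gives $O(K(K^2 + n\log^2 n)) = O((k^3+nk)\log^3 n)$. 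In your scheme the conflict lists start at size $n$ and need not shrink: a line nearly parallel and close to the inserted line crosses $\Theta(t)$ zone trapezoids, so a single step can force $\Theta(nt\log n)$ rebuild work, and summing $O(nt)$ over $t\le K$ overshoots $nk$ by a factor of $k\log n$, as you yourself note.

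Your proposed repair (rebuild only when the support drops by a constant factor, otherwise reuse an ancestor's structure by reference with rejection sampling) does not close this gap as stated. The mass $\massX{\cell'} = \nVX{\cell'}\,2^{\depthX{\cell'}}$ of every newly created trapezoid must be known \emph{exactly} for the mass-keyed tree of \lemref{pick:vertex} to sample with the correct distribution, and the only tool available for computing $\nVX{\cell'}$ is the counting part of \lemref{sample:d:s}, which costs $O(|L_{\cell'}|\log |L_{\cell'}|)$ --- the same order as the rebuild you are trying to skip; estimating supports by sampling would change the distribution and is not justified. Moreover the amortization claim that ``each line of $\LSet$ is charged only $O(\log n)$ rebuilds'' is not established: a line lies in the conflict lists of up to $\Theta(t)$ trapezoids simultaneously, each with its own splitting history, so the charging is not along a single nested chain. (Your relative-depth bookkeeping for wedges containing many old trapezoids is also more delicate than claimed, since lazy depth offsets must be reflected in the masses stored in the sampling tree; the paper sidesteps this with the naive $O(K^2)$-per-insertion scan, which fits inside the $k^3\log^3 n$ budget, and defers the genuinely clever depth maintenance to \lemref{d:s:arr:improved} via partition trees.) In short, the random-sample seeding plus the $\eps$-net bound on conflict-list size is the load-bearing step of the paper's proof, and without it (or a fully worked-out substitute) the claimed bound is not proven.
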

\begin{proof}
    Let $\Sample$ be a random sample of $\LSet$ of size
    $K = O( k \log n)$, where $\LSet$ is the set of $n$ lines that are
    dual to the original set of points. Compute the vertical
    decomposition of $\Sample$. For each trapezoid $\cell$ in this
    decomposition, we compute the conflict list of $\cell$ (i.e., the
    set of lines from $\LSet$ intersecting the interior of
    $\cell$). This can be done in $O( K^2 + K n)$ time, using standard
    algorithms, see \cite{bcko-cgaa-08}. Next, the algorithm computes
    for each trapezoid the data-structure of \lemref{sample:d:s}.

    By the $\eps$-net theorem, every vertical trapezoid that does not
    intersect a line of $\Sample$ in its interior intersects at most
    $\eps n$ lines of $\LSet$ (where $\eps= 1/4k$). This property
    holds with high probability. As such, the conflict lists that the
    algorithm deals with are of size $O(n/k)$.

    Let $\LSet_0 = \Sample$. In the $i$\th iteration, the $i$\th
    double-wedge $\DW_i$ is inserted. To this end, the two lines
    $\Line_i, \Line_i'$ bounding the double wedge are inserted into
    the current vertical decomposition, splitting and merging
    trapezoids as necessary. At the end of this process we have the
    vertical decomposition of
    $\LSet_i = \LSet_{i-1} \cup \brc{ \Line_i, \Line_i'}$. This
    involves creating $O(K + i)$ new trapezoids, since the zone
    complexity of a line in $\ArrX{\LSet_{i-1}}$ is
    $O( K + i) = O(K)$, and $i =O(K)$. For each such trapezoid we
    rebuild the data-structure of \lemref{sample:d:s}, which takes
    overall $O( (n/k)\log(n/k) \cdot K ) = O(n \log^2 n)$ time.
    Finally, we scan all the vertical trapezoids, and update their
    depth count, if they are contained inside the inserted wedge. This
    takes (naively) $O( K^2)$ time.

    Recall that we perform $O(K)$ insertions in total, and therefore
    the overall running time of the data-structure is
    \begin{math}
        O\pth{ K \pth{K^2 +n \log^2 n}}%
        =%
        O( (k^3 + nk) \log^3 n).
    \end{math}
\end{proof}

\paragraph{A more efficient scheme.}
The overall running time of \lemref{d:s:arrangement} can be further
improved by using dynamic partition trees to maintain the depth of the
vertical trapezoids. This maintenance step is the bottleneck in the
above scheme, since the algorithm must scan all of the existing
trapezoids to update their depth after each insertion of a double
wedge.

A partition tree is a hierarchical partition of the point set, until
each leaf has a constant number of points. Each node use a partition
(see \defref{partitions}) to break its point set into subsets, and for
each subset a partition tree is constructed recursively. Performing a
simplex query in partition tree is done by starting at the root,
inspecting at its children simplices. If such a simplex $\simplex$
lies entirely within the query, the algorithm reports the number of
points inside it. Otherwise if $\simplex$ intersects the query, the
algorithm recurses on that child node. Given a set of $n$ points in
$\Re^2$, \Matousek showed that one can construct a partition tree in
$O(n\log n)$ time and return the number of points inside the simplex
query in time $O(\sqrt{n} \log^{O(1)} n)$ \cite{m-ept-92}.

\medskip

For our purposes, we pick a point inside a vertical trapezoid (in the
current vertical decomposition) to represent it. Overall, there are
$m = O(K^2) = O(k^2 \log^2 n)$ representatives at any given time.  We
next build the data-structure of \Matousek \cite{m-ept-92} to
dynamically maintain this point-set under insertions and deletions
(each operation takes amortized $O( \log^2 m)$ time). Updating the
weight of a trapezoid corresponds to two simplex queries, where we
have to increase the depth count for the canonical sets reported by
this range-searching query. There are
\begin{math}
    O( \sqrt{m} \log^{O(1)} m) = O( k \log^{O(1)} n)
\end{math}
such canonical sets, and this is the time to perform such an
update. As such, an insertion of a double wedge with respect to this
partition tree takes
\begin{math}
    O( K \log^2 K + k \log^{O(1)} n)
\end{math}
time. Therefore, over the $O(K)$ insertions, the algorithm requires
$O(k^2 \log^{O(1)} n)$ time to maintain the weights of the vertices of
$\ArrX{\LSet}$.

\begin{lemma}
    \lemlab{d:s:arr:improved}%
    One can maintain a data-structure for \probref{lines} with overall
    running time
    \begin{math}
        O( nk \log^3 n + k^2 \log^{O(1)} n).
    \end{math}
    (This running time includes $O( k \log n)$ double-wedge
    insertions.)  Furthermore, one can sample a random vertex of
    $\ArrX{\LSet}$ according to their weight in $O( \log n)$ time.
\end{lemma}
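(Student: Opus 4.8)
The statement of Lemma~\ref{lemma:d:s:arr:improved} is essentially an accounting of the cost of maintaining the data structure of Problem~\ref{prob:lines}, where the bottleneck step of Lemma~\ref{lemma:d:s:arrangement} (scanning all $O(K^2)$ trapezoids to update their depth after each double-wedge insertion) is replaced by a dynamic partition-tree-based range counting scheme. So I would first set up the same skeleton as in Lemma~\ref{lemma:d:s:arrangement}: take a random sample $\Sample \subseteq \LSet$ of size $K = O(k\log n)$, compute its vertical decomposition, and for each of the $O(K^2)$ trapezoids compute its conflict list (of size $O(n/k)$ with high probability, by the $\eps$-net theorem with $\eps = 1/4k$) and the sampling data structure of Lemma~\ref{lemma:sample:d:s}. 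The initial cost here is $O(K^2 + Kn)$ for the conflict lists, plus $O(K^2 \cdot (n/k)\log(n/k)) = O(k n \log^3 n)$ for building all the Lemma~\ref{lemma:sample:d:s} structures — wait, more carefully: there are $O(K^2) = O(k^2\log^2 n)$ trapezoids, each with a conflict list of size $O(n/k)$, and building the structure of Lemma~\ref{lemma:sample:d:s} on a list of size $s$ takes $O(s\log s)$; so the total is $O(k^2\log^2 n \cdot (n/k)\log n) = O(n k \log^3 n)$, matching the first term in the claimed bound.

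**Handling insertions.** Next I would handle the $O(K)$ double-wedge insertions. Inserting the double wedge $\DW_i = \brc{\Line_i,\Line_i'}$ into the current vertical decomposition creates $O(K+i) = O(K)$ new trapezoids (zone complexity), and for each new trapezoid we rebuild the Lemma~\ref{lemma:sample:d:s} structure at cost $O((n/k)\log(n/k)) = O((n/k)\log n)$ per trapezoid, i.e.\ $O(K \cdot (n/k)\log n) = O(\log^2 n \cdot n) = O(n\log^2 n)$ per insertion, hence $O(K \cdot n\log^2 n) = O(nk\log^3 n)$ total — again absorbed into the first term. The new ingredient is the depth maintenance: instead of re-scanning all trapezoids, maintain a dynamic \Matousek partition tree (from \cite{m-ept-92}, with $O(\log^2 m)$ amortized update and $O(\sqrt m \log^{O(1)} m)$ simplex-query cost) on a set of $m = O(K^2) = O(k^2\log^2 n)$ representative points, one per current trapezoid. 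The key observation is that a double wedge is the union of $O(1)$ (actually two) half-planes' complements, so "increment the depth of every trapezoid contained in $\DW_i$" reduces to $O(1)$ simplex range-updates: each such query returns $O(\sqrt m\log^{O(1)} m) = O(k\log^{O(1)} n)$ canonical nodes, and we attach a lazy $+1$ counter to each. Over $O(K)$ insertions this costs $O(K\cdot k\log^{O(1)} n) = O(k^2\log^{O(1)} n)$, which is the second term in the bound. When a trapezoid is created or destroyed we also insert/delete its representative in the partition tree at amortized $O(\log^2 m) = O(\log^2 n)$ each, and there are $O(K^2)$ such events, contributing $O(k^2\log^{O(1)} n)$ as well.

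**The mass and the final sampling claim.** To recover the actual depth $\depthX{\cell}$ of a trapezoid (needed for $\massX{\cell} = \nVX{\cell}2^{\depthX{\cell}}$ in Definition~\ref{def:mass}), observe that the partition tree is a rooted hierarchy, so the true depth of a representative point is the sum of the lazy counters along its root-to-leaf path; when we want $\massX{\cell}$ for sampling we walk that path in $O(\log^2 m) = O(\log^2 n)$ time to extract the depth, then combine with $\nVX{\cell}$ from the Lemma~\ref{lemma:sample:d:s} structure. For the final sampling statement, plug into Lemma~\ref{lemma:pick:vertex}: we maintain a balanced BST over the (at most $m = O(K^2)$) trapezoids keyed by mass, so choosing a random trapezoid-by-mass costs $O(\log m) = O(\log n)$, and then Lemma~\ref{lemma:sample:d:s} picks a random vertex inside it in $O(\log(n/k)) = O(\log n)$ time, for $O(\log n)$ overall; one subtlety is that the BST-over-trapezoids-by-mass must be updated whenever a trapezoid is created/destroyed \emph{or} whenever a bulk depth update changes masses — but the depth updates touch only $O(1)$ canonical subtrees at the partition-tree level, not individual trapezoids, so we must be slightly careful about how masses are aggregated; this is the one place where the two tree structures (the partition tree carrying lazy depth counters, and the BST carrying masses for sampling) interact, and reconciling them is the main obstacle. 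One clean way around it: do \emph{not} maintain masses eagerly at all; instead, at sampling time, perform the whole traversal of Lemma~\ref{lemma:pick:vertex} but compute each needed mass on the fly by a single partition-tree path-walk, so the sampling cost becomes $O(\log^2 n)$ — still within the $O(\log n)$ bound up to the usual slack if one uses a faster partition-tree variant, or one simply notes that the stated $O(\log n)$ should read $O(\log^{O(1)} n)$, consistent with the $\log^{O(1)} n$ factors already pervading the lemma. The routine arithmetic — summing $O(nk\log^3 n) + O(k^2\log^{O(1)} n)$ and confirming all the absorbed terms are dominated — I would leave implicit, as it mirrors the end-of-proof bookkeeping in Lemma~\ref{lemma:d:s:arrangement}.
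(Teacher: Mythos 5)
Your proposal matches the paper's approach essentially verbatim: the paper's proof of this lemma simply points back to the preceding ``more efficient scheme'' paragraph, which is exactly your construction (one representative point per trapezoid, $m = O(K^2)$ of them, a dynamic \Matousek partition tree in which a double-wedge update touches $O(\sqrt{m}\log^{O(1)}m) = O(k\log^{O(1)}n)$ canonical sets, added on top of the $O(nk\log^3 n)$ cost of the basic scheme), and then invokes \lemref{pick:vertex} for the sampling claim. The reconciliation issue you flag between the lazily-maintained depth counters in the partition tree and the mass-keyed tree of \lemref{pick:vertex} is real and is glossed over in the paper; your on-the-fly resolution (at the cost of a possible extra logarithmic factor in the sampling time) is a reasonable fix.
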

\begin{proof}
    The data-structure is described above. As for the sampling, we use
    the data-structure described in \lemref{pick:vertex}.
\end{proof}


\subsubsection{Putting everything together}

\begin{remark}[More efficient point-location]
    \remlab{p:l}%
    Given a set of $m = O(k \log k)$ lines, and a set $\PSet$ of $n$
    points, we need to compute for each point of $\PSet$ the face that
    contains it. This is an offline point-location
    problem. Fortunately, this problem was solved by Agarwal \etal
    \cite{ams-cmfal-98}, where the overall time is
    \begin{math}
        O( (n + m +n^{2/3}m^{2/3}) \log n )%
        =%
        O\pth{ n\log n + k \log^2 n + n^{2/3} k^{2/3} \log^2 n }.
    \end{math}
\end{remark}
\begin{remark}
    \remlab{at:least:sqrt:n}%
    Observe, that a minimal set of lines separating a set $P$ of $n$
    points in the plane in general position, has cardinality
    $\opt = \Omega(\sqrt{n})$. Indeed, an arrangement of $\opt$ lines,
    has at most $\binom{\opt}{2}$ vertices, $\opt (\opt-1)$ edges, and
    $1 + \binom{\opt+1}{2}$ faces. Each of these features, can contain
    at most one points of $P$ in its relative interior, which implies
    that $n = O(\opt^2)$. Namely, $\opt = \Omega( \sqrt{n})$.
\end{remark}

\begin{theorem}
    \thmlab{faster}%
    Given a set $\PSet$ of $n$ points in the plane, one can compute a
    set of $O(\opt \log \opt)$ lines that separates all the points of
    $\PSet$, where $\opt$ is the minimal set of lines that separates
    $\PSet$. The overall running time of this algorithm is
    \begin{math}
        O\pth{ n^{2/3} \opt^{5/3} \log^{O(1)} n}.
    \end{math}
\end{theorem}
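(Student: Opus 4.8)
The plan is to assemble the pieces developed in \secref{approx:min:lines}, running the reweighting algorithm of \secref{basic} but with the weights and candidate set $\CLSet$ maintained implicitly in the dual via the data-structures of \lemref{d:s:arr:improved}, \lemref{pick:vertex}, and \lemref{sample:d:s}. Correctness of the output — a separating set of size $O(\opt \log \opt)$ — is inherited verbatim from the analysis in \secref{correctness-slow}, since the implicit representation reproduces exactly the same sequence of samples and weight-doublings as the naive algorithm; the only thing to argue afresh is the running time. As in the slow algorithm, we perform an exponential search on the guess $k$ for $\opt$, and for the correct guess (with $\opt \le k \le 2\opt$) the inner loop runs for $O(\opt \log n)$ iterations, each of which either doubles the weight of the lines of $\CLSet$ stabbed by a segment $\seg_i$ (i.e., inserts the dual double-wedge $\DW_i$ into the maintained decomposition) or is a failed iteration (negligibly rare by the $\eps$-net theorem).

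First I would account for the per-$k$ cost. Building the initial data-structure of \lemref{d:s:arr:improved} — the random sample $\Sample$ of $K = O(k\log n)$ lines, its vertical decomposition, the conflict lists of size $O(n/k)$, and the per-trapezoid sampling structures of \lemref{sample:d:s} — together with handling all $O(K)$ double-wedge insertions, costs $O(nk\log^3 n + k^2\log^{O(1)} n)$. In each of the $O(\opt\log n) = O(k\log n)$ iterations we must: (i) draw a sample $\Sample$ of $r = O(k\log k)$ lines from $\CLSet$ according to their current weights, which by \lemref{pick:vertex} takes $O(\log n)$ per vertex, hence $O(k\log k\log n)$ for the whole sample; (ii) build the arrangement $\ArrX{\Sample}$ in $O(r^2) = O(k^2\log^2 k)$ time; (iii) locate the $n$ points of $\PSet$ in $\ArrX{\Sample}$, which by \remref{p:l} (the offline point-location of Agarwal \etal) costs $O(n\log n + k\log^2 n + n^{2/3}k^{2/3}\log^2 n)$; and (iv) if an unseparated pair $\pA_i,\pB_i$ is found, check whether $\WX{\LSegY{\pA_i}{\pB_i}} \le \eps\WX{\CLSet}$ and, if so, insert the dual double-wedge $\DW_i$ — this weight query and update is already charged to the $O(nk\log^3 n + k^2\log^{O(1)} n)$ bound of \lemref{d:s:arr:improved}, since there are only $O(k\log n)$ such insertions in total.

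Summing over the $O(k\log n)$ iterations, the dominant per-iteration term is the point-location cost, giving $O\bigl(k\log n \cdot n^{2/3}k^{2/3}\log^2 n\bigr) = O\bigl(n^{2/3}k^{5/3}\log^{O(1)} n\bigr)$, which subsumes the $O(n\log^2 n \cdot k\log n)$ and $O(k^2\log^{O(1)} n)$ contributions as well as the initialization cost $O(nk\log^3 n + k^2\log^{O(1)}n)$ once we recall from \remref{at:least:sqrt:n} that $\opt = \Omega(\sqrt n)$, so $n^{2/3}k^{5/3} = \Omega(n^{2/3}\cdot n^{5/6}) = \Omega(n^{3/2})$ dominates $nk = O(n\opt)$ only when $\opt = \Omega(\sqrt n)$ — indeed $n^{2/3}\opt^{5/3} \ge n k$ iff $\opt^{2/3} \ge n^{1/3}$ iff $\opt \ge \sqrt n$, which holds. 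Finally, summing over the exponentially increasing guesses $k = 1, 2, 4, \ldots, O(\opt)$, the geometric series is dominated by its last term, yielding the claimed total running time $O\bigl(n^{2/3}\opt^{5/3}\log^{O(1)} n\bigr)$.

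The main obstacle, and the part that needs genuine care rather than bookkeeping, is verifying that the implicit maintenance of $\CLSet$ and its weights faithfully simulates the explicit algorithm: in particular, that "sample a line of $\CLSet$ by weight" is correctly realized as "sample a vertex of $\ArrX{\dualP}$ with weight $2^{\depthX{\cdot}}$" via \defref{mass}, that a line of $\CLSet$ lying on a double-wedge boundary (a degenerate incidence, since $\dualP$ is not in general position) is handled consistently, and that the reported unseparated pair yields a segment whose dual double-wedge is exactly the set whose weights must be doubled. Once this correspondence is nailed down, the $\eps$-net guarantee and the weight-growth argument of \secref{correctness-slow} apply unchanged, and the running time follows from the stated lemmas as above.
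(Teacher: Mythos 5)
Your proposal follows the paper's proof essentially verbatim: run the reweighting algorithm with the implicit dual representation of \lemref{d:s:arr:improved}, sample via \lemref{pick:vertex}, locate points via \remref{p:l}, observe that the offline point-location term $n^{2/3}k^{2/3}\log^{O(1)}n$ dominates each iteration, and invoke $\opt = \Omega(\sqrt{n})$ from \remref{at:least:sqrt:n} to absorb the $nk$ terms before summing the geometric series over guesses $k$. Your closing paragraph on verifying that the dual simulation is faithful is a reasonable (and fair) demand for rigor that the paper itself does not supply.

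There is, however, one accounting slip that would actually break the claimed bound as you have written it: your step (ii) explicitly constructs the arrangement $\ArrX{\Sample}$ at cost $O(r^2) = O(k^2\log^2 k)$ \emph{per iteration}, which over the $O(k\log n)$ iterations contributes $O(k^3\log^{O(1)}n)$ in total. This is \emph{not} subsumed by $O(n^{2/3}k^{5/3}\log^{O(1)}n)$: the comparison $k^3 \leq n^{2/3}k^{5/3}$ holds only for $k \leq \sqrt{n}$, whereas $k$ ranges up to $2\opt$, and $\opt$ can be as large as $\Theta(n)$ (points in convex position), in which case $k^3 = n^3$ swamps the claimed $n^{7/3}$. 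The point of invoking the offline many-faces algorithm of \remref{p:l} is precisely that it identifies, for each point of $\PSet$, the face of $\ArrX{\Sample}$ containing it \emph{without} ever building the full arrangement; the $O(r^2)$ construction step from the slow algorithm (where it was harmlessly dominated by the $O(n^2)$ weight-update scan) must be dropped, not merely dominated. With step (ii) deleted and the co-facial test performed directly on the face labels returned by \remref{p:l}, your tally matches the paper's and the theorem follows.
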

\begin{proof}
    We implement the algorithm of \lemref{slower:alg} using the
    data-structure of \lemref{d:s:arr:improved} to maintain the
    vertices of the dual arrangement, and use the point-location
    data-structure of \remref{p:l}. For a fixed value of $k$, the
    algorithm performs $O( k \log n)$ inner iterations, and the
    resulting running time is
    \begin{align*}
      O\pth{ \pth{n + k  + n^{2/3} k^{2/3}  } k
      \log^3 n +
      nk \log^3 n + k^2 \log^{O(1)} n}%
      =%
      O\pth{ n k \log^3 n +  n^{2/3} k^{5/3} \log^{O(1)} n
      }.%
    \end{align*}
    Summing this for exponentially growing values of $k$, ending at
    $O(\opt)$, the overall running time is
    \begin{math}
        O\pth{ n \opt \log^3 n + n^{2/3} \opt^{5/3} \log^{O(1)} n}.
    \end{math}
    Observe, however, that by \remref{at:least:sqrt:n},
    $\opt = \Omega(\sqrt{n})$, which implies that the second term is
    bigger than the first term, implying the result.
\end{proof}

\begin{remark}
    To appreciate \thmref{faster}, consider the grid-like case where
    $\opt = O(\sqrt{n})$. The running time then becomes
    $O(n^{3/2} \log^{O(1)} n)$, which is well below quadratic
    time. The worst case for this algorithm is when $\opt = \Omega(n)$
    (for example, if the input points are in convex position), where
    the running time becomes $O(n^{7/3} \log^{O(1)} n)$.
\end{remark}


\paragraph*{Acknowledgments.}

The author thanks Danny Halperin for asking the question that lead to
the results in \secref{sep:l:random:points}.


\BibTexMode{%
 \providecommand{\CNFX}[1]{ {\em{\textrm{(#1)}}}}
  \providecommand{\CNFCCCG}{\CNFX{CCCG}}

}

\BibLatexMode{\printbibliography}

\appendix
\section{Chernoff's inequality} %

We state some convenient forms of Chernoff's inequality. They can be
found in any standard text on the topic. See for example here:
\url{http://sarielhp.org/p/notes/16/chernoff/chernoff.pdf}.  Let
$X_1, \ldots, X_n$ be $n$ independent random variables where
\begin{align*}
  \Prob{ \bigl. X_i = 1 } =p_i, \quad\text{ and }\quad \Prob{
  \bigl. X_i = 0 }  = 1-p_i.
\end{align*}
And let $X = \sum_{i=1}^{b} X_i$. $\mu = \Ex{\bigl. X} = \sum_i p_i$.

\begin{theorem}
    \thmlab{Chernoff:simplified}%
    For any $\delta > 0$, we have
    $\displaystyle \Prob{ \Bigl. X > (1+\delta)\mu } < \pth[]{
       \frac{e^\delta}{(1+\delta)^{1+\delta}}}^\mu$.

    Or in a more simplified form, we have:
    \begin{align}
      &\delta \leq 2e -1%
      &&\Prob{ \Bigl. X > (1+\delta)\mu } < \exp \pth{-\mu
         \delta^2 / 4},
         \eqlab{small}\\
      & \delta > 2e -1%
      && \Prob{\Bigl. X > (1+\delta)\mu } <
         2^{-\mu (1+\delta)},
         \eqlab{large}
      \\
      \text{and} \qquad\qquad %
      &\delta \geq e^2%
      &&\Prob{\Bigl. X > (1+\delta)\mu } < \exp \pth{ \Bigl. -
         \frac{\mu \delta \ln \delta}{2}}.%
         \eqlab{v:large}
    \end{align}
\end{theorem}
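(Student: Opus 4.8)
The plan is to establish the first (master) inequality by the standard exponential-moment (Bernstein) trick, and then obtain the three simplified forms as elementary consequences by estimating the function $h(\delta) = (1+\delta)\ln(1+\delta) - \delta$ from below, since the master bound can be rewritten as $\exp\pth{-\mu h(\delta)}$.

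First I would fix a real $t > 0$ and apply Markov's inequality to the nonnegative variable $e^{tX}$: the event $\brc{X > (1+\delta)\mu}$ coincides with $\brc{e^{tX} > e^{t(1+\delta)\mu}}$, so $\Prob{X > (1+\delta)\mu} \le e^{-t(1+\delta)\mu}\Ex{e^{tX}}$. Since the $X_i$ are independent, $\Ex{e^{tX}} = \prod_{i=1}^{n} \Ex{e^{tX_i}}$, and for each factor $\Ex{e^{tX_i}} = 1 + p_i(e^t - 1) \le \exp\pth{p_i(e^t - 1)}$, using $1 + z \le e^z$. Multiplying and recalling $\mu = \sum_i p_i$ gives $\Ex{e^{tX}} \le \exp\pth{\mu(e^t - 1)}$, hence $\Prob{X > (1+\delta)\mu} \le \exp\pth{\mu\pth{e^t - 1 - t(1+\delta)}}$. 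The exponent, as a function of $t$, is strictly convex and minimized at $t = \ln(1+\delta)$ (which is positive because $\delta > 0$); substituting this value yields $e^t - 1 - t(1+\delta) = \delta - (1+\delta)\ln(1+\delta)$, and therefore $\Prob{X > (1+\delta)\mu} < \pth{e^\delta / (1+\delta)^{1+\delta}}^\mu$, which is the first inequality.

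It then remains to deduce the three simplified forms from the master bound $\exp\pth{-\mu h(\delta)}$. For \Eqref{v:large}, the range $\delta \ge e^2$, I would use $(1+\delta)\ln(1+\delta) \ge \delta \ln \delta$, so $h(\delta) \ge \delta(\ln \delta - 1)$; since $\ln \delta \ge 2$ here, $\ln\delta - 1 \ge \tfrac12\ln\delta$, giving $h(\delta) \ge \tfrac12 \delta \ln \delta$ and hence the stated bound. For \Eqref{large}, the range $\delta > 2e - 1$, I would note $1 + \delta > 2e$, so $(1+\delta)^{1+\delta} > (2e)^{1+\delta} = 2^{1+\delta}e^{1+\delta}$, whence $e^\delta/(1+\delta)^{1+\delta} < e^{\delta}/(2^{1+\delta}e^{1+\delta}) = 1/(e\,2^{1+\delta}) < 2^{-(1+\delta)}$; raising to the power $\mu$ gives the claim. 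For \Eqref{small}, the range $0 < \delta \le 2e-1$, I would compare $h$ directly with $\delta^2/4$: both $h$ and $\delta \mapsto \delta^2/4$ vanish together with their first derivatives at $\delta = 0$, and a single-variable analysis of $h(\delta) - \delta^2/4$ — Taylor expansion for small $\delta$, and tracking the sign of the derivative $\ln(1+\delta) - \delta/2$ for larger $\delta$ — establishes the required inequality $h(\delta) \ge \delta^2/4$, so $\exp\pth{-\mu h(\delta)} \le \exp\pth{-\mu\delta^2/4}$.

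The master inequality is the textbook computation and presents no obstacle, and the derivations of \Eqref{large} and \Eqref{v:large} are immediate algebra on it. The only step needing genuine (if still elementary) care is the quadratic estimate behind \Eqref{small}: one must check that the crude lower bound $h(\delta) \ge \delta^2/4$ survives over the whole stated interval rather than merely for small $\delta$. I expect this single-variable inequality to be the main thing to nail down; everything else is bookkeeping, and since this appendix only collects the standard forms used in the body, no new ideas are involved.
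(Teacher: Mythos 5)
Your master bound (exponential moment, Markov, optimize at $t=\ln(1+\delta)$) and your derivations of \Eqref{large} and \Eqref{v:large} are correct; the paper gives no proof of this theorem at all (it defers to standard references), and yours is the textbook route. The genuine gap is precisely at the step you flagged as ``the main thing to nail down'': the inequality $h(\delta) = (1+\delta)\ln(1+\delta)-\delta \geq \delta^2/4$ does \emph{not} hold on all of $(0,2e-1]$. It holds only up to $\delta \approx 4.11$, whereas $2e-1\approx 4.44$; at the right endpoint one has the closed forms $h(2e-1) = 1+2e\ln 2 \approx 4.768$ versus $(2e-1)^2/4 = e^2-e+\tfrac14 \approx 4.921$. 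So the single-variable analysis you defer to cannot close over the whole stated interval, and no other choice of $t$ helps, since the optimized master bound is the best exponential-moment bound available. Worse, because the master bound is tight up to polynomial factors, the inequality \Eqref{small} as literally stated is in fact \emph{false} near $\delta = 2e-1$ for large $\mu$: taking $n$ large and $p_i=\mu/n$ with $\mu=100$ and $\delta=4.4$, the threshold is $540$ and $\Prob{X=541}$ alone is about $e^{-476}$, which already exceeds the claimed bound $\exp\pth{-\mu\delta^2/4}=e^{-484}$.

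The repair is standard: either restrict \Eqref{small} to $\delta\leq 1$ (where even $\exp(-\mu\delta^2/3)$ holds), or to $\delta\leq 4$, or keep all $\delta>0$ and replace $\delta^2/4$ by $\delta^2/(2+\delta)$, which does follow from the master bound via the elementary estimate $h(\delta)\geq 3\delta^2/(6+2\delta)\geq \delta^2/(2+\delta)$. None of the paper's applications are affected: \Eqref{small} is invoked only with $\delta=1/2$, where $h(1/2)\approx 0.108 > 1/16$, and the remaining uses go through \Eqref{large} and \Eqref{v:large}, which your argument handles correctly.
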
%

\begin{theorem}
    \thmlab{Chernoff:2}%
    Under the same assumptions as \thmref{Chernoff:simplified}, we
    have:
    \begin{math}
        \ds %
        \Prob{ X < (1-\delta)\mu } < \exp\pth{-\mu\delta^2/2}.
    \end{math}
\end{theorem}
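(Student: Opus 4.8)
The plan is to use the standard exponential-moment (Chernoff) method, applied to $-X$ so that the lower tail of $X$ becomes an upper tail. First I would fix a parameter $t>0$ and apply Markov's inequality to the nonnegative random variable $e^{-tX}$:
\[
\Prob{X < (1-\delta)\mu}
=
\Prob{e^{-tX} > e^{-t(1-\delta)\mu}}
\leq
e^{t(1-\delta)\mu}\,\Ex{e^{-tX}}.
\]
By independence of the $X_i$, we have $\Ex{e^{-tX}} = \prod_{i=1}^{n} \Ex{e^{-tX_i}} = \prod_{i=1}^{n}\bigl(1 + p_i(e^{-t}-1)\bigr)$, and using the inequality $1+x \leq e^x$ together with $\sum_i p_i = \mu$, this is at most $\exp\!\bigl(\mu(e^{-t}-1)\bigr)$. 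Hence
\[
\Prob{X < (1-\delta)\mu}
\leq
\exp\!\Bigl(\mu\bigl(e^{-t}-1+t(1-\delta)\bigr)\Bigr).
\]

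Next I would optimize the exponent over $t$. The function $t \mapsto e^{-t}-1+t(1-\delta)$ is minimized at $e^{-t}=1-\delta$, i.e. $t = \ln\frac{1}{1-\delta}$, which is positive precisely when $0<\delta<1$ (the case $\delta\geq 1$ is vacuous, since $X\geq 0$ forces $\Prob{X<(1-\delta)\mu}=0$ there). Substituting this value of $t$ gives
\[
\Prob{X < (1-\delta)\mu}
\leq
\left(\frac{e^{-\delta}}{(1-\delta)^{1-\delta}}\right)^{\!\mu}
=
\exp\!\Bigl(\mu\bigl(-\delta - (1-\delta)\ln(1-\delta)\bigr)\Bigr),
\]
which is the sharp form of the bound before simplification.

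The only real work is the elementary inequality $-\delta - (1-\delta)\ln(1-\delta) \leq -\delta^2/2$ for all $\delta \in [0,1)$, which then yields $\Prob{X<(1-\delta)\mu} < \exp(-\mu\delta^2/2)$ as claimed. I would prove it by setting $g(\delta) = -\delta - (1-\delta)\ln(1-\delta) + \delta^2/2$, observing that $g(0)=0$, and computing $g'(\delta) = \ln(1-\delta) + \delta$, which is $\leq 0$ on $[0,1)$ because $\ln(1+x)\leq x$; thus $g$ is nonincreasing on $[0,1)$ and therefore $g\leq 0$ there. The main (and fairly mild) obstacle is exactly this last monotonicity check — everything preceding it is the routine Chernoff computation, and for $\delta\geq 1$ there is nothing to prove.
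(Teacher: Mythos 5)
Your proof is correct and is exactly the standard exponential-moment argument: the paper does not prove this theorem itself (it is stated in the appendix with a pointer to standard texts), and your derivation --- Markov applied to $e^{-tX}$, the bound $\Ex{e^{-tX}}\leq \exp\pth{\mu(e^{-t}-1)}$, optimization at $t=\ln\frac{1}{1-\delta}$, and the calculus check that $-\delta-(1-\delta)\ln(1-\delta)\leq -\delta^2/2$ --- is precisely the proof those references give. The only cosmetic discrepancy is that your chain of inequalities naturally produces $\leq$ rather than the strict $<$ in the statement, which is immaterial here.
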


\end{document}